\documentclass{article}

\usepackage{arxiv}

\usepackage{amsmath, amsthm, amssymb, epsfig, sectsty, graphicx, float, url}
\usepackage{bm}
\usepackage{algorithm, algorithmicx}
\usepackage[noend]{algpseudocode}

\RequirePackage{bm}
\RequirePackage{endnotes}

\usepackage{tikz}

\def\<#1>{[ #1]_\alpha}
\def\p<#1>{\langle #1 \rangle_\alpha}
\newcommand{\N}{\mathbb{N}}
\newcommand{\cM}{\mathcal{M}}

\newcommand{\Z}{\mathbb{Z}}
\newcommand{\E}{\mathbb{E}}
\newcommand{\C}{\mathbb{C}}
\newcommand{\R}{\mathbb{R}}
\newcommand{\sfM}{{\sf M}}
\newcommand{\cof}{\operatorname{cof}}
\newcommand{\adj}{\operatorname{adj}}

\newcommand{\G}{\mathcal{G}}

\newcommand{\Max}{\operatorname{Max}}
\newcommand{\Min}{\operatorname{Min}}
\newcommand{\opt}{^{\ast}}

\newcommand{\cP}{\mathcal{P}}
\newcommand{\sens}{d}

\newcommand{\alphabw}{\alpha_{\sf Bw}}
\newcommand{\rhobw}{\rho_{\sf Bw}}
\newcommand{\alphad}{\alpha_{\sf d}}
\newcommand{\rhod}{\rho_{\sf d}}

\theoremstyle{plain}
\newtheorem{theorem}{Theorem}[section]
\newtheorem{lemma}[theorem]{Lemma}

\newtheorem{corollary}[theorem]{Corollary}
\newtheorem{proposition}[theorem]{Proposition}

\newtheorem{assumption}[theorem]{Assumption}
\theoremstyle{definition}
\newtheorem{definition}[theorem]{Definition}
\newtheorem{remark}[theorem]{Remark}

\usepackage{cleveref}
\usepackage[colorinlistoftodos,bordercolor=orange,backgroundcolor=orange!20,linecolor=orange,textsize=scriptsize]{todonotes}

\usepackage{booktabs}       
\usepackage{tabularx}
\usepackage{makecell}
\usepackage{array}

\usepackage{mathtools}

\begin{document}

\title{Sharper bounds on the thresholds for sensitive and Blackwell optimality}

\author{
Xavier Allamigeon \\
INRIA and CMAP \\
 Ecole Polytechnique \\    \texttt{xavier.allamigeon@inria.fr}
\And
St{\'e}phane Gaubert \\
INRIA and CMAP\\
 Ecole Polytechnique \\    \texttt{stephane.gaubert@inria.fr}
\And
Julien Grand-Cl{\'e}ment \\
ISOM Department\\
HEC Paris \\
 \texttt{grand-clement@hec.fr} \\
   \And
  Ricardo Katz\\
CIFASIS-CONICET\\
  \texttt{katz@cifasis-conicet.gov.ar} \\
}

\maketitle

\begin{abstract}
In perfect-information two-player stochastic games, the notions of Blackwell and sensitive optimality provide generalizations of the classical mean-payoff optimality and discount optimality criteria to account for more farsighted preferences. We provide bounds on the Blackwell threshold $\alphabw$ and the $d$-sensitive thresholds $\alphad$, defined as the smallest discount factors above which discount optimal policies coincide with Blackwell optimal policies and $d$-sensitive optimal policies respectively. Our refined bounds improve upon prior work by focusing on ``reduced'' families of polynomials and, crucially, our bounds are tight in terms of controlling the degrees and heights of the minimal polynomials of the Blackwell thresholds. We apply classical root separation based on Mahler's and Cauchy's bounds to our reduced families to derive the strongest upper and lower bounds on $\alphabw$ in the literature, and we are the first to obtain bounds on $\alphad$ in the multichain stochastic setting.
\end{abstract}
\section{Introduction.}
In this paper we focus on two-player zero-sum stochastic games with perfect information, a particular case of Shapley's stochastic games~\cite{shapley1953stochastic} where each state is controlled by exactly one of the two players. 
The most widely studied optimality criterion is discount optimality, where a pair of policies is discount optimal if the players optimize the return obtained over an infinite horizon and future rewards are discounted with a discount factor $\alpha \in (0,1)$. 
Many refinements of discount optimality exist in order to compute policies that can be more farsighted, including {\em $d$-sensitive optimality} (introduced for MDPs with $d \in \{-1,0,...\}$ by~\cite{Veinott1969}), which generalizes the classical mean-payoff optimality (for $d=-1$) and bias optimality (for $d=0$), as well as {\em Blackwell optimality}, i.e., policies that remain discount optimal for all values of the discount factor sufficiently close to $1$~\cite{blackwell1962discrete}. It remains an open question to design polynomial-time algorithms for computing Blackwell optimal policies and $d$-sensitive optimal policies. This difficulty arises because these optimality criteria model preferences where future rewards count as much as current rewards, so that the discount factor $\alpha$ is equal to $1$  and the Shapley operators may not be contractions. In contrast, many efficient algorithms exist to compute discount optimal policies. Strategy iteration computes discount optimal strategies in a number of iterations in $\tilde{O}((1-\alpha)^{-1})$~\cite{hansen2013strategy}. When the game parameters are known, and if the game parameters are unknown some model-free methods scale as $\tilde{O}((1-\alpha)^{-3})$, e.g., ~\cite{sidford2020solving}. In summary, we note that the complexity of computing discount optimal policies is controlled by the term $(1-\alpha)^{-1}$.

The relation between these three optimality criteria has been well-studied in the literature.~\cite{Veinott1969} shows (for MDPs) that Blackwell optimal policies correspond to policies that are $d$-sensitive optimal for all $d$.
Additionally, the analysis from~\cite{liggettlippman} shows that for a certain {\em Blackwell threshold} $\alphabw \in [0,1)$, any discount optimal pair of policies for $\alpha>\alphabw$ is also Blackwell optimal. The authors in~\cite{gaubert2026thresholds} define analogously the {\em $d$-sensitive threshold} $\alphad \in [0,1)$ as the smallest threshold above which any discount optimal policy is also $d$-sensitive optimal. The computational advantage of knowing $\alphabw$ and $\alphad$ (or any upper bound on these quantities) is straightforward: to compute Blackwell or $d$-sensitive optimal policies, we then just need to compute discount optimal policies for $\alpha>\alphabw$ or $\alpha > \alphad$, and this can be done in a complexity controlled by $(1-\alpha)^{-1}$ as explained in the previous paragraph. This is the starting point of this paper, where our goal is to give upper bounds on $\alphabw$ and $\alphad$, based on the game parameters: the number of states $n \in \N$, the maximum absolute value of the integer rewards $W \in \N$, and the common denominator of the transition probabilities $M \in \N$ (see Assumption~\ref{assumption} and discussion thereafter). 

There have been several attempts to bound $\alphad$ and $\alphabw$ in the literature.  The existence of $\alphad$ is shown in the recent work~\cite{gaubert2026thresholds} for games and in~\cite{Veinott1969} for MDPs. 
The existence of $\alphabw$ is from~\cite{liggettlippman} (for games) and ~\cite{blackwell1962discrete} (for MDPs). 
In terms of upper bounds, the works~\cite{andersson2009complexity,grand2026beyond,asadi2024deterministic,gaubert2026thresholds,mukherjee2025howard} provide upper bounds on $\alphabw$ by constructing polynomials that cancel at $\alphabw$ and at $1$, and use results from number theory to separate two roots of the same polynomial. 
This results in upper bounds on $(1-\alphabw)^{-1}$ that are at worst in the order of $\exp(c \sqrt{n} \log(n))$ for some $c>0$ (possibly dependent on $M$ and $W$). In terms of lower bounds, \cite{boros2013discounted,mukherjee2025efficient} show that $(1-\alphabw)^{-1}$ may be in the order of $2^{n/3}$ with $n \in \N$ the number of states but these results require non-deterministic instances, while \cite{asadi2024deterministic} shows that even for deterministic instances $(1-\alphabw)^{-1}$ may be in the order of $\exp(c \; n^{1/5}\sqrt{\log(n)})$ for some constant $c>0$.
Regarding $d$-sensitive thresholds, \cite{zwick} shows that $(1-\alpha_{-1})^{-1}$ is bounded by $8Wn^3$ for deterministic games. 
\cite{akian2013policy} and \cite{boone2023discounted} show bounds on $(1-\alpha_{-1})^{-1}$ for SGs and MDPs but these bounds are not expressed in terms of the game parameters (e.g. evaluating the bound in~\cite{boone2023discounted} requires computing mean-payoff optimal policies). Finally, \cite{gaubert2026thresholds} shows bounds on $\alphad$ but they require a unichain assumption for stochastic instances. We provide a more detailed literature review in Section~\ref{sec:discussion}.

Overall, many upper and lower bounds have been obtained on $\alphabw$ and $\alphad$ in the literature, without a clear understanding of their tightness, both in terms of the size (degree and height) of the polynomials used in root separations and in terms of the resulting bounds on the thresholds. We present our bounds for the corresponding {\em interest rates} $\rhobw$ and $\rhod$, defined such that $\alphabw = (1+\rhobw)^{-1}$ and $\alphad = (1+\rhod)^{-1}$; so upper bounding $\alphabw$ and $\alphad$ is equivalent to lower bounding $\rhobw$ and $\rhod$ (and vice versa for lower bounds on discount factors).
Our {\bf main contributions} are as follows.
\begin{enumerate}
    \item {\bf Tight control of reduced families.} We formalize the notion of {\em comparison family}, which is a family of polynomials that control the signs of the difference of two value functions. While all prior works use the comparison family directly induced by the difference of value functions across all pairs of policies, we introduce a ``reduced" family of polynomials to bound the Blackwell and $d$-sensitive thresholds, based on Veinott's comparison vectors for two adjacent policies that differ at only one state. A careful application of the Matrix Forest Theorem enables us to provide bounds on the degree and height of our polynomials that significantly improve upon prior approaches, and we show that our bounds on the degree and height of the minimal polynomial of $\rhobw$ are tight. 
    \item {\bf New upper and lower bounds on the Blackwell threshold.} We then apply root separation results to our reduced family to bound $\rhobw$. Because our comparison family is much better behaved than the ones considered in the past literature, we obtain lower bounds on $\rhobw$ (yielding upper bounds on $\alphabw$) that compare favorably with the ones obtained in prior work. We complement these results by obtaining new, improved upper bounds for $\rhobw$ (yielding lower bounds on $\alphabw$): our hard-instance bound for stochastic instances is tight in terms of order of $M$, while our hard-instance bound for deterministic instances improves upon prior work but leaves a gap open.
    \item {\bf Bounds for $d$-sensitive thresholds.} We provide the first bound for the $d$-sensitive thresholds in the multichain stochastic setting (\Cref{Th:d-sensitiveBound}), based on a careful analysis of the recurrent classes of adjacent policies. We also show $\bar{d}$-sensitive optimality suffices for Blackwell optimality with $\bar{d}=O\left(\sqrt{n(1+\log(W))}\right)$, a significant improvement over the previously known bounds. Our bounds on $\rhod$ for deterministic instances also greatly improve upon prior work. 
\end{enumerate}
In summary, our work clarifies the existing proof techniques for bounding the Blackwell and $d$-sensitive thresholds, shows that we can construct a comparison family with substantially smaller degree and height than previously built, and provides tight bounds in terms of degree and height of the minimal polynomial of $\rhobw$. Our bounds on the Blackwell and $d$-sensitive thresholds obtained from our reduced family compare favorably with all prior work.

{\bf Outline.} We briefly review perfect-information stochastic games in \Cref{sec:preliminaries}. In \Cref{sec:comparison poly}, we introduce the concept of a reduced family of comparison polynomials, show how these families can be used to bound the Blackwell threshold $\alphabw$, and derive bounds on their coefficients and degree. Our bounds on the Blackwell thresholds are presented in \Cref{sec:separation blackwell}, and  in \Cref{sec:bound d sensitive} we focus on the $d$-sensitive thresholds $\alphad$. The implications of our results are discussed in \Cref{sec:discussion}. All proofs omitted from the main body are provided in the appendix.

\section{Preliminaries.}\label{sec:preliminaries}
\subsection{Perfect-information stochastic games.}
We focus on two-player zero-sum stochastic games. We assume that the set of states and the set of actions of each player are finite. We write $[n]=\{1,\dots,n\}$ for the set of states,  where $n \in \N$. At every period $k \in \N$, the game is in a state $i_k$ observable by both players, called $\Min$ and $\Max$. If player $\Min$ chooses action $a_k$ and player $\Max$ chooses action $b_k$, an instantaneous reward $r_{i_k}^{a_k b_k} \in \R$ is incurred by player $\Min$ and obtained by player $\Max$, and then the game transitions to the next state $i_{k+1}$ with probability $P^{a_kb_k}_{i_ki_{k+1}´} \in [0,1]$.

In {\em perfect-information} stochastic games, the state space $[n]$ can be divided into a set of states controlled by player $\Min$ and a set of states controlled by player $\Max$. 
A {\em policy} of a player is a map from the set of all finite-horizon histories to a (possibly randomized) choice of an action. Let $(\sigma,\tau)$ be a pair of policies of the players and $i \in [n]$ be the initial state, i.e., the state where the game starts at period $k=1$. Recall that $(\sigma,\tau)$ induces a probability measure on the set of infinite-horizon histories. The {\em discounted value function} $\alpha \mapsto v^{\sigma,\tau}_{i}(\alpha)$ is the function from $(0,1)$ to $\mathbb{R}$ representing the payoff paid by player $\Min$ to player $\Max$:
\begin{equation}\label{eq:discounted value functions}
v^{\sigma,\tau}_{i}(\alpha) := \E^{\sigma,\tau} \left[ \sum_{k=1}^{+\infty} \alpha^k r_{i_k}^{a_k b_k} \; | \; i_1 = i \right]
\end{equation}
where $\left(r_{i_k}^{a_k b_k}\right)_{k \in \N}$ is the random sequence of instantaneous rewards induced by $(\sigma,\tau)$. 
\begin{definition}[Discount and Blackwell optimality]\label{def:discount optimality}
A pair of policies $(\sigma^*,\tau^*)$ of players $\Min$ and $\Max$ is {\em discount optimal} for the discount factor $\alpha$ if
\begin{equation}\label{eq:DefDiscOpt}
v^{\sigma,\tau\opt}_{i}(\alpha) \geq v^{\sigma\opt,\tau\opt}_{i}(\alpha) \geq v^{\sigma\opt,\tau}_{i}(\alpha) 
\end{equation}
for any state $i$ and any policies $\sigma$ and $\tau$ of players $\Min$ and $\Max$ respectively.

A pair of policies $(\sigma^*,\tau^*)$ is {\em Blackwell optimal} if there exists $\bar{\alpha} <1$ such that $(\sigma^*,\tau^*)$ is discount optimal for all discount factors larger than $\bar{\alpha}$.
\end{definition}
The existence of stationary discount optimal policies was established in the seminal work of Shapley~\cite{shapley1953stochastic}, where it is also shown that these policies can be taken to be {\em deterministic} under the additional assumption that the game is perfect-information.
The existence of stationary deterministic Blackwell optimal policies for perfect-information SGs is a consequence of the analysis in the seminal paper~\cite{liggettlippman}. Thus, in what follows, unless stated otherwise and without loss of generality, the policies we consider are stationary deterministic. We now  define the Blackwell threshold. 
\begin{definition}[Blackwell threshold]\label{def:blackwell threshold}
The Blackwell threshold  $\alphabw$ is the smallest scalar $\alpha' \in [0,1)$ such that for any discount factor $\alpha \in (\alpha',1)$, any pair of discount optimal policies for the discount factor $\alpha$ is also Blackwell optimal.
\end{definition}
The existence of the Blackwell threshold is shown in the proof of Theorem 1 of~\cite{liggettlippman}, by showing that discounted value functions are rational functions in the discount factor, so discounted value functions associated with different pairs of stationary policies can only intersect finitely many times in $[0,1)$, and hence the difference of discounted value functions cannot change sign in a neighborhood of $1$. The same idea is present in~\cite{blackwell1962discrete} for MDPs. This highlights a crucial connection between the Blackwell threshold and roots of polynomials, which we will leverage in the next section.
We are also interested in the notion of {\em d-sensitive} optimality~\cite{gaubert2026thresholds}, which provides an intermediary notion of ``farsightedness" between discount optimality and Blackwell optimality.
\begin{definition}[$\sens$-sensitive Optimality]\label{def:d sensitive optimality}
Given $d \in \{-1,0,1,\ldots\}$, a pair of policies $(\sigma\opt,\tau\opt)$ of players Min and Max is {\em $\sens$-sensitive optimal} if
\[
\liminf_{\alpha\to 1^-} \; (1-\alpha)^{-\sens}\left(v^{\sigma,\tau\opt}_{i}(\alpha)-v^{\sigma\opt,\tau\opt}_{i}(\alpha) \right) \geq 0 ,\;
 \liminf_{\alpha\to 1^-} \; (1-\alpha)^{-\sens}\left(v^{\sigma\opt,\tau\opt}_{i}(\alpha)-v^{\sigma\opt,\tau}_{i}(\alpha) \right) \geq 0
\]
for any state $i$ and any policies $\sigma$ and $\tau$ of players Min and Max respectively.
\end{definition}
The previous definition recovers $\sens$-sensitive optimality for the one-player case (i.e., for MDPs, see~\cite{Veinott1969} and Chapter 10 in~\cite{puterman2014markov}). It is worth emphasizing that $d$-sensitive optimal policies are also $d'$-sensitive optimal for $d' \leq d$, that Blackwell optimal policies coincide with policies that are $\sens$-sensitive optimal for all $\sens \geq -1$. Mean-payoff optimal policies coincide with $d$-sensitive optimal policies for $\sens=-1$. We define the {\em $\sens$-sensitive threshold} as follows.
\begin{definition}[$\sens$-sensitive threshold]
Given $d \in \{-1,0,1, \ldots \}$, the $\sens$-sensitive threshold  $\alphad$ is the smallest scalar $\alpha' \in [0,1)$ such that for any discount factor $\alpha \in (\alpha',1)$, any pair of discount optimal policies for the discount factor $\alpha$ is also $\sens$-sensitive optimal.
\end{definition}
The existence of the $\sens$-sensitive thresholds follows from the existence of the Blackwell threshold. In particular, by definition we have $\alphad \leq \alphabw$ for any $\sens$. As explained in the introduction, if we know the value of $\alphabw$ and $\alphad$, we can compute discount optimal policies for a discount factor $\alpha$ close to 1 (larger than $\alphabw$ or $\alphad$) and recover Blackwell optimal or $d$-sensitive optimal policies. Since the complexity of the algorithms for computing discount optimal policies grows with $(1-\alpha)^{-1}$, we want to find upper bounds on $\alphabw$ and $\alphad$ that are as small as possible, i.e., as far away from $1$ as possible. 
The main goal of this paper is to obtain upper bounds on $\alphabw$ and $\alphad$, based on the following assumption.
\begin{assumption}\label{assumption}
$\Gamma$ is a perfect-information stochastic game with finitely many states and actions, integer rewards, and rational transition probabilities. The parameters of the game $\Gamma$ are $n$, $W$, and $M$, and are defined as follows:
\begin{itemize}
    \item $n \in \N$ is the number of states,
    \item $W \in \N$ is an upper bound on the absolute values of the integer rewards,
    \item $M \in \N$ is the common denominator of the transition probabilities.
\end{itemize}
\end{assumption}
Note that Assumption~\ref{assumption} is necessary to obtain bounds on the Blackwell threshold, as shown in~\cite[Proposition~4.3]{grand2026beyond}. In the case where the game can be modeled directly with rational rewards, we just need to multiply all rewards by a constant (which does not change the sets of optimal policies) to recover integral rewards. Additionally, the denominator $M$ in the transition probabilities appears naturally if one estimates the transitions as empirical frequencies from some historical dataset. Finally, we note that assuming the instance can be represented with a finite number of bits (as is classical to study polynomial-time algorithms) means that the data is rational, as in Assumption~\ref{assumption}.
\paragraph{Interest rate.}
In some cases, it is more convenient to work with an {\em interest rate} $\rho$ instead of the discount factor $\alpha$. In that case, the discount factor is given by  $\alpha \coloneqq 1 /(1+\rho)$ for $\rho>0$, so $\rho=(1-\alpha)/\alpha$, see~\cite{veinott}, with the convention that $\rho=+\infty$ if $\alpha=0$. Thus, we can define the {\em Blackwell interest rate threshold} $\rhobw$ similarly to $\alphabw$, which satisfies $\rhobw=(1-\alphabw)/\alphabw$. Note that the interest rate $\rho$ approaches $0$ when the discount factor $\alpha$ approaches $1$, and vice versa. Then, upper bounding $\alphabw$ can be done by lower bounding $\rhobw$ (since $\rho \mapsto 1/(1+\rho)$ is decreasing).
For this reason, we give our main results in terms of lower bounds on $\rhobw$, and the term {\em ``Blackwell threshold"} is used for both $\rhobw$ and $\alphabw$ (the precise meaning will be clear from context). A similar approach will be taken for the $d$-sensitive threshold $\alphad$, considering the {\em $\sens$-sensitive interest rate threshold} $\rhod$, which satisfies $\rhod = (1-\alphad)/\alphad$. 
\subsection{From stochastic games to Markov decision processes.}\label{sec:games to mdps}
Before presenting our main results, we present a reduction from SGs to MDPs, which will greatly simplify our proofs.
It is well-known that a two-player zero-sum stochastic game $\Gamma$ induces a collection of MDPs, obtained by fixing the policy of one of the players. For instance, if we fix a stationary deterministic policy $\tau$ of player $\Max$, finding an optimal best-response of player $\Min$ can be reduced to a MDP $\Gamma_{\tau}^{\sf Min}$ played by player $\Min$. The set of states of $\Gamma_\tau^{\sf Min}$ is the set of states of $\Gamma$, the set of actions in $\Gamma_\tau^{\sf Min}$ is the set of actions of player $\Min$ in $\Gamma$ in the states controlled by this player and $\{\tau(i)\}$ in each state $i$ controlled by player $\Max$, and the rewards and transitions in $\Gamma_\tau^{\sf Min}$ are those of $\Gamma$ for the actions available in $\Gamma_\tau^{\sf Min}$. If we fix a stationary deterministic policy $\sigma$ of player $\Min$, we obtain an MDP $\Gamma_{\sigma}^{\sf Max}$ played by player $\Max$ by the same construction above, except that one needs to reverse the sign of the rewards (in this paper we use the convention that MDPs are minimizing their objective function). 

We denote by $\cM(\Gamma)$ the set of MDPs induced by the game $\Gamma$, i.e., the set of all $\Gamma_{\tau}^{\sf Min}$ and all $\Gamma_{\sigma}^{\sf Max}$ for any stationary deterministic policies $\sigma$ and $\tau$ of players $\Min$ and $\Max$ respectively. With a slight abuse of notation, we will say that an MDP instance satisfies Assumption~\ref{assumption} if the ``trivial" game induced by the MDP by adding a dummy player satisfies Assumption~\ref{assumption}. We can now state our reduction from SGs to MDPs.
\begin{proposition}\label{prop:reduction games to mdps} Let $\Gamma$ be a game. The following statements hold.
\begin{enumerate}
\item[(a)] If $\Gamma$ satisfies Assumption~\ref{assumption}, then all MDPs induced by $\Gamma$ satisfy Assumption~\ref{assumption}.
\item[(b)] If the Blackwell threshold of all MDPs in $\cM(\Gamma)$ is lower bounded by $\bar{\rho}$, then the Blackwell threshold of $\Gamma$ is lower bounded by $\bar{\rho}$.
\item[(c)] Let $d \in \{-1,0,1, \ldots \}$. If the $d$-sensitive threshold of all MDPs in $\cM(\Gamma)$ is lower bounded by $\bar{\rho}$, then the $d$-sensitive threshold of $\Gamma$ is lower bounded by $\bar{\rho}$.
\end{enumerate}
\end{proposition}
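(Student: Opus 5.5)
Part (a) is immediate from the construction of $\Gamma_\tau^{\sf Min}$ and $\Gamma_\sigma^{\sf Max}$. These MDPs have the same state set $[n]$. Their rewards are a subset of the rewards of $\Gamma$, possibly negated, so they are still integers bounded by $W$ in absolute value. Their transition probabilities are a subset of those of $\Gamma$, so they keep the common denominator $M$.

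For (b), the plan is to show that for any $\rho \in (0,\bar\rho)$, i.e., $\alpha \in (\bar\alpha,1)$ with $\bar\alpha=(1+\bar\rho)^{-1}$, every discount optimal pair $(\sigma^*,\tau^*)$ for $\alpha$ is Blackwell optimal. It suffices to treat stationary deterministic pairs, and to compare only against stationary deterministic deviations, since in a perfect-information game optimality against those implies optimality against general policies.

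\textbf{Step 1: reduce to the induced MDPs.} Discount optimality of $(\sigma^*,\tau^*)$ for $\alpha$ says two things:
\begin{itemize}
\item[(i)] $\sigma^*$ is discount optimal for $\alpha$ in the minimizing MDP $\Gamma_{\tau^*}^{\sf Min}$;
\item[(ii)] $\tau^*$ is discount optimal for $\alpha$ in $\Gamma_{\sigma^*}^{\sf Max}$, the maximization turned into minimization by negating rewards.
\end{itemize}

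\textbf{Step 2: apply the hypothesis.} Both MDPs belong to $\cM(\Gamma)$, so their Blackwell thresholds are at least $\bar\rho$. Hence $\sigma^*$ is Blackwell optimal in $\Gamma_{\tau^*}^{\sf Min}$, so there is $\alpha_1<1$ with $v^{\sigma,\tau^*}(\alpha')\ge v^{\sigma^*,\tau^*}(\alpha')$ for all $\alpha'>\alpha_1$ and all $\sigma$. Symmetrically, there is $\alpha_2<1$ with $v^{\sigma^*,\tau^*}(\alpha')\ge v^{\sigma^*,\tau}(\alpha')$ for all $\alpha'>\alpha_2$ and all $\tau$.

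\textbf{Step 3: conclude for the game.} For $\alpha'>\max(\alpha_1,\alpha_2)$, both inequalities in \eqref{eq:DefDiscOpt} hold. So $(\sigma^*,\tau^*)$ is Blackwell optimal in $\Gamma$, with $\bar\alpha$ in Definition~\ref{def:discount optimality} taken as $\max(\alpha_1,\alpha_2)$. Since this holds for every $\alpha>(1+\bar\rho)^{-1}$, we get $\alphabw\le(1+\bar\rho)^{-1}$, i.e., $\rhobw\ge\bar\rho$.

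Part (c) is the same argument with Blackwell optimality replaced by $d$-sensitive optimality. Definition~\ref{def:d sensitive optimality} splits into two separate liminf conditions: one over Min's deviations with $\tau^*$ fixed, and one over Max's deviations with $\sigma^*$ fixed. These are exactly $d$-sensitive optimality of $\sigma^*$ in $\Gamma_{\tau^*}^{\sf Min}$ and of $\tau^*$ in $\Gamma_{\sigma^*}^{\sf Max}$, the latter after the sign flip, which leaves the liminf with the correct orientation.

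\textbf{Main obstacle.} The one subtle point is the restriction to stationary deterministic deviations, and the fact that the MDP notions of optimality use the MDP's own policy class. Both value functions are values of the same fixed-strategy processes, so the objects coincide. General deviations are handled by the standard fact that in a finite MDP a stationary deterministic Blackwell (resp.\ $d$-sensitive) optimal policy is optimal against all history-dependent policies. I would cite this fact and not reprove it.
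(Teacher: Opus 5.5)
Your proposal is correct and matches the paper's proof. You split discount optimality of $(\sigma^*,\tau^*)$ at $\rho<\bar\rho$ into discount optimality of $\sigma^*$ in $\Gamma_{\tau^*}^{\sf Min}$ and of $\tau^*$ in $\Gamma_{\sigma^*}^{\sf Max}$, apply the MDP threshold hypothesis to each, and recombine, with (c) handled the same way. Your extra remark on history-dependent deviations covers a point the paper leaves implicit by working with stationary deterministic policies throughout.
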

The proof of \Cref{prop:reduction games to mdps} relies on checking the Blackwell and $d$-sensitive optimality of a pair $(\sigma\opt,\tau\opt)$ in the game $\Gamma$ by checking the Blackwell and $d$-sensitive optimality of $\sigma\opt$ in $\Gamma^{\sf Min}_{\tau\opt}$ and of $\tau\opt$ in $\Gamma^{\sf Max}_{\sigma\opt}$, as evident from Definitions~\ref{def:blackwell threshold} and~\ref{def:d sensitive optimality}. We provide a proof in Appendix~\ref{app:proof mdp reduction}.
\section{Complete and reduced families of comparison polynomials.}\label{sec:comparison poly}
We now introduce the main ingredients that we use to bound the Blackwell and $\sens$-sensitive thresholds for MDPs (which suffices to bound these thresholds for games by \Cref{prop:reduction games to mdps}). 
We will use the convention that MDPs are minimizing the return.
\subsection{Main results.}
To simplify notation, we will use the definition.
\begin{definition} For any $\rho>0$ and policy $\sigma$, we write $v^{\sigma}_{\rho} \in \R^{n}$ for the value function with interest rate $\rho$, corresponding to a discount factor of $1/(1+\rho)$: $\displaystyle v^{\sigma}_{\rho} \coloneqq v^{\sigma}\left((1+\rho)^{-1}\right)$, with $v^{\sigma}$ defined as in~\eqref{eq:discounted value functions}.
\end{definition}
\begin{remark}
We note that in our definition of discounted value functions~\eqref{eq:discounted value functions}, the reward obtained at the first iteration is discounted (the summation starts at $k=1$). This departs from the usual convention, which usually does not discount the first reward (i.e., the summation in~\eqref{eq:discounted value functions} usually starts at $k=0$). We do so because it significantly lightens the notations later in the paper, while not changing anything for the values of the Blackwell and $d$-sensitive thresholds, since the Blackwell and $d$-sensitive optimalities are based on sign comparisons of the difference of discounted value functions, and multiplying or dividing by $\alpha \in (0,1)$ does not change the sign.
\end{remark}
\begin{definition}[Comparison polynomials]
Let $\cM$ be a MDP, $\sigma'$ and $\sigma$ be two policies and $i \in [n]$ a state. 
A polynomial $\Delta(\rho)\in \mathbb{Z}[\rho]$ is a \emph{comparison polynomial} for $\sigma'$ against $\sigma$ at state $i$ if, for all $\rho > 0$, we have
\begin{equation}\label{eq:def comp poly}
\Delta(\rho)\geq 0 \iff (v_{\rho}^{\sigma'})_i \leq (v^{\sigma}_{\rho})_i  \; .
\end{equation}
\end{definition}
As their names suggest, comparison polynomials enable us to compare the ordering of discounted value functions of different policies by inspecting their signs.
\begin{definition}[Comparison family and complete family]
\begin{enumerate}
\item A {\em comparison family} is a set of polynomials containing only comparison polynomials.
\item A comparison family is {\em complete} if it contains one comparison polynomial for every ordered pair of policies and every state.
\end{enumerate}
\end{definition}
It has been observed repeatedly that the Blackwell threshold $\rhobw$ is bounded from below by the smallest positive root of any nonzero polynomial in an arbitrary complete family of comparison polynomials, see e.g.~\cite{grand2023reducing} and \cite{andersson2009complexity}. By~\eqref{eq:def comp poly}, below this threshold value functions cannot intersect, and discount optimal policies remain optimal for all smaller interest rate. In fact, we will prove  that it is enough to work with a smaller class of families of polynomials.
\begin{definition}
\begin{enumerate}
\item Two policies $\sigma$ and $\sigma'$ are {\em adjacent at state $i$}, or simply {\em adjacent}, if they coincide everywhere except at state $i$.
\item A comparison family is {\em reduced} if it contains exactly one comparison polynomial for every ordered pair of adjacent policies. 
\end{enumerate}
\end{definition}
Our first main result is to show that reduced families suffice for bounding the Blackwell threshold.
\begin{theorem}\label{th:reduced family root bound alphabw}
Consider any reduced family $\cP$ of comparison polynomials for a given MDP and assume $\rhobw<+\infty$. Then, the Blackwell threshold $\rhobw$ is a positive root of a polynomial belonging to $\cP$, and consequently $\rhobw$ is lower bounded by the smallest positive root of the nonzero polynomials of this family.
\end{theorem}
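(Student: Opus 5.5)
We must show that $\rhobw$ is a positive root of some nonzero polynomial in $\cP$. Since $\cP$ only contains polynomials indexed by pairs of adjacent policies, the proof should work directly with policy-improvement (one-state switch) characterizations of optimality, and not with arbitrary pairs of policies.

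\emph{Step 1 (optimality via adjacent comparisons).} For an MDP and a fixed $\rho>0$, a stationary deterministic policy $\sigma$ is discount optimal if and only if no single-state switch improves it strictly. That is, for every policy $\sigma'$ adjacent to $\sigma$ at some state $i$, we need $(v^{\sigma'}_\rho)_i \geq (v^\sigma_\rho)_i$. This is the classical policy-improvement (Bellman) criterion. If $\sigma'$ differs from $\sigma$ only at $i$, the Bellman equation shows that $(v^{\sigma'}_\rho)_i<(v^\sigma_\rho)_i$ holds exactly when the one-step lookahead using action $\sigma'(i)$ against $v^\sigma_\rho$ is strictly better at $i$. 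Indeed, both values then solve fixed-point equations that differ only in row $i$, and monotonicity plus contraction of the Bellman operator give the equivalence. Let $\Delta_{\sigma',\sigma}$ denote the comparison polynomial in $\cP$ for $\sigma'$ against $\sigma$. By \eqref{eq:def comp poly}, strict improvement at $i$ means $\Delta_{\sigma,\sigma'}(\rho)<0$. So $\sigma$ is discount optimal at $\rho$ iff $\Delta_{\sigma,\sigma'}(\rho)\ge 0$ for all adjacent $\sigma'$.

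\emph{Step 2 (locate $\rhobw$).} Suppose $\rhobw$ is not a root of any nonzero polynomial in $\cP$. Each polynomial in $\cP$ is then either identically zero or of constant nonzero sign on an open interval $I=(\rhobw-\epsilon,\rhobw+\epsilon)\cap(0,\infty)$. By Step 1, the set of discount optimal policies is therefore constant on $I$. The zero polynomials cause no trouble, since their sign conditions hold identically. Now use the definition of $\rhobw$ as the infimum of the interest rates below which every discount optimal policy is Blackwell optimal. In interest-rate language this means: for all $\rho\in(0,\rhobw)$, discount optimal policies are Blackwell optimal. Moreover, since $\rhobw<+\infty$ is the minimal such threshold, there are points $\rho_k\downarrow\rhobw$ or $\rho_k=\rhobw$ at which some discount optimal policy is not Blackwell optimal. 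A finiteness-of-policies argument passes from these points to one fixed policy. Since the optimal set is constant on $I$, such a non-Blackwell-optimal policy is also discount optimal at points of $I$ below $\rhobw$. This contradicts the definition. Hence $\rhobw$ is a root of some nonzero $\Delta\in\cP$, and it is positive. The "consequently" part is then immediate.

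\emph{Main obstacle.} The hardest step is Step 1, in particular making sure that the comparison polynomials in $\cP$, which compare \emph{full value functions} of adjacent policies only at the switching state $i$, detect optimality exactly. This needs the equivalence between "$\sigma'$ adjacent at $i$ has a strictly smaller value at $i$" and "switching at $i$ is a strict one-step improvement." I would prove it from the identity
\[
v^{\sigma'}_\rho - v^{\sigma}_\rho = \alpha\,(I-\alpha P^{\sigma'})^{-1} e_i\,\bigl(r^{\sigma'}_i + \alpha (P^{\sigma'} v^\sigma_\rho)_i - (v^\sigma_\rho)_i\bigr)/\alpha \;.
\]
The vector $(I-\alpha P^{\sigma'})^{-1}e_i$ is nonnegative with a positive $i$-th entry. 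Hence the sign of the difference at $i$ equals the sign of the one-step advantage, and if the difference is $\ge 0$ at $i$ it is $\ge 0$ everywhere. A secondary delicate point is the boundary case $\rho=\rhobw$ itself (closed versus open intervals in the definition). I would handle it by the continuity of the sign pattern on $I$, as explained in Step 2.
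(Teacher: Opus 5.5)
Your proof is correct. Its second half takes a different route from the paper's.

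Your Step~1 is exactly Proposition~\ref{PropVeinott} combined with Lemma~\ref{lem:comp vector adjacent policies prop}(c) and Proposition~\ref{prop:resolvent properties - basic}(b), so the ``main obstacle'' you flag is already settled in the paper. One small slip: the paper discounts the first reward, and under that convention your displayed identity holds only after replacing $r^{\sigma'}_i$ by $\alpha r^{\sigma'}_i$. The bracket then equals $C^{\sigma',\sigma}_i(\rho)$, and the sign argument is unaffected.

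The paper then argues constructively:
\begin{itemize}
\item It takes a non-Blackwell-optimal $\sigma$ and a Blackwell optimal $\sigma^\ast$, both discount optimal at $\rhobw$.
\item It uses that distinct rational value functions agree at only finitely many points to show that $\sigma^\ast$ is not optimal at bad rates $\rho_\ell\downarrow\rhobw$.
\item It extracts an adjacent $\sigma'$ whose polynomial is negative at $\rho_\ell$ and nonnegative at $\rhobw$, and concludes by continuity.
\end{itemize}
This pinpoints a vanishing polynomial attached to a Blackwell optimal policy.

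You argue by contradiction instead. If no nonzero member of the finite family $\cP$ vanishes at $\rhobw$, then the sign pattern of $\cP$ is constant near $\rhobw$, and so is the set of discount optimal policies. This carries a bad policy at some $\rho_k\in[\rhobw,\rhobw+\epsilon)$ to rates below $\rhobw$, a contradiction. Your route is shorter, avoids the identity-theorem step, and does not need to identify which pair of policies is responsible.

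Your route is also more robust at the boundary. The paper's write-up assumes bad rates $\rho_\ell>\rhobw$, and these need not exist. Take Lemma~\ref{lem:reduction poly to alphabw} with $p(\alpha)=(2\alpha-1)^2$: the non-Blackwell action is discount optimal only at $\alpha=1/2=\alphabw$, while the Blackwell optimal action is optimal at every rate. So neither the sequence $\rho_\ell$ nor the failure of $\sigma^\ast$ exists there. Your allowance $\rho_k=\rhobw$ covers this case directly.
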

Based on the above theorem, bounding the Blackwell threshold can be done in two steps:
\begin{itemize}
\item Step 1: build a reduced or complete family of comparison polynomials
\item Step 2: lower bound the positive roots of the polynomials in this family 
\end{itemize}
Step 2 can be done by any of the existing results on {\em root separation} for polynomials with integer coefficients. Root separation results typically yield weaker (smaller) separation when the degree and the size of the coefficients of the polynomial increase. Let us recall that the {\em height} $H(p)$ of a polynomial $p(x)=\sum^{h}_{k=0} a_k x^k$ with integer coefficients is given by
\[
H(p) \coloneqq  \max_{k \in \{ 0, \ldots , h\}} | a_k | \; .
\]
With a slight abuse of notation, we will talk of the {\em degree} and {\em height} of a comparison family for the maximum degree and height among the polynomials in this family.
To obtain sharper bounds on the Blackwell threshold, the main goal is, in Step 1, to find a family with degree and height as small as possible, and in Step 2, to use the most powerful separation result possible.
With this in mind, note that the degree $N$ and the height $H$ of a comparison family must depend on the integral data of the game, i.e., on the parameters of Assumption~\ref{assumption}. 
Our next main result in this section is the following.
\begin{theorem}\label{th:reduced family bound}
For any MDP satisfying Assumption~\ref{assumption} there exists a reduced family of comparison polynomials with degree at most $n-1$ and height at most $2 W M^{n-1} {n+1 \choose \lfloor (n+1)/2 \rfloor}$.
\end{theorem}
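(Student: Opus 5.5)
The plan is to build, for each ordered pair of adjacent policies $(\sigma',\sigma)$ differing at state $i$, an explicit comparison polynomial from Veinott's comparison vector. Write $P=P^\sigma$ and $r=r^\sigma$. With the convention that the first reward is discounted, $v^\sigma_\rho=\alpha(I-\alpha P)^{-1}r$ with $\alpha=(1+\rho)^{-1}$. Hence
\[
v^\sigma_\rho=\bigl((1+\rho)I-P\bigr)^{-1}r=(\rho I+L)^{-1}r,\qquad L:=I-P,
\]
where $L$ is a Laplacian-type matrix: zero row sums, nonpositive off-diagonal entries.

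The standard policy-difference identity gives
\[
v^{\sigma'}_\rho-v^\sigma_\rho=(\rho I+L')^{-1}\bigl(r'-r+(P'-P)v^\sigma_\rho\bigr).
\]
Since $\sigma'$ and $\sigma$ differ only at state $i$, the vector $c_\rho:=r'-r+(P'-P)v^\sigma_\rho$ is supported on coordinate $i$. Its single entry is
\[
c_i(\rho)=r'_i-r_i+\sum_j (P'_{ij}-P_{ij})(v^\sigma_\rho)_j .
\]
Because $(\rho I+L')^{-1}$ has nonnegative entries for $\rho>0$ (Neumann series), with a positive diagonal entry, the sign of $(v^{\sigma'}_\rho-v^\sigma_\rho)_i$ equals the sign of $c_i(\rho)$. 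So it suffices to take as comparison polynomial $-c_i(\rho)$ multiplied by a positive polynomial that clears denominators.

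By Cramer's rule, $v^\sigma_\rho=\adj(\rho I+L)\,r/\det(\rho I+L)$. Hence
\[
\det(\rho I+L)\,c_i(\rho)=(r'_i-r_i)\det(\rho I+L)+\sum_j(P'_{ij}-P_{ij})\bigl(\adj(\rho I+L)r\bigr)_j .
\]
A naive count gives degree $n$. The key saving is that the leading $\rho^n$ terms must cancel, or more precisely that a factor $\rho$ can be divided out. Indeed $\det(\rho I+L)$ has zero constant term (since $L\mathbf 1=0$), so $\det(\rho I+L)=\rho\,q(\rho)$.

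I will instead use the Matrix Forest Theorem (Chebotarev–Shamis). It states that $\det(\rho I+L)=\sum_F \rho^{\#\text{roots}(F)}w(F)$, summed over spanning rooted forests $F$ of the weighted digraph, and that $\adj(\rho I+L)_{jk}=\sum_{F: j\to k}\rho^{\#\text{roots}(F)-1}w(F)$, summed over forests in which $j$ lies in the tree rooted at $k$. Both are polynomials with nonnegative coefficients.

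Rewrite the numerator using $\sum_j(P'_{ij}-P_{ij})=0$. Then
\[
(P'-P)_{i\cdot}\,v=(P'-P)_{i\cdot}\,(v-v_i\mathbf 1)=\sum_j (P'-P)_{ij}\bigl(v_j-v_i\bigr).
\]
In forest terms, $\adj(\rho I+L)r$ at $j$ minus at $i$ involves only the forests where $i$ and $j$ sit in different trees, or a combinatorially cancelling expression. This removes one power of $\rho$ and, after dividing by $\rho$, yields a polynomial of degree at most $n-1$.

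Concretely, I would use the identity $c_i(\rho)\cdot\det(\rho I+L'')$ for the matrix $L''$ obtained by deleting the outgoing edges at state $i$ (making $i$ a sink-like row). Equivalently, I would expand using the bordered determinant
\[
\det\begin{pmatrix}\rho I+L & r\\ -(P'-P)_{i\cdot} & r'_i-r_i\end{pmatrix}
\]
and observe that row $i$ of $\rho I+L$ can be combined with the border row to produce row $i$ of $\rho I+L'$. This shows the expression equals a determinant that is one degree lower after factoring out $\rho$ or $(1+\rho)$.

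Finally, I clear the denominators: multiply through by $M^{n-1}$, since each forest weight is a product of at most $n-1$ transition probabilities with common denominator $M$, and multiply by $M$ once more for the row $P'-P$. This gives integer coefficients. It remains to bound the height. Every coefficient of $\rho^k$ is a signed sum of forest weights times rewards of absolute value at most $W$; the factor $2W$ comes from $|r'_i-r_i|\le 2W$ and from $\sum_j|P'_{ij}-P_{ij}|\le 2$. The total mass of weighted rooted forests with exactly $k$ roots in a stochastic digraph is at most $\binom{n}{k}$, since one chooses the root set and the remaining nodes' out-edges sum to one. The extra index from the border accounts for $\binom{n+1}{k}$, and maximizing over $k$ yields $\binom{n+1}{\lfloor (n+1)/2\rfloor}$.

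I expect the main obstacle to be the degree reduction to $n-1$ combined with this sharp forest-counting bound: the cancellations must be organized so that no $2^n$-type slack appears. My first attempt would be to apply the all-minors Matrix Forest Theorem to the $(n+1)\times(n+1)$ bordered matrix directly, interpreting the reward column as edges to an extra absorbing node.
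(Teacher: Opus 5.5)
Your construction is essentially the paper's. Since $(1+\rho)(v^\sigma_\rho)_i=r_i+(Pv^\sigma_\rho)_i$, your $c_i(\rho)$ equals the paper's $C^{\sigma',\sigma}_i(\rho)$. Up to sign and a power of $M$, $\det(\rho I+L)\,c_i(\rho)/\rho$ is then an element of the paper's family $\cP$. The Matrix Forest Theorem and Pascal's identity give your bound $2W\binom{n+1}{k+2}$ on the coefficient of $\rho^{k+1}$ in $\det(\rho I+L)\,c_i(\rho)$, exactly as in the paper.

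Divisibility by $\rho$ also holds through your spanning-tree argument. The constant term of $\sum_j(P'_{ij}-P_{ij})(\adj(\rho I+L)r)_j$ involves only spanning trees, where every state has the same root, so it vanishes because $\sum_j(P'_{ij}-P_{ij})=0$. Note, however, that the leading $\rho^n$ terms do not cancel, since $(r'_i-r_i)\rho^n$ survives. The bordered-determinant and $L''$ detours are not needed.

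The genuine gap is the normalization, which costs a factor $M$. You clear denominators with $M^{n-1}$ for the forest weights and one more $M$ for the row $P'-P$. That is, you use $M^{n}\det(\rho I+L)\,c_i(\rho)/\rho$, and your estimates only bound its height by $2WM^{n}\binom{n+1}{\lfloor (n+1)/2\rfloor}$. This is a factor $M$ above the statement, and the factor matters: the exponent $n-1$ of $M$ is precisely what the paper shows to be tight.

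The worst case in your count is a spanning tree with $n-1$ edges times one entry of $P'-P$, but that only feeds the constant term, which is zero. Count by degree instead. For $k\ge 0$, the coefficient of $\rho^{k+1}$ in $\det(\rho I+L)\,c_i(\rho)$ has two sources:
\begin{itemize}
\item forests with $k+1$ trees, hence $n-k-1$ edges, multiplied by the integer $r'_i-r_i$;
\item forests with $k+2$ trees, hence $n-k-2$ edges, multiplied by one entry of $P'-P$.
\end{itemize}
In both cases its denominator divides $M^{n-k-1}$, which divides $M^{n-1}$. Hence $M^{n-1}\det(\rho I+L)\,c_i(\rho)/\rho\in\Z[\rho]$, and the coefficient of $\rho^k$ is bounded by $2WM^{n-1}\binom{n+1}{k+2}$, which is the claim.

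The paper packages the same observation by working with $J=ML$. Every determinant and cofactor is then an integer polynomial in $M\rho$, the numerator has zero constant term, and one divides by $M\rho$ rather than by $\rho$.
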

 Theorem~\ref{th:reduced family bound} enables one to use any root separation method to obtain lower bounds on $\rhobw$ (and therefore upper bounds on $\alphabw$). We will do so in Section~\ref{sec:separation blackwell}.
We show that our bounds on degrees and height are essentially tight in the next theorem. 
\begin{theorem}\label{th:reduced family bound - tightness}
For every $n \geq 2$ and $M,W \in \N$ with $2W-1$ prime and $M$ not divisible by $2W-1$, there exists an MDP satisfying Assumption~\ref{assumption} where the minimal polynomial of $\rhobw$ has degree  $n-1$ and height $(2W-1) M^{n-1} {n-2 \choose \lfloor (n-2)/2 \rfloor}$.
\end{theorem}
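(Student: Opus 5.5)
We construct the instance explicitly. It has two policies that are adjacent at one state and differ only in their first transition. From there, one policy enters a chain whose discounted value is an explicit rational function in $\rho$; the other receives a constant reward. Aiming at the binomial coefficient ${n-2 \choose \lfloor (n-2)/2\rfloor}$ and at degree $n-1$, we look for a comparison polynomial of the shape
\[
\Delta(\rho) = (2W-1)\, M^{n-1}\rho^{\,n-1} - c\,(1+\rho)^{n-2}\cdot(\text{linear factor}) ,
\]
or better, a polynomial whose coefficients are, up to sign and a common factor, $(2W-1)M^{k}\binom{n-2}{k}$-type quantities. The largest such coefficient is then $(2W-1)M^{n-1}\binom{n-2}{\lfloor (n-2)/2\rfloor}$.

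\textbf{Step 1 (construction).} Take state $1$ with two actions. Action $a$ gives reward $-W$ (or $W-1$) and moves deterministically into a path $2\to 3\to\cdots\to n$. On each edge we stay with probability $1-1/M$ and advance with probability $1/M$; alternatively each edge is deterministic but the rewards are arranged so that $(1+\rho)^{-k}$ factors appear. The last state $n$ is absorbing with reward $W$. Action $b$ leads to an absorbing state with a reward chosen so that the two values differ by $(2W-1)$ times a term involving $\left(\frac{1}{1+M\rho}\right)^{n-1}$-like expressions. Clearing denominators gives an integer polynomial of degree $n-1$ in $\rho$. By the Matrix Forest Theorem computation behind Theorem~\ref{th:reduced family bound}, each value is a ratio with denominator $\prod(1+M\rho)$-type factors. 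Its numerator expands with binomial coefficients in $n-2$, since the first transition and the last absorbing state contribute no extra factors.

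\textbf{Step 2 (Blackwell threshold is a root).} Since the instance has essentially only these two policies, Theorem~\ref{th:reduced family root bound alphabw} says $\rhobw$ is a positive root of this single comparison polynomial $\Delta$. We check by Descartes' rule of signs that $\Delta$ has exactly one positive root, so that root is $\rhobw$. The rewards are designed so that one policy is better near $\rho=0$ and the other for large $\rho$, which guarantees the root exists and is a genuine crossing.

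\textbf{Step 3 (irreducibility).} This is the main obstacle: we must show $\Delta$ is irreducible over $\mathbb{Q}$ and primitive, so that $\Delta$ itself is the minimal polynomial and its height is exactly $(2W-1)M^{n-1}\binom{n-2}{\lfloor (n-2)/2\rfloor}$. The hypotheses point to Eisenstein's criterion with the prime $p=2W-1$. We arrange that the leading coefficient (or, after the substitution $\rho\mapsto 1/\rho$, the constant term) is a power of $M$ not divisible by $p$. All other coefficients should carry the factor $2W-1$, and the extreme coefficient should not be divisible by $p^2$. If the raw polynomial does not fit directly, we would try the reversed polynomial or a shift $\rho\mapsto\rho+1$. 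Primitivity follows because some coefficient is a power of $M$, coprime to $p$. The gcd of the remaining coefficients must also be checked to contain no common factor with $M$; this part requires the most careful tuning of the rewards. Finally, the degree is $n-1$ because the leading coefficient is nonzero.
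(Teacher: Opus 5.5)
Your overall plan is the same as the paper's: a two-policy instance, a single comparison polynomial with one sign change, and Eisenstein at $p=2W-1$ on the reversed polynomial. The gap is in the heart of the theorem. You never actually produce an instance whose polynomial has height exactly $(2W-1)M^{n-1}\binom{n-2}{\lfloor (n-2)/2\rfloor}$, and the construction you sketch cannot produce one.

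With chain edges that self-loop with probability $1-1/M$ and advance with probability $1/M$ (reward $0$), the value recursion is $(1+\rho)v_k=(1-\frac{1}{M})v_k+\frac{1}{M}v_{k+1}$, that is, $v_k=v_{k+1}/(1+M\rho)$. After clearing denominators the polynomial therefore involves $(1+M\rho)^{n-2}$. Up to a common reward factor, its nonconstant coefficients are $M^{k+c}\binom{n-2}{k}$, where $c\in\{0,1\}$ depends on whether the first move is deterministic. Your claim that the largest of these is $(2W-1)M^{n-1}\binom{n-2}{\lfloor (n-2)/2\rfloor}$ is false for $M\ge 2$, because the top power of $M$ and the central binomial coefficient sit at different indices.

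For example, take $n=4$, $M=2$, $W=2$, with $a$ entering the chain with probability $1/2$ and $b$ a reward-$0$ self-loop. Then $C=3-\frac{1}{2\rho(1+2\rho)^2}$ and the polynomial is $24\rho^3+24\rho^2+6\rho-1$. This is irreducible by Eisenstein at $3$, so it is the minimal polynomial, and its height is $24$ instead of the required $48$. Your deterministic-edge alternative loses the factor $M^{n-1}$ entirely. Sending action $b$ to a separate absorbing state also costs an extra state.

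The missing idea is to make $M^{n-1}$ a common factor of every nonconstant coefficient, so that the polynomial is $(2W-1)M^{n-1}\rho(1+\rho)^{n-2}-1$. The paper does this in three moves:
\begin{itemize}
\item The probability-$(1-1/M)$ mass in each chain state is sent back to the decision state $s_0$, not to a self-loop. Under the alternative action $a_0$ (self-loop at $s_0$, reward $0$) one has $(v^{a_0}_\rho)_{s_0}=0$, so each chain step contributes exactly $\frac{1}{M(1+\rho)}$.
\item Action $a_1$ (reward $2W-1$) enters the chain only with probability $1/M$ and otherwise stays at $s_0$.
\item The chain ends in an absorbing state with reward $-1$.
\end{itemize}
This gives $C^{a_1,a_0}_{s_0}(\rho)=(2W-1)-\frac{1}{M^{n-1}\rho(1+\rho)^{n-2}}$. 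The resulting polynomial has a single sign change and exactly the claimed height. Its constant term is $-1$, so primitivity is immediate and no gcd tuning is needed.

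Eisenstein then applies to $(2W-1)M^{n-1}(x+1)^{n-2}-x^{n-1}$. The coefficient not divisible by $p$ is the $-1$, not a power of $M$ as you planned. The hypothesis $p\nmid M$ is used only to ensure $p^2\nmid (2W-1)M^{n-1}$. Finally, the rewards $2W-1,0,-1$ are shifted by $W-1$ so that they lie in $[-W,W]$.
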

Since the reduced family of Theorem~\ref{th:reduced family bound} has degree at most $n-1$, we note that for the MDP instance of Theorem~\ref{th:reduced family bound - tightness} we must have that (a multiple of) the minimal polynomial of $\rhobw$ belongs to our reduced family. We will compare the bounds that we obtain for our reduced family with prior approaches in Section~\ref{sec:discussion}.
The rest of this section is devoted to the proof of Theorems~\ref{th:reduced family root bound alphabw}, \ref{th:reduced family bound} and~\ref{th:reduced family bound - tightness}.
\subsection{Veinott's comparison vectors.}\label{sec:veinott comparison}
In this section we introduce Veinott's comparison vectors, which provide some necessary and sufficient conditions for discount optimality that involve comparing a policy to its adjacent policies (instead of comparing a policy to all other policies, as in \Cref{def:discount optimality}). This will greatly simplify the later expositions of our proofs. 
\paragraph{Matrix notations.}
Given a real matrix $C$, we write $C\geq 0$ if all the entries of $C$ are nonnegative. We denote by $C_i$ the $i$-th row (or coordinate, if $C$ is a column vector) of $C$.
We denote by $r^\sigma\in\R^n$ and $P^\sigma\in \R^{n\times n}$
the reward vector and transition matrix associated with a stationary policy
$\sigma$, respectively. We will need the following notions.
\begin{definition}\label{def:laplacian resolvent}
We define $L^{\sigma} \in \R^{n \times n}$ for the {\em signed Laplacian matrix}: $L^{\sigma} \coloneqq P^{\sigma} - I$. Its resolvent $R^{\sigma}(\rho) \in \R^{n \times n}$ is $R^{\sigma}(\rho) \coloneqq (\rho I - L^{\sigma})^{-1}$. Besides, we set $J^{\sigma} \coloneqq M L^{\sigma}$ and $D^{\sigma}(\rho) \coloneqq \det \left( \rho M I - J^{\sigma}\right)$. 
\end{definition}
The next proposition highlights a few important properties of the resolvent.
\begin{proposition}\label{prop:resolvent properties - basic}
For any stationary policy $\sigma$ we have:
\begin{enumerate}
\item[(a)] $R^{\sigma}(\rho) \geq 0$ for any $\rho >0$.
\item[(b)] $R^{\sigma}_{ii}(\rho) \geq (1+\rho)^{-1}>0$ for any $i \in [n]$ and $\rho >0$.
\end{enumerate}
\end{proposition}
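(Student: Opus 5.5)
The plan is to write the resolvent as a discounted series and read off both claims from nonnegativity. Since $P^\sigma$ is row-stochastic, I rewrite
\[
\rho I - L^{\sigma} = (1+\rho) I - P^{\sigma} = (1+\rho)\left(I - \tfrac{1}{1+\rho} P^{\sigma}\right).
\]
For $\rho>0$ we have $\alpha \coloneqq (1+\rho)^{-1}\in(0,1)$. The spectral radius of $\alpha P^\sigma$ is at most $\alpha\|P^\sigma\|_\infty = \alpha < 1$, so the Neumann series converges and gives
\[
R^{\sigma}(\rho) = (1+\rho)^{-1}\sum_{k=0}^{\infty} \alpha^k (P^{\sigma})^k = \sum_{k=0}^{\infty} \alpha^{k+1}(P^{\sigma})^k .
\]
This also shows that $\rho I - L^\sigma$ is invertible, so $R^\sigma(\rho)$ is well defined.

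For part (a), each power $(P^\sigma)^k$ is entrywise nonnegative, being a product of nonnegative matrices. Each coefficient $\alpha^{k+1}$ is positive. A convergent sum of nonnegative matrices is nonnegative, so $R^\sigma(\rho)\geq 0$.

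For part (b), I keep only the $k=0$ term of the series and drop the remaining terms, which are nonnegative. This gives $R^\sigma_{ii}(\rho) \geq \alpha (I)_{ii} = (1+\rho)^{-1} > 0$.

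There is no real obstacle here. The only point needing care is justifying convergence of the series, which the bound $\|\alpha P^\sigma\|_\infty=\alpha<1$ in the induced max-row-sum norm handles.
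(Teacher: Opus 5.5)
Your proof is correct and matches the paper's argument: the paper also derives both claims from the expansion $R^{\sigma}(\rho) = \sum_{k\ge 0}(1+\rho)^{-(k+1)}(P^{\sigma})^{k}$, getting (a) from nonnegativity of the terms and (b) from the $k=0$ term. Your convergence check via $\|\alpha P^{\sigma}\|_\infty = \alpha < 1$ supplies a detail the paper leaves implicit.
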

\begin{proof}
    Both statements follow from the fact that $R^{\sigma}(\rho) = \sum_{k = 0}^{\infty} (1+\rho)^{-(k+1)} (P^{\sigma})^{k}$. 
\end{proof}
Recall that, given a stationary policy $\sigma$ the value function $v^{\sigma}_{\rho} \in \R^{n}$ is given by:
\begin{equation}\label{eq:ValueDiscountedGame}
v^{\sigma}_{\rho} = R^{\sigma}(\rho) r^{\sigma}.
\end{equation}
The comparison vector $C^{\sigma,\sigma'}(\rho) \in \R^{n}$ is defined next.
\begin{definition}[Comparison vectors]\label{def:Comp-Vect}
Given a pair of policies $(\sigma,\sigma')$, the \emph{comparison vector} $C^{\sigma,\sigma'}(\rho) \in \R^{n}$ is defined as
\begin{equation}\label{eq:definition comparison vector}
C^{\sigma,\sigma'}(\rho) \coloneqq r^{\sigma} + (L^{\sigma}-\rho I) v^{\sigma'}_{\rho} \; .
\end{equation}
\end{definition}
The comparison vector coincides with the reduced cost in the linear programming formulation of a MDP, where every stationary policy defines a basic point. 
This vector appears in the following equality,  sometimes known as the {\em performance difference lemma} (e.g. Lemma 1 of~\cite{veinott}):
\begin{equation}\label{eq:performance difference}
v^{\sigma}_{\rho} - v^{\sigma'}_{\rho} = R^{\sigma}(\rho)C^{\sigma, \sigma'}(\rho) \; .
\end{equation} 
Since the entries of $R^{\sigma}(\rho)$ are non-negative, \eqref{eq:performance difference} implies that $C^{\sigma, \sigma'}(\rho) \geq 0 \Rightarrow  v^{\sigma}_{\rho} - v^{\sigma'}_{\rho} \geq 0$. The next lemma shows that this implication is in fact an equivalence when the policies are adjacent.
\begin{lemma}\label{lem:comp vector adjacent policies prop}
Let $\sigma$ and $\sigma'$ be adjacent policies and $i \in [n]$ the state where they differ. Then,
\begin{enumerate}
\item[(a)] For $j \neq i$, we have $C^{\sigma,\sigma'}_{j}(\rho)=0$ for all $\rho >0$.
\item[(b)] $C^{\sigma,\sigma'}_{i}(\rho)$ only depends on $\sigma'$ and $\sigma(i)$, not on $\sigma(j)$ for $j \neq i$.
\item[(c)] $(v^{\sigma}_{\rho} - v^{\sigma'}_{\rho})_i = R^{\sigma}_{ii}(\rho) C^{\sigma,\sigma'}_{i}(\rho)$ for any $\rho>0$.
\end{enumerate}
\end{lemma}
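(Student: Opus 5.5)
The plan is to compute the comparison vector directly from its definition~\eqref{eq:definition comparison vector}. I would use that $\sigma$ and $\sigma'$ coincide at every state $j \neq i$, and that $v^{\sigma'}_\rho$ satisfies the Bellman/policy evaluation equation for $\sigma'$. Then (c) follows from the performance difference lemma~\eqref{eq:performance difference}.

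For (a), I start from~\eqref{eq:ValueDiscountedGame}, $v^{\sigma'}_\rho = R^{\sigma'}(\rho) r^{\sigma'}$. This gives $(\rho I - L^{\sigma'}) v^{\sigma'}_\rho = r^{\sigma'}$, i.e. $r^{\sigma'} + (L^{\sigma'} - \rho I) v^{\sigma'}_\rho = 0$, which says $C^{\sigma',\sigma'}(\rho) = 0$. Since the $j$-th row of $r^{\sigma}$, of $P^{\sigma}$, and hence of $L^{\sigma} - \rho I$ depends only on the action $\sigma(j)$, and $\sigma(j) = \sigma'(j)$ for $j \neq i$, we get
\[
C^{\sigma,\sigma'}_j(\rho) = r^{\sigma}_j + (L^{\sigma}-\rho I)_j v^{\sigma'}_\rho = r^{\sigma'}_j + (L^{\sigma'}-\rho I)_j v^{\sigma'}_\rho = 0 .
\]

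For (b), I write the remaining coordinate explicitly:
\[
C^{\sigma,\sigma'}_i(\rho) = r^{\sigma}_i + (P^{\sigma})_i v^{\sigma'}_\rho - (1+\rho)(v^{\sigma'}_\rho)_i .
\]
Here $r^\sigma_i$ and the row $(P^\sigma)_i$ depend only on the action $\sigma(i)$, while $v^{\sigma'}_\rho$ depends only on $\sigma'$. So no other value of $\sigma$ enters. This is a purely structural observation. In fact it holds for any pair of policies, not just adjacent ones.

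For (c), I apply~\eqref{eq:performance difference} and take the $i$-th coordinate:
\[
(v^\sigma_\rho - v^{\sigma'}_\rho)_i = \sum_j R^\sigma_{ij}(\rho) C^{\sigma,\sigma'}_j(\rho) .
\]
By (a) only the term $j = i$ survives, which is the claim. There is no real obstacle here. The only point needing care is (a): one must observe that the rows of $L^\sigma$ and $r^\sigma$ at state $j$ are determined by the action at $j$ alone, and that the evaluation equation for $\sigma'$ kills those rows.
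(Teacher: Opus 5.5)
Your proposal is correct and matches what the paper intends. The paper omits the proof as following directly from Definition~\ref{def:Comp-Vect}, and it later sketches the same reasoning you give: the rows $j\neq i$ agree with those of $\sigma'$, so they vanish by the evaluation equation $C^{\sigma',\sigma'}(\rho)=0$; $r^\sigma_i$ and the $i$-th row of $L^\sigma$ depend only on $\sigma(i)$; and (c) follows from~\eqref{eq:performance difference}. Your remark that (b) holds for arbitrary pairs of policies is accurate, and it is the form the paper actually uses in the proof of Proposition~\ref{PropVeinott}, where $\sigma$ need not be adjacent to $\sigma^\ast$.
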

Lemma~\ref{lem:comp vector adjacent policies prop} follows directly from Definition~\ref{def:Comp-Vect} and we omit its proof for conciseness. The following proposition shows that one can characterize the optimality of a policy $\sigma\opt$ by checking the sign of the comparison vectors of adjacent policies.
\begin{proposition}\label{PropVeinott}
For any stationary policy $\sigma^\ast$ and any $\rho>0$ the following statements are equivalent.
\begin{itemize}
\item[(a)] $v^{\sigma^\ast}_{\rho} \leq v^{\sigma}_{\rho}$ for every stationary policy $\sigma$.
\item[(b)] $C^{\sigma,\sigma^\ast}(\rho) \geq 0$ for every stationary policy $\sigma$ adjacent to $\sigma^\ast$.
\end{itemize}
\end{proposition}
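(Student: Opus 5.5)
The direction (a)$\Rightarrow$(b) will come from Lemma~\ref{lem:comp vector adjacent policies prop}; the direction (b)$\Rightarrow$(a) will come from the performance difference identity~\eqref{eq:performance difference} together with the nonnegativity of the resolvent (Proposition~\ref{prop:resolvent properties - basic}(a)). The key point for the second direction is that the sign of $C^{\sigma,\sigma^\ast}$ at a given state depends only on the action chosen there.

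\emph{(a)$\Rightarrow$(b).} Let $\sigma$ be adjacent to $\sigma^\ast$, differing only at state $i$. By Lemma~\ref{lem:comp vector adjacent policies prop}(a), $C^{\sigma,\sigma^\ast}_j(\rho)=0$ for $j\neq i$, so only coordinate $i$ needs checking. By Lemma~\ref{lem:comp vector adjacent policies prop}(c),
\[
(v^{\sigma}_\rho-v^{\sigma^\ast}_\rho)_i = R^{\sigma}_{ii}(\rho)\, C^{\sigma,\sigma^\ast}_i(\rho).
\]
The left-hand side is $\geq 0$ by (a), and $R^{\sigma}_{ii}(\rho)>0$ by Proposition~\ref{prop:resolvent properties - basic}(b). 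Hence $C^{\sigma,\sigma^\ast}_i(\rho)\ge 0$.

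\emph{(b)$\Rightarrow$(a).} Fix an arbitrary stationary policy $\sigma$. Expanding the definition, coordinate $j$ of the comparison vector is
\[
C^{\sigma,\sigma^\ast}_j(\rho)= r^{\sigma}_j + (L^{\sigma})_j v^{\sigma^\ast}_\rho - \rho (v^{\sigma^\ast}_\rho)_j .
\]
This depends only on the action $\sigma(j)$ and on $\sigma^\ast$ (through $v^{\sigma^\ast}_\rho$), since the $j$-th rows of $r^\sigma$ and $L^\sigma$ are determined by $\sigma(j)$ alone. Now split into two cases for each state $j$.
\begin{itemize}
\item If $\sigma(j)=\sigma^\ast(j)$, then $C^{\sigma,\sigma^\ast}_j(\rho)=C^{\sigma^\ast,\sigma^\ast}_j(\rho)=0$. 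Indeed, $C^{\sigma^\ast,\sigma^\ast}(\rho)=r^{\sigma^\ast}-(\rho I-L^{\sigma^\ast})v^{\sigma^\ast}_\rho=0$ by~\eqref{eq:ValueDiscountedGame}.
\item If $\sigma(j)\neq\sigma^\ast(j)$, let $\sigma_j$ be the policy adjacent to $\sigma^\ast$ at $j$ that plays $\sigma(j)$ there. Then $C^{\sigma,\sigma^\ast}_j(\rho)=C^{\sigma_j,\sigma^\ast}_j(\rho)\ge 0$ by (b).
\end{itemize}
Thus $C^{\sigma,\sigma^\ast}(\rho)\ge 0$ entrywise. By~\eqref{eq:performance difference}, $v^\sigma_\rho-v^{\sigma^\ast}_\rho=R^\sigma(\rho)C^{\sigma,\sigma^\ast}(\rho)\ge 0$, using $R^\sigma(\rho)\ge 0$ from Proposition~\ref{prop:resolvent properties - basic}(a).

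The only step needing care is the row-locality claim, i.e.\ that the $j$-th coordinate of the comparison vector depends only on $\sigma(j)$; it is immediate from the definitions. There is no real obstacle beyond that observation.
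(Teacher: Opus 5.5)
Your proof is correct and is essentially the paper's argument. You use Lemma~\ref{lem:comp vector adjacent policies prop}(c) with $R^{\sigma}_{ii}(\rho)>0$ for (a)$\Rightarrow$(b), and row-locality of $C^{\sigma,\sigma^\ast}$ together with~\eqref{eq:performance difference} and $R^{\sigma}(\rho)\ge 0$ for (b)$\Rightarrow$(a). The differences are only presentational: you argue directly rather than by contrapositive, and you treat states with $\sigma(j)=\sigma^\ast(j)$ explicitly via $C^{\sigma^\ast,\sigma^\ast}(\rho)=0$, a case the paper leaves implicit.
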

 \Cref{PropVeinott} is equivalent to a special case
of the optimality condition in the simplex algorithm and it extends Theorem 2 of~\cite{veinott} (which is written without referring to adjacent policies). For the convenience of the reader, we provide a proof of~\Cref{PropVeinott} from first principles. 
\begin{proof}[Proof of Proposition~\ref{PropVeinott}.]
The proof proceeds in two steps. 

{\em (a) implies (b).} This proof is by contrapositive. Assume that (b) does not hold, so that there exists a stationary policy $\sigma$ differing from $\sigma^\ast$ only at some state $i$ such that we do not have $C^{\sigma,\sigma^\ast}(\rho) \geq 0$. By items (a) and (b) of Lemma~\ref{lem:comp vector adjacent policies prop}, we must have $C^{\sigma,\sigma\opt}_{i}(\rho)<0$. Item (c) of Lemma~\ref{lem:comp vector adjacent policies prop} shows that $(v^{\sigma}_{\rho} - v^{\sigma^\ast}_{\rho})_i = R^{\sigma}_{ii}(\rho) C^{\sigma,\sigma\opt}_{i}(\rho) < 0$, since $R^{\sigma}_{ii}(\rho)>0$ (item (b) of Proposition~\ref{prop:resolvent properties - basic}). 
This shows that (a) does not hold when (b) does not hold, hence (a) implies (b).
 
{\em (b) implies (a).} This proof is also by contrapositive.
 Assume that (a) does not hold, so that $v^{\sigma^\ast}_{\rho} \leq v^{\sigma}_{\rho}$ is not satisfied for some stationary policy $\sigma$. Then, by~\eqref{eq:performance difference} it follows that there exists a state $i$ such that $C^{\sigma,\sigma^\ast}_i(\rho) < 0$. We define the stationary policy $\sigma'$ by $\sigma'(i)= \sigma(i)$ and $\sigma'(j)= \sigma^\ast(j)$ for $j\neq i$, which is adjacent to $\sigma\opt$. By item (b) of Lemma~\ref{lem:comp vector adjacent policies prop}, we have $C^{\sigma',\sigma^\ast}_i(\rho) = C^{\sigma,\sigma^\ast}_i(\rho) < 0$, and so $C^{\sigma', \sigma^\ast}(\rho) \ngeqslant 0$. This shows that (b) does not hold when (a) does not hold.

\end{proof}

\subsection{Proof of Theorem~\ref{th:reduced family root bound alphabw}.}

\begin{proof}[Proof of Theorem~\ref{th:reduced family root bound alphabw}.]
Assume $\rhobw <+\infty$. By definition of $\rhobw$, we can find a sequence of stationary deterministic policies $(\sigma_\ell)_{\ell \in \N}$ and interest rates $(\rho_{\ell})_{\ell \in \N}$ with $\lim_{\ell \to + \infty} \rho_{\ell} = \rhobw$ such that for all $\ell \in \N$ we have $\rhobw < \rho_{\ell}$, and $\sigma_{\ell}$ is discount optimal for the interest rate $\rho_{\ell}$ but not Blackwell optimal. Since there are only finitely many deterministic policies, we get that we can find a stationary deterministic policy $\sigma$ that is discount optimal at $\rhobw$ but not Blackwell optimal. The same argument from below shows that we can find a stationary deterministic policy $\sigma\opt$ that is Blackwell optimal and discount optimal at $\rhobw$. Note that for $\ell$ large enough we must have that $\sigma\opt$ is not discount optimal for the interest rate $\rho_{\ell}$ (otherwise the value functions of $\sigma$ and $\sigma\opt$ would coincide for an infinite number of interest rates, and so these value functions would be equal, which is impossible given that $\sigma$ is not Blackwell optimal). Hence, by \Cref{PropVeinott} it follows that there exists a policy $\sigma'$ which is adjacent to $\sigma\opt$ at some state $i \in [n]$ such that $C^{\sigma',\sigma\opt}_{i}(\rho_{\ell})<0$ for $\ell$ large enough.   This implies that the polynomial $\Delta(\rho)$ in $\cP$ associated with $(\sigma',\sigma\opt)$ and $i \in [n]$ is such that $\Delta(\rho_{\ell})<0$ for $\ell$ large enough, and so  $\Delta(\rhobw)\leq 0$ by continuity. Besides, we have $\Delta(\rhobw) \ge 0$ by the discount optimality of $\sigma\opt$ at $\rhobw$. Thus $\Delta(\rhobw)=0$, which completes the proof.

\end{proof}

\subsection{Proof of Theorem~\ref{th:reduced family bound}.}\label{sec:proof th 2}
In this section we provide a proof of Theorem~\ref{th:reduced family bound}. The proof proceeds in two steps. 
 In Section~\ref{sec: building comparison family}, we build a reduced family based on comparison vectors, and in Section~\ref{sec:bounding height degree} we bound its height and degree.
\subsubsection{Constructing a reduced family.}\label{sec: building comparison family}
Following the analysis in the previous section, we can track if a policy is optimal by checking the signs of Veinott's comparison vectors. However, the comparison vector is not, in general, a polynomial function in the interest rate $\rho$ because it depends on the value functions, and in general the latter functions are {\em rational} functions in the interest rate, not necessarily polynomial functions. To obtain polynomial comparison families we therefore need to multiply the comparison vectors by their polynomial denominators. 
Note that by construction, $J^{\sigma}$ is a matrix of size $n \times n$ with integer coefficients. Similarly, from the definition of the characteristic polynomial we know that $D^{\sigma}(\rho)$ is a polynomial in $\rho$ with integer coefficients.
Additionally, as we will prove later, $D^{\sigma}(\rho)$ is the common polynomial appearing in the denominator of $(v_{\rho}^{\sigma})_i$ for any state $i$, and multiplying a component of the comparison vector by $D^{\sigma}(\rho)/M \rho$ yields a polynomial. This justifies the construction of our reduced comparison family as follows: 
\begin{equation}\label{eq:definition reduced family}
\cP  \coloneqq \left\{\rho \mapsto \frac{D^{\sigma'}(\rho)}{M \rho} C^{\sigma,\sigma'}_{i}(\rho)  \; | \; i \in [n], \sigma \text{ and }\sigma' \text{ are adjacent and differ at state } i\right\}\; .
\end{equation}
We first show a crucial property of the polynomials in $\cP$: their signs and zeros are controlled by the signs and zeros of the comparison vectors.
\begin{lemma}\label{lem:zeros delta c} 
The following statements hold.
\begin{enumerate}
\item[(a)] For any policy $\sigma$ we have: $D^{\sigma}(0)=0$, $D^{\sigma}(\rho)>0$ for $\rho >0$, and $D^{\sigma}(\rho)/M \rho \in \Z[\rho]$.
\item[(b)] For any polynomial $\Delta(\rho) \coloneqq D^{\sigma'}(\rho) C^{\sigma,\sigma'}_{i}(\rho)/M \rho$ in $\cP$   (so $\sigma$ and $\sigma'$ are adjacent policies at state $i$) we have: $\Delta(\rho)=0 \iff C^{\sigma,\sigma'}_{i}(\rho)=0$, and $\Delta(\rho)$ and $C^{\sigma,\sigma'}_{i}(\rho)$ have the same sign.
\end{enumerate}
\end{lemma}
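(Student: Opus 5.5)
Part (a) comes first, from the structure of $J^{\sigma}$, and part (b) then follows from (a) together with the rationality of the value function.

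\emph{Part (a).} Since $P^{\sigma}$ is row-stochastic, $L^{\sigma}\mathbf{1}=0$. Hence $J^{\sigma}=ML^{\sigma}$ is singular, and $D^{\sigma}(0)=\det(-J^{\sigma})=0$.

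Next, write $D^{\sigma}(\rho)=\det(M(\rho I - L^{\sigma}))=M^n\det(\rho I-L^{\sigma})$.
\begin{itemize}
\item For $\rho>0$ the matrix $\rho I-L^{\sigma}=(1+\rho)I-P^{\sigma}$ is invertible, because the spectral radius of $P^{\sigma}$ is $1<1+\rho$.
\item Its determinant is real and nonzero on $(0,\infty)$, so it has constant sign there by continuity.
\item It tends to $+\infty$ as $\rho\to\infty$, since the leading term is $\rho^n$.
\end{itemize}
Therefore $D^{\sigma}(\rho)>0$ for all $\rho>0$. (Alternatively, the eigenvalues $1+\rho-\lambda$ with $|\lambda|\le 1$ have positive real part, and complex ones come in conjugate pairs.)

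For integrality, view $D^{\sigma}(\rho)=\det(\rho MI-J^{\sigma})$ as $p(M\rho)$, where $p(x)=\det(xI-J^{\sigma})=x^n+c_{n-1}x^{n-1}+\dots+c_1x+c_0$ is the characteristic polynomial of the integer matrix $J^{\sigma}$. So $p\in\Z[x]$, and $c_0=p(0)=0$. Thus $p(x)=x\,q(x)$ with $q\in\Z[x]$, and $D^{\sigma}(\rho)/(M\rho)=q(M\rho)\in\Z[\rho]$.

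\emph{Part (b).} The sign and zero statements are immediate from (a): for $\rho>0$ the factor $D^{\sigma'}(\rho)/(M\rho)$ is strictly positive, so $\Delta(\rho)$ and $C^{\sigma,\sigma'}_i(\rho)$ vanish together and share the same sign.

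It remains to justify that $\Delta$ is actually a polynomial, which is what membership in $\cP\subset\Z[\rho]$ requires. Write
\[
v^{\sigma'}_{\rho}=(\rho I-L^{\sigma'})^{-1}r^{\sigma'}=M(\rho MI-J^{\sigma'})^{-1}r^{\sigma'}=\frac{M\,\adj(\rho MI-J^{\sigma'})\,r^{\sigma'}}{D^{\sigma'}(\rho)}.
\]
Then
\[
D^{\sigma'}(\rho)\,C^{\sigma,\sigma'}_i(\rho)=D^{\sigma'}(\rho)\,r^{\sigma}_i+M\bigl((L^{\sigma}-\rho I)\adj(\rho MI-J^{\sigma'})r^{\sigma'}\bigr)_i .
\]

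\textbf{Main obstacle: divisibility by $M\rho$.} The first term is handled by part (a). For the second term, $M(L^{\sigma}-\rho I)_i$ is the $i$-th row of $J^{\sigma}-\rho MI$, which is an integer polynomial row. It therefore suffices to show that $\rho$ divides the second term, i.e.\ that it vanishes at $\rho=0$. I plan to use adjacency here:
\begin{enumerate}
\item The row $(L^{\sigma}-\rho I)_i$ evaluated at $\rho=0$ is $L^{\sigma}_i$, which has zero row sum.
\item For $j\neq i$, rows of $L^{\sigma}$ and $L^{\sigma'}$ coincide by adjacency, and $L^{\sigma'}\adj(-J^{\sigma'})=0$ because $J^{\sigma'}$ is singular. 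So only row $i$ matters.
\item The columns of $\adj(-J^{\sigma'})$ lie in the kernel of $J^{\sigma'}$, which contains $\mathbf{1}$. If this kernel is one-dimensional, each column is a multiple of $\mathbf{1}$, and row $i$ of $L^{\sigma}$ (zero row sum) annihilates it. If the kernel has dimension at least $2$, then the rank of $J^{\sigma'}$ is at most $n-2$ and $\adj(-J^{\sigma'})=0$. Either way the term vanishes at $0$.
\end{enumerate}
The result is an integer polynomial divisible by $\rho$. Dividing it and the factor $M$ (from the $M$ in front) out gives integer coefficients, so $\Delta\in\Z[\rho]$.
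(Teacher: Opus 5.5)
Your proof is correct and follows essentially the same route as the paper. Part (a) comes from the stochasticity of $P^{\sigma}$ (singularity of $J^{\sigma}$, spectrum of $(1+\rho)I-P^{\sigma}$) together with $D^{\sigma}$ being an integer polynomial in $M\rho$ with zero constant term, and part (b) is immediate from $D^{\sigma'}(\rho)/M\rho>0$ for $\rho>0$.

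Your extra argument that $\Delta\in\Z[\rho]$ is not needed for this lemma; the paper proves it separately in \Cref{th:cP is reduced family} via the same adjugate--kernel argument, which in fact does not use adjacency. Also, divisibility by $M\rho$ rather than just $\rho$ is properly justified by noting that the term is an integer polynomial in $M\rho$ vanishing at $0$.
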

\begin{proof}
   (a) Since $D^{\sigma}(\rho)=\det \left( \rho M I - J^{\sigma}\right) = M^n \det \left( \rho I - L^{\sigma}\right)$, the facts that $D^{\sigma}(0)=0$ and $D^{\sigma}(\rho)>0$ for $\rho>0$ follow from $P^{\sigma}$ being a stochastic matrix, so $L^{\sigma}$ admits $0$ as an eigenvalue and all its other eigenvalues have non-positive real parts. Besides, since $D^{\sigma}(\rho)$ is a polynomial in $M\rho$ with a constant term equal to $0$, we have $D^{\sigma}(\rho)/M\rho \in \Z[\rho]$.

(b) This follows from the fact that $D^{\sigma}(\rho)/M\rho >0$ for $\rho>0$, as shown above. 
\end{proof}
We now show that the family $\cP$ is indeed a reduced family of comparison polynomials.
\begin{theorem}\label{th:cP is reduced family}
The family $\cP$ defined in~\eqref{eq:definition reduced family} is a reduced family of comparison polynomials.
\end{theorem}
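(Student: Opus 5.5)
Three things must be checked. First, each element of $\cP$ is a polynomial with integer coefficients. Second, it is a comparison polynomial for the right ordered pair at the right state. Third, $\cP$ contains exactly one polynomial per ordered pair of adjacent policies.

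\emph{Polynomiality.} I will use Cramer's rule. Write $(\rho I - L^{\sigma'})^{-1} = M(\rho M I - J^{\sigma'})^{-1} = M\,\adj(\rho M I - J^{\sigma'})/D^{\sigma'}(\rho)$. Hence $D^{\sigma'}(\rho)\, v^{\sigma'}_\rho = M\,\adj(\rho M I - J^{\sigma'})\, r^{\sigma'}$, and the right-hand side has entries in $\Z[\rho]$: the adjugate entries are integer polynomials because $J^{\sigma'}$ and $M$ are integral, and the rewards are integers. Since $\sigma,\sigma'$ differ only at $i$, we have
\[
C^{\sigma,\sigma'}_i(\rho) = r^{\sigma}_i + \big((P^{\sigma}-I-\rho I) v^{\sigma'}_\rho\big)_i .
\]
Multiplying by $D^{\sigma'}(\rho)/(M\rho)$ gives
\[
\frac{D^{\sigma'}(\rho)}{M\rho} r^\sigma_i + \frac{1}{M\rho}\Big( (P^\sigma - I - \rho I)\, M \adj(\rho MI - J^{\sigma'}) r^{\sigma'}\Big)_i .
\]
The first term lies in $\Z[\rho]$ by Lemma~\ref{lem:zeros delta c}(a). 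For the second term, I will write $M(P^\sigma - I - \rho I) = J^\sigma - \rho M I$, so it equals $\frac{1}{M\rho}\big((J^\sigma - \rho MI)\adj(\rho MI - J^{\sigma'}) r^{\sigma'}\big)_i$. This is integral apart from the factor $1/(M\rho)$.

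\emph{Divisibility by $M\rho$ (the main obstacle).} I will show that the whole expression, viewed as a polynomial in $x = M\rho$, has zero constant term. Then division by $x = M\rho$ keeps integer coefficients, provided every term is a polynomial in $M\rho$ with integer coefficients, which holds here. Alternatively, I can prove directly that $\rho\,\Delta(\rho)$ is an integer polynomial vanishing at $0$ and use Gauss's lemma. The argument is as follows. Row $i$ of $J^\sigma$ differs from row $i$ of $J^{\sigma'}$, but the $i$-th coordinate equals
\[
\big((J^{\sigma'} - xI)\adj(xI - J^{\sigma'}) r^{\sigma'}\big)_i + \big((J^\sigma - J^{\sigma'})\adj(xI-J^{\sigma'}) r^{\sigma'}\big)_i .
\]
The first term is $-D^{\sigma'} r^{\sigma'}_i$, which is divisible by $x$. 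For the second term at $x=0$, I use that $\adj(-J^{\sigma'})$ has rank at most one when $0$ is an eigenvalue. If the kernel of $J^{\sigma'}$ is one-dimensional, that kernel is spanned by $\mathbf 1$ because $J^{\sigma'}\mathbf 1=0$, so $\adj(-J^{\sigma'}) = \mathbf 1 w^\top$ for some $w$. Then $(J^\sigma - J^{\sigma'})\mathbf 1 = 0$, since both matrices have zero row sums. If instead $0$ has geometric multiplicity at least two, the adjugate vanishes outright. In both cases the constant term is zero, giving $\Delta \in \Z[\rho]$.

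\emph{Comparison property and reducedness.} By Lemma~\ref{lem:comp vector adjacent policies prop}(c), $(v^{\sigma}_\rho - v^{\sigma'}_\rho)_i = R^\sigma_{ii}(\rho) C^{\sigma,\sigma'}_i(\rho)$, and $R^\sigma_{ii}(\rho)>0$ by Proposition~\ref{prop:resolvent properties - basic}(b). So $C^{\sigma,\sigma'}_i(\rho)\ge 0 \iff (v^{\sigma'}_\rho)_i\le (v^\sigma_\rho)_i$. Lemma~\ref{lem:zeros delta c}(b) transfers this sign to $\Delta$, so $\Delta$ is a comparison polynomial for $\sigma'$ against $\sigma$ at $i$. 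Finally, each ordered adjacent pair $(\sigma,\sigma')$ determines its state $i$ uniquely, and the formula assigns it exactly one polynomial. Hence $\cP$ is reduced; I read the family as indexed by pairs, so that coincident polynomials are not an issue.
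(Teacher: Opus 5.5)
Your proof is correct and follows the paper's route: use Cramer's rule to write $D^{\sigma'}(\rho)C^{\sigma,\sigma'}_i(\rho)$ as an integer polynomial in $x=M\rho$; show its constant term vanishes because the columns of $\adj(-J^{\sigma'})$ lie in $\R\mathbf 1$ (or are zero) while $J^\sigma$ has zero row sums; and transfer signs via Lemma~\ref{lem:comp vector adjacent policies prop}(c), Proposition~\ref{prop:resolvent properties - basic}(b) and Lemma~\ref{lem:zeros delta c}(b). Two small remarks. Your splitting $J^\sigma-xI=(J^{\sigma'}-xI)+(J^\sigma-J^{\sigma'})$ is harmless but unnecessary, since $J^\sigma\mathbf 1=0$ already gives $J^\sigma\adj(-J^{\sigma'})=0$ directly, which is exactly the paper's computation. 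The aside about Gauss's lemma is superfluous, because the real content is divisibility by $M$, and your $x=M\rho$ argument already supplies it.
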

\begin{proof}

The fact that the family $\cP$ is reduced follows readily from its definition, so we are left with proving that its elements are comparison polynomials, i.e., that they satisfy~\eqref{eq:def comp poly} and have integer coefficients. We first show that condition~\eqref{eq:def comp poly} holds. 
From item (c) of Lemma~\ref{lem:comp vector adjacent policies prop} and item (b) of Proposition~\ref{prop:resolvent properties - basic}, it follows that the sign of $C^{\sigma,\sigma'}_{i}(\rho)$ coincides with the sign of $(v^{\sigma}_{\rho} - v^{\sigma'}_{\rho})_i$. Besides, by Lemma~\ref{lem:zeros delta c}, the polynomial $D^{\sigma'}(\rho)C^{\sigma,\sigma'}_{i}(\rho)/M \rho$ has the same sign as $C^{\sigma,\sigma'}_{i}(\rho)$, so  condition~\eqref{eq:def comp poly} holds for the polynomials in $\cP$. 
It remains to show that the elements of $\cP$ have integer coefficients. 
Since $v^{\sigma'}_{\rho} = R^{\sigma'}(\rho)r^{\sigma'} = \left(\rho I - L^{\sigma'}\right)^{-1}r^{\sigma'}$,
using Cramer's formula for the inverse of a matrix and Laplace's cofactor expansion, it follows that 
\begin{equation}\label{ValueStoch}
(v_{\rho}^{\sigma'})_i= M \frac{\sum_{j=1}^{n} \cof_{ji}  (\rho M I - J^{\sigma'}) r_j^{\sigma'} }{D^{\sigma'}(\rho)}
\end{equation} 
for any state $i$. Therefore $D^{\sigma'}(\rho)C^{\sigma,\sigma'}_{i}(\rho)$ is equal to the following expression:
\begin{equation}\label{eq:Delta decomposition 0}
  D^{\sigma'}(\rho)r^{\sigma}_{i} + \sum_{h,j=1}^{n} M P^{\sigma}_{ih}\cof_{jh}(\rho M I - J^{\sigma'})r^{\sigma'}_{j} - M (1+\rho)\sum_{j=1}^{n} \cof_{ji}(\rho M I - J^{\sigma'})r^{\sigma'}_{j} \; .
\end{equation}
Every determinant and cofactor appearing in the above equation is a polynomial in $M\rho$ with integer coefficients because $J^{\sigma'}$ has integer coefficients. We are left with showing that its constant coefficient (as a polynomial in $\rho$) is zero.
The first term vanishes because $D^{\sigma'}(0)=0$. 
We get $\cof_{jh}(0 M I - J^{\sigma'}) = \cof_{jh}(-J^{\sigma'})$.
Overall, for $\rho=0$ in \eqref{eq:Delta decomposition 0} we want to show that $\displaystyle \sum_{j=1}^{n} r^{\sigma'}_{j}(\sum_{h=1}^{n} M P^{\sigma}_{ih}\cof_{jh}(-J^{\sigma'}) - M \cof_{ji}(-J^{\sigma'}))$ is equal to $0$. 
We claim that for any $h,h' \in [n]$ we have $\cof_{jh}(-J^{\sigma'})=\cof_{jh'}(-J^{\sigma'})$, i.e., the term $\cof_{jh}(-J^{\sigma'})$ does not depend on $h \in [n]$. 
If the rank of $J^{\sigma'}$ is at most $n-2$ then $\cof_{jh}(-J^{\sigma'})=0$ for any pair $(j,h) \in [n]$. 
Otherwise the rank of $J^{\sigma'}$ is $n-1$ and $\det(-J^{\sigma'})=0$. Now  $(\cof_{jh}(-J^{\sigma'}))_{h \in [n]}$ is the $j$-th column of the adjugate matrix $\adj(-J^{\sigma'}).$ From $(-J^{\sigma'}) \adj(-J^{\sigma'})= \det(-J^{\sigma'})I$ we get that the $j$-th column of the adjugate matrix belongs to $\ker(-J^{\sigma'})$. But if the rank of $J^{\sigma'}$ is $n-1$ then $\ker(-J^{\sigma'}) = \R e$ with $e=(1,...,1) \in \R^{n}$. Hence $(\cof_{jh}(-J^{\sigma'}))_{h \in [n]}$ is proportional to the vector $e$, i.e., $\cof_{jh}(-J^{\sigma'})$ does not depend on $h$. 
From there we get, using the stochasticity of $P^{\sigma}$, that $\sum_{h=1}^{n} M P^{\sigma}_{ih}\cof_{jh}(-J^{\sigma'}) - M \cof_{ji}(-J^{\sigma'})=0$ so that we have $\displaystyle \sum_{j=1}^{n} r^{\sigma'}_{j}(\sum_{h=1}^{n} M P^{\sigma}_{ih}\cof_{jh}(-J^{\sigma'}) - M \cof_{ji}(-J^{\sigma'})) = \sum_{j=1}^{n} r^{\sigma'}_{j} \cdot 0 = 0$. This shows that $D^{\sigma'}(\rho)C^{\sigma,\sigma'}(\rho)/M\rho \in Z[\rho]$ so that the elements of our reduced comparison family are indeed polynomials with integer coefficients. 

\end{proof}
We are now left with bounding the degree and height of the family $\cP$. We do so in the next section.
\subsubsection{Bounding the degree and height of the reduced family.}\label{sec:bounding height degree}
Bounding the degree of elements in $\cP$ is direct, but bounding their height requires a more careful analysis.
Our bounds are summarized in the following theorem.
\begin{theorem}\label{th:boundDeltaCoeff}
Let $\Delta (\rho) \coloneqq D^{\sigma'}(\rho) (C^{\sigma,\sigma'}(\rho))_i/M\rho$ be a polynomial in $\cP$. Then, $\Delta(\rho)$ is of the form $\sum_{k=0}^{n-1} b_k \rho^k$, where $b_k \in \Z$ and 
\begin{equation}\label{main_ineq}
|b_k| \leq 2 W M^{n-1} {n+1 \choose k+2}  \; .
\end{equation} 
\end{theorem}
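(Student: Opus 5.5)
Since $\sigma$ and $\sigma'$ are adjacent at $i$, I will avoid the three-term expansion \eqref{eq:Delta decomposition 0}. Instead I rewrite $C^{\sigma,\sigma'}_i$ so that only the data of $\sigma$ at row $i$ enter. Write $v=v^{\sigma'}_\rho$. Because $\sigma'$ satisfies $r^{\sigma'}+(L^{\sigma'}-\rho I)v=0$, we have
\[
C^{\sigma,\sigma'}_i(\rho)=(r^\sigma_i-r^{\sigma'}_i)+\sum_h (P^\sigma_{ih}-P^{\sigma'}_{ih})v_h .
\]
Multiplying by $D^{\sigma'}(\rho)/M\rho$ reduces the problem to bounding the coefficients of $D^{\sigma'}(\rho)/M\rho$ and of $D^{\sigma'}(\rho)v_h/M\rho$. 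The second quantity is a combination of cofactors of $\rho MI-J^{\sigma'}$, by \eqref{ValueStoch}. The degree bound is immediate: $D^{\sigma'}$ has degree $n$, and each cofactor has degree at most $n-1$ and vanishes at $0$ after the combination. The work is in the heights.

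\textbf{Key tool: the Matrix Forest Theorem.} The matrix $-J^{\sigma'}=M(I-P^{\sigma'})$ is the Laplacian of a weighted digraph with nonnegative integer weights $MP^{\sigma'}_{jh}$, and out-weight sum $M$ at each vertex minus its loop weight. By the all-minors forest theorem, the coefficient of $(M\rho)^{k}$ in $\det(\rho MI-J^{\sigma'})$ is the total weight of spanning rooted forests with exactly $k$ trees. Similarly, the coefficient of $(M\rho)^k$ in $\cof_{jh}(\rho MI-J^{\sigma'})$ is the total weight of forests with $k+1$ trees in which $h$ is a root and $j$ lies in the tree of $h$ (up to the orientation convention). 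A forest with $k$ trees has $n-k$ edges, and its vertex sets of non-roots have fixed out-weight. Summing over edge choices vertex by vertex gives weight at most $M^{n-k}$ times the number of ways to choose the root set, so $[(M\rho)^k]D^{\sigma'}\le \binom{n}{k}M^{n-k}$. Hence the coefficient of $\rho^{k}$ in $D^{\sigma'}/M\rho$ is at most $\binom{n}{k+1}M^{n-1}$. An analogous count bounds the cofactor coefficients by $\binom{n-1}{k}M^{n-1}$-type quantities, with the extra factor $M$ from \eqref{ValueStoch} absorbed by the $1/M\rho$.

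\textbf{Combining the terms.} The difference $P^\sigma_{i\cdot}-P^{\sigma'}_{i\cdot}$ has $\ell_1$ norm at most $2$, and $|r|\le W$. The crucial gain comes from using forests directly. I would express $D^{\sigma'}(\rho)\sum_h q_h v_h/M\rho$, with $q=P^\sigma_{i\cdot}$ or $P^{\sigma'}_{i\cdot}$, as a weighted sum over forests of $r^{\sigma'}_j$. Replacing row $i$ of the graph by the distribution $q$ then corresponds to forests of a modified graph. The contributions of $\sigma$ and of $\sigma'$ each have the form "$W$ times a forest count in an $(n+1)$-vertex graph", where an auxiliary vertex carries the reward or absorbs the $\rho$. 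This explains the $\binom{n+1}{k+2}$: the number of root sets of size $k+2$ among $n+1$ vertices, with each forest weight at most $M^{n-1}$. The factor $2$ comes from the two policies, $\sigma$ and $\sigma'$.

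\textbf{Main obstacle.} The main difficulty is making this identification exact, and in particular handling the cancellation of the constant term so that the division by $\rho$ costs nothing. I would construct the augmented graph with a sink vertex $0$ reached from every vertex with weight $M\rho$. In it, $D^{\sigma'}$ counts spanning trees rooted at $0$. Splitting by the number $k+1$ of edges into $0$ gives the coefficients. $C_i$ then becomes a signed comparison of trees in which $i$ uses its $\sigma$-edges versus its $\sigma'$-edges. I would first try to bound the positive and negative parts separately, each by $WM^{n-1}\binom{n+1}{k+2}$, via a crude injection of forests into choices of root sets.
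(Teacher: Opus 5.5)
There is a gap: the proposal never proves the height bound, which is the whole content of the statement. Your identity $C^{\sigma,\sigma'}_i(\rho)=(r^\sigma_i-r^{\sigma'}_i)+\sum_h(P^\sigma_{ih}-P^{\sigma'}_{ih})(v^{\sigma'}_\rho)_h$ is correct, and so is your degree count. But the cofactor estimate is left at ``${n-1\choose k}M^{n-1}$-type quantities'', and the combination step is deferred to an augmented-graph ``crude injection'' that you only say you would try.

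The cofactor-by-cofactor route cannot give the stated constant. Bounding the coefficient of $\rho^m$ in each $\cof_{jh}(\rho MI-J^{\sigma'})$ by ${n-1\choose m}M^{n-1}$ and summing over the $n$ values of $j$ with $|r^{\sigma'}_j|\le W$ gives $n{n-1\choose m}=(m+1){n\choose m+1}$. That loses a factor $m+1$.

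The missing idea is to sum over $j$ inside the forest expansion. In $D^{\sigma'}(\rho)(v^{\sigma'}_\rho)_h=\sum_m M^{m+1}\rho^m\sum_j q^{j,h}_{n-m-1}r^{\sigma'}_j$, every spanning forest with $n-m-1$ edges is counted exactly once, namely for $j$ the root of the tree containing $h$. Hence $\sum_j q^{j,h}_{n-m-1}=c_{n-m-1}\le{n\choose m+1}M^{n-m-1}$, and the coefficient of $\rho^m$ in $D^{\sigma'}(v^{\sigma'}_\rho)_h$ is at most $WM^n{n\choose m+1}$. This is exactly \Cref{Prop_adj_bound}.

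With that bound, your decomposition finishes in one line, and no augmented graph or injection is needed. $Mb_k$ is the coefficient of $\rho^{k+1}$ in $D^{\sigma'}(r^\sigma_i-r^{\sigma'}_i)+\sum_h(P^\sigma_{ih}-P^{\sigma'}_{ih})D^{\sigma'}(v^{\sigma'}_\rho)_h$. Using $|r^\sigma_i-r^{\sigma'}_i|\le 2W$, $\sum_h|P^\sigma_{ih}-P^{\sigma'}_{ih}|\le 2$, \Cref{Prop_det_bound} and \Cref{Prop_adj_bound}, this coefficient is at most $2WM^n{n\choose k+1}+2WM^n{n\choose k+2}=2WM^n{n+1\choose k+2}$. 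This is the paper's triangle-inequality argument, with the four terms of \eqref{eq:Delta decomposition} regrouped into two differences.

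The ``main obstacle'' you name is not one. Since $k\ge 0$, only coefficients of $\rho^{k+1}$ in $M\rho\Delta$ are bounded, so the constant term never enters. Its vanishing, and $b_k\in\Z$, were already established in \Cref{th:cP is reduced family}.

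Your $(n+1)$-vertex heuristic can be made rigorous. Adjoin a transient state with reward $r^\sigma_i$ and transition row $P^\sigma_i$. This multiplies the determinant by $M(1+\rho)$ and gives the new state value $(r^\sigma_i+P^\sigma_i v^{\sigma'}_\rho)/(1+\rho)$. Then \Cref{Prop_adj_bound} on $n+1$ states bounds each policy's part directly by $WM^n{n+1\choose k+2}$, which explains the binomial without Pascal's rule. But this too rests on the summed forest count, not on per-cofactor bounds.
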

To prove \Cref{th:boundDeltaCoeff}, we use variations on Tutte matrix tree theorem~\cite{Brualdi1991}, which provide combinatorial formulae for the coefficients of the characteristic polynomial of a Laplacian matrix.

We use the following graph theoretic notions.
A {\em spanning} subgraph of a digraph $\G$ is a subgraph of $\G$ which has the same vertex set as $\G$. A {\em directed tree} is a weakly connected digraph (i.e., its corresponding undirected graph is connected) in which one vertex, called {\em terminal}, has outdegree zero and the remaining vertices have outdegree one. A {\em directed forest} is a digraph all of whose weak components (i.e., maximal weakly connected subgraphs) are directed trees. The terminal vertices of these trees are the terminal vertices of the directed forest. When the original digraph is weighted, the weight of a forest is defined as the product of the weights of its edges. 

Henceforth, we denote by $\G^{\sigma}$ the digraph associated with the integer matrix $MP^{\sigma}$, by $c_k$ the sum of weights of all spanning directed forests of $\G^{\sigma}$ with $k$ edges, and by $q^{j,i}_k$ the sum of weights of all spanning directed forests of $\G^{\sigma}$ with $k$ edges in which state $i$ belongs to the tree rooted at $j$.
Note that $c_k$ and $q^{j,i}_k$ are integers because all the weights in $\G^{\sigma}$ are integers.
\begin{proposition}\label{Prop_det_bound}
The characteristic polynomial $D^{\sigma}(\rho)$ is of the form $\sum_{k=0}^n a_k \rho^k$ with $a_0=0$, where $a_k \in \Z$ and $|a_k| \leq  M^{n} {n \choose k}$ for any $k \in \{0,...,n\}$. 
\end{proposition}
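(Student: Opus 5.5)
We plan to use the all-minors (matrix forest) version of the matrix tree theorem, applied to the weighted digraph $\G^{\sigma}$ whose adjacency matrix is $MP^{\sigma}$. Write $A \coloneqq MP^{\sigma}$ and $\Lambda \coloneqq \mathrm{diag}(A e) - A$ for its (out-degree) Laplacian. Since $P^{\sigma}$ is stochastic, $Ae = Me$, so $\mathrm{diag}(Ae) = MI$ and $\Lambda = MI - MP^{\sigma} = -J^{\sigma}$. Hence
\[
D^{\sigma}(\rho) = \det(\rho M I + \Lambda).
\]
Expanding this determinant in powers of $\rho M$ (the standard expansion of $\det(xI + \Lambda)$) gives
\[
D^{\sigma}(\rho) = \sum_{S \subseteq [n]} (\rho M)^{|S|} \det \Lambda[\bar S],
\]
where $\Lambda[\bar S]$ is the principal submatrix of $\Lambda$ indexed by the complement $\bar S$ of $S$.

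The key step is the matrix forest theorem. For a Laplacian of this (row-sum) form, it says that $\det \Lambda[\bar S]$ equals the total weight of the spanning directed forests of $\G^{\sigma}$ whose terminal vertices are exactly the elements of $S$, each non-terminal vertex having one outgoing edge. Such a forest has $n-|S|$ edges. Grouping by $k = |S|$ gives
\[
a_k = M^k\, c_{n-k},
\]
where $c_{n-k}$ is the total weight of spanning directed forests with $n-k$ edges. This yields the following three conclusions.
\begin{itemize}
\item $a_0 = c_n = 0$, because a forest on $n$ vertices has at most $n-1$ edges. Equivalently, $\det\Lambda = 0$ since $\Lambda e = 0$, consistent with Lemma~\ref{lem:zeros delta c}.
\item $a_k$ is an integer, because the weights of $\G^{\sigma}$ are the integers $MP^{\sigma}_{ij}$.
\item $a_k \geq 0$, so we only need an upper bound. (Self-loops contribute nothing, since they cancel in $\Lambda$.)
\end{itemize}

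It remains to bound $c_{n-k}$ for a fixed terminal set $S$ with $|S| = k$. A forest with terminal set $S$ is exactly a choice of one outgoing edge $i \to j$ for each $i \notin S$ such that the resulting functional graph is acyclic. Dropping the acyclicity constraint can only increase the weight sum, so it is at most
\[
\prod_{i \notin S} \sum_{j} M P^{\sigma}_{ij} = M^{n-k}.
\]
Summing over the ${n \choose k}$ choices of $S$ gives $c_{n-k} \le {n\choose k} M^{n-k}$. Therefore
\[
|a_k| \le M^k \cdot {n\choose k} M^{n-k} = M^n {n\choose k}.
\]

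The main obstacle we expect is the bookkeeping between the two sign conventions. We need to make sure that $\det\Lambda[\bar S]$ is taken with $\Lambda = -J^{\sigma}$ (diagonal $M(1-P_{ii})$, off-diagonal $-MP_{ij}$), so that the forest-theorem sign is $+$. We also need the forest orientation (edges pointing toward terminals, outdegree one) to match the paper's definition. As a sanity check of the forest identity, one can use the case $S = [n]\setminus\{i\}$, where it reduces to the diagonal entry $\Lambda_{ii} = M(1 - P^{\sigma}_{ii})$.
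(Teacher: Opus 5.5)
Your proposal is correct and follows essentially the same route as the paper. You identify the coefficient of $\rho^k$ as $M^k c_{n-k}$ via the matrix forest theorem, and you bound $c_{n-k}$ by summing, over the ${n \choose k}$ terminal sets $S$, the product $\prod_{i\notin S}\sum_j MP^{\sigma}_{ij} \le M^{n-k}$. The only cosmetic difference is that the paper quotes the forest expansion of $\det(\rho M I - J^{\sigma})$ directly from Chebotarev's Proposition~2, whereas you rederive it from the principal-minor expansion of $\det(xI+\Lambda)$ combined with the all-minors matrix tree theorem.
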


\begin{proof}
The fact that $a_0=0$ follows from $D^{\sigma}(0)=0$ (\Cref{lem:zeros delta c}).
By~\cite[Proposition~2]{Chebotarev2002}, we have $\det(\rho M I - J^{\sigma})= \sum_{k=0}^n c_{n-k} M^k \rho^k$. Since the spanning directed forests with $k$ edges are the spanning directed forests with $n - k$ weak components, $c_k$ is also the sum of weights of all spanning directed forests with $n - k$ weak components. As mentioned above, with each spanning directed forest with $n - k$ weak components is associated a unique subset of $\{ 1, \dots , n \}$ with $n - k$ elements which is composed of the terminal vertices of its weak components (directed trees). Given such a set $I \subset \{ 1, \dots , n \}$, the sum of weights of all spanning directed forests with $k$ edges and terminal set $I$ is bounded by $M^k$. To see this, it is enough to notice that the weight of each of such spanning directed forests is bounded by a term of the expansion of 
\[
\prod_{i \notin I} \sum_{j \neq i} M P^{\sigma}_{ij} \; ,  
\]
and that no term needs to be used for more than one spanning directed forest. Indeed, each term of this expansion corresponds to a selection of an outgoing edge for each vertex that does not belong to $I$. Besides, the quantity above is less than or equal to $M^k$ due to the fact that $P^{\sigma}$ is (row) stochastic. Thus, we have
\begin{equation}\label{bound_inforests_k_edges}
c_k \leq \sum_{I \subseteq  \{ 1, \dots , n \}, |I|=n-k} \left( \prod_{i \notin I} \sum_{j \neq i} M P^{\sigma}_{ij} \right) \leq {n \choose k} M^k \; ,
\end{equation}
which completes the proof.

\end{proof}
\begin{proposition}\label{Prop_adj_bound}
For each state $i \in [n]$, the polynomial $D^{\sigma}(\rho) (v_{\rho}^{\sigma})_i$ is of the form $\sum_{k=0}^{n-1} a_k \rho^{k}$, where $a_k \in \Z$ and $|a_k| \leq W M^{n} {n \choose k+1}$ for any $k \in \{0,...,n-1\}$.
\end{proposition}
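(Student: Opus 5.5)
We write $D^{\sigma}(\rho)(v^{\sigma}_{\rho})_i$ through cofactors and then apply the all-minors (matrix forest) version of the theorem used in \Cref{Prop_det_bound}. By \eqref{ValueStoch},
\[
D^{\sigma}(\rho)(v^{\sigma}_{\rho})_i = M\sum_{j=1}^n \cof_{ji}(\rho M I - J^{\sigma})\, r^{\sigma}_j .
\]
So it is enough to expand each cofactor $\cof_{ji}(\rho MI - J^{\sigma})$ as a polynomial in $\rho$ and bound its coefficients combinatorially. Integrality is immediate, since $J^{\sigma}$ has integer entries. The degree is at most $n-1$, because a cofactor is an $(n-1)\times(n-1)$ minor.

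For the expansion we invoke the matrix forest theorem of Chebotarev and Shamis (the companion of \cite[Proposition~2]{Chebotarev2002}). It states that the $(i,j)$ entry of $\adj(\rho M I - J^{\sigma})$ equals $\sum_{k} q^{j,i}_{k}\,(M\rho)^{n-1-k}$, where $q^{j,i}_k$ is the forest count defined above: the total weight of spanning forests of $\G^{\sigma}$ with $k$ edges in which $i$ lies in the tree whose terminal is $j$. The orientation convention needs care, since edges point along the direction of transition. Taking this as given, the coefficient of $\rho^{m}$ in $D^{\sigma}(\rho)(v^{\sigma}_\rho)_i$ equals
\[
a_m = M^{m+1}\sum_{j=1}^n q^{j,i}_{n-1-m}\, r^{\sigma}_j .
\]
Hence
\[
|a_m| \le W M^{m+1}\sum_j q^{j,i}_{n-1-m},
\]
and the sum $\sum_j q^{j,i}_{k}$ is just the total weight of all spanning forests with $k$ edges, namely $c_k$. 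Each forest is counted exactly once, because $i$ lies in exactly one tree, which has exactly one terminal. With $k=n-1-m$, the bound \eqref{bound_inforests_k_edges} gives $c_k\le \binom{n}{k}M^{k}$. Therefore
\[
|a_m|\le W M^{m+1}\binom{n}{n-1-m}M^{n-1-m} = W M^{n}\binom{n}{m+1},
\]
which is exactly the claim.

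The main obstacle is getting the forest expansion of the off-diagonal cofactor right, including its sign and orientation. If I cannot cite it directly, I would derive it as follows. Remove row $j$ and column $i$, expand the determinant of $\rho MI - J^{\sigma}$ by multilinearity in the diagonal, and use the all-minors matrix tree theorem on the Laplacian $-J^{\sigma}$. That theorem says a minor with deleted row set and column set counts forests rooted at the deleted vertices, with each vertex of the column set connected to a matched vertex of the row set, up to a sign. One then checks that all the signs are $+1$. As a fallback, the sign is irrelevant for the absolute bound; we only need $|\text{coefficient}|\le \sum_j q^{j,i}_k$ counted without multiplicity, and this follows from the unsigned version.
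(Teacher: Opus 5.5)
Your proposal is correct and is essentially the paper's proof. The paper likewise writes $D^{\sigma}(\rho)(v^{\sigma}_{\rho})_i$ via \eqref{ValueStoch} and expands $\cof_{ji}(\rho M I - J^{\sigma})=\sum_{k=0}^{n-1} q^{j,i}_{n-k-1}M^{k}\rho^{k}$ by the matrix forest theorem (\cite[Proposition~3]{Chebotarev2002}); it then identifies $\sum_j q^{j,i}_{n-k-1}$ with $c_{n-k-1}$ and concludes with \eqref{bound_inforests_k_edges} and $|r^{\sigma}_j|\le W$, exactly as you do, including your orientation convention ($i$ in the tree rooted at $j$).
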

\begin{proof}
By~\cite[Proposition~3]{Chebotarev2002}, we have $ \cof_{ji} (\rho M I - J^{\sigma}) = \sum_{k=0}^{n-1} q^{j,i}_{n-k-1} M^{k} \rho^{k}$. Then, using~\eqref{ValueStoch} it follows that
\begin{equation}\label{Delta_v_i}
D^{\sigma}(\rho) (v_{\rho}^{\sigma})_i  =  M (\sum_{j=1}^{n} (\sum_{k=0}^{n-1} q^{j,i}_{n-k-1} M^{k} \rho^{k}) r_j^{\sigma}) 
 = \sum_{k=0}^{n-1} M^{k+1} \rho^{k} (\sum_{j=1}^{n} q^{j,i}_{n-k-1} r_j^{\sigma}) \; .
\end{equation}
Note that, independently of the choice of $i$, $\sum_{j=1}^{n} q^{j,i}_{n-k-1}$ is equal to the quantity $c_{n-k-1}$ defined above, as the sum of weights of all spanning directed forests of $\G^{\sigma}$ with $n-k-1$ edges, see~\cite[Proposition~1]{Chebotarev2002}. Then, by~\eqref{bound_inforests_k_edges} we have $\sum_{j=1}^{n} q^{j,i}_{n-k-1} \leq {n \choose n-k-1} M^{n-k-1} = {n \choose k+1} M^{n-k-1}$. Since $| r_j^{\sigma} | \leq W$ for each $j \in [n]$, this completes the proof.

\end{proof}
We are now ready to provide the proof of \Cref{th:boundDeltaCoeff}.
\begin{proof}[Proof of \Cref{th:boundDeltaCoeff}.]
Let $\Delta (\rho)= D^{\sigma'}(\rho)C^{\sigma,\sigma'}_{i}(\rho)/M\rho$ be an element of $\cP$. 
By~\eqref{eq:definition comparison vector}, we have
\begin{equation}\label{eq:Delta decomposition}
    M \rho \Delta(\rho) = D^{\sigma'}(\rho)r^{\sigma}_{i} + \sum_{h=1}^{n} P^{\sigma}_{ih} D^{\sigma'}(\rho) (v^{\sigma'}_{\rho})_h - D^{\sigma'}(\rho)(v^{\sigma'}_{\rho})_i - \rho D^{\sigma'}(\rho)(v^{\sigma'}_{\rho})_i \; .
\end{equation}
Since the right-hand side of~\eqref{eq:Delta decomposition} has degree at most $n$ by~\Cref{Prop_adj_bound}, $\Delta(\rho)$ is of degree at most $n-1$, i.e., it is of the form $\sum_{k=0}^{n-1} b_k \rho^{k}$.
By Propositions~\ref{Prop_det_bound} and~\ref{Prop_adj_bound}, the total contribution of the terms in~\eqref{eq:Delta decomposition} to the $k+1$-th power of $\rho$ (which corresponds to $b_k$) is bounded by
\[W M^n {n \choose k+1} + W M^{n} {n \choose k+2 } + W M^n {n \choose k+2} + W M^n {n \choose k+1} = 2 W M^n {n +1 \choose k+2}\]
where the last equality follows from Pascal's identity: ${n \choose k+1} + {n \choose k+2} = {n+1 \choose k+2}$. It follows that $M|b_k| \leq 2 W M^n {n +1 \choose k+2}$, which concludes the proof.

\end{proof}
\subsection{Proof of \Cref{th:reduced family bound - tightness}.}
\begin{proof}[Proof of \Cref{th:reduced family bound - tightness}.]
Consider the following MDP instance, assuming $M$, $W$ and $n$ satisfy the assumption of \Cref{th:reduced family bound - tightness}. There are $n$ states labeled $s_{0},...,s_{n-1}$ and two actions $a_0$ and $a_1$. The reward is $0$ in states $s_1,\ldots ,s_{n-2}$, and state $s_{n-1}$ is absorbing with reward $-1$. When choosing action $a_0$ in state $s_0$, the next state is $s_0$ with probability $1$ and the reward is $0$. When choosing action $a_1$ in state $s_0$, the reward is $2W-1$ and the next state is either $s_1$ with probability $1/M$ or $s_0$ with probability $1-1/M$. For $i = 1,\ldots ,n-2$, there is no action to choose in state $s_{i}$, where the transition toward $s_{i+1}$ has probability $1/M$ and the transition toward $s_0$ has probability $1-1/M$. 

The stationary deterministic policies are uniquely characterized by the action ($a_0$ or $a_1$) chosen at state $s_0$. Two entries of the value function when action $a_0$ is chosen are: $(v^{a_0}_{\rho})_{s_0} = 0$ and $(v^{a_0}_{\rho})_{s_1} = - 1/M^{n-2}\rho(1+\rho)^{n-2}$. 
Hence, $C^{a_1,a_0}_{s_0}(\rho) = (2W-1) - \frac{1}{M^{n-1}\rho (1+\rho)^{n-2}}$, and the Blackwell threshold $\rhobw$ is the unique positive root of the polynomial $p(\rho)=(2W-1)M^{n-1}\rho (1+\rho)^{n-2}-1$ (which is increasing on $(0,+\infty)$). It is straightforward to see that $p$ has degree $n-1$ and height $(2W-1)M^{n-1} {n-2 \choose \lfloor (n-2)/2 \rfloor}$.
To prove that $p$ is the minimal polynomial of $\rhobw$, it suffices to show that $p$ is irreducible. We proceed by showing that its reciprocal polynomial, $\tilde{p}(x) = x^{n-1}p(1/x) = (2W-1) M^{n-1} (x+1)^{n-2} - x^{n-1}$, is irreducible.  Recall that we have taken $2W-1$ prime such that $2W-1$ does not divide $M$. Therefore, the leading coefficient of $\tilde{p}$ (equal to $-1$) is not divisible by $2W-1$, every other coefficient is divisible by $2W-1$, and the constant coefficient of $\tilde{p}$ (equal to $(2W-1)M^{n-1}$) is not divisible by $(2W-1)^2$. Thus, we can apply Eisenstein's criterion with the prime $2W-1$ to conclude that $\tilde{p}$ (and therefore $p$) is irreducible. 

Finally, while the rewards in the MDP instance constructed above are between $-1$ and $2W-1$, we can always subtract $W-1$ from all rewards, without changing the Blackwell thresholds, to obtain an MDP with integer rewards between $-W$ and $W$. 
\end{proof}
\section{Bounds on the Blackwell thresholds.}\label{sec:separation blackwell}
We now present our bounds on the Blackwell thresholds, based on the analysis from the previous section.
\subsection{Root separation methods.}\label{sec:root separation}
\paragraph{Results based on Cauchy's bound.}
A classical and simple example of root separation is the following Cauchy's bound, which separates the positive roots of a polynomial from $0$.
\begin{lemma}[\cite{cauchy1837calcul}]\label{Lemma_Cauchy_Bound}
If $p(x)=\sum^{h}_{k=0} a_k x^k$ is a nonzero integer polynomial, any positive root $z>0$ of $p(x)$ satisfies $\frac{1}{2}\frac{1}{H(p)}\leq z.$
\end{lemma}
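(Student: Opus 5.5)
We prove Cauchy's bound directly, by contradiction, on a suitably normalized polynomial.

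First we reduce to a nonzero constant term. Since $p$ is nonzero, let $m$ be the smallest index with $a_m \neq 0$, and write $p(x) = x^m q(x)$ with $q(x)=\sum_{k=0}^{h-m} a_{k+m} x^k$. Any positive root $z$ of $p$ is a root of $q$, because $z^m \neq 0$. Moreover $H(q)=H(p)$, since $q$ has the same nonzero coefficients as $p$. So we may assume $a_0 \neq 0$. Then $|a_0|\geq 1$ because $a_0$ is an integer; this integrality is the key input.

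Now suppose $z>0$ is a root and $z < \frac{1}{2H(p)}$. In particular $z<1/2$, because $H(p)\geq 1$. From $p(z)=0$ we get $a_0 = -\sum_{k\geq 1} a_k z^k$, and hence
\[
1 \leq |a_0| \leq H(p)\sum_{k=1}^{h} z^k < H(p)\,\frac{z}{1-z} \leq 2H(p)\, z < 1 ,
\]
using $1/(1-z)\leq 2$ for $z\leq 1/2$. This is a contradiction, so $z \geq \frac{1}{2H(p)}$.

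An equivalent route is to apply the classical upper Cauchy bound, $|w|\leq 1+\max_k |a_k/a_{\mathrm{lead}}|$, to the reciprocal polynomial. The direct argument above avoids that detour. There is no real obstacle here; the only point needing care is factoring out the power of $x$ so that the constant coefficient is a nonzero integer.
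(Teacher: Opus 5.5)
Your proof is correct. The paper gives no proof of this lemma and only cites Cauchy. Your argument (factor out the lowest power of $x$, use integrality to get a constant coefficient of absolute value at least $1$, then dominate the remaining terms by a geometric series with ratio below $1/2$) is exactly the one the paper uses for its refinement, Lemma~\ref{lem:ZerosDelta reduced}.
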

Cauchy's bound is used in~\cite{andersson2009complexity}, whose authors construct a polynomial in the variable $1-\alpha$ and separate its roots from $0$. Given a polynomial $p(x)$, to separate its positive roots from $1$ we can apply Cauchy's bound to the polynomial $q(y)$ defined as $y \mapsto p(1-y)$. However, doing so results in a binomial blow-up of the coefficients and therefore in the separation bounds. This is made precise in the next lemma.
\begin{lemma}\label{lem:cauchy from 1}
Let $p(x)=\sum^{h}_{k=0} a_k x^k$ be a nonzero integer polynomial. Then, $p(x)$ has no roots in $\left(1-\frac{1}{2 H(p){h+1\choose \lfloor (h+1)/2\rfloor}},1\right)$.
\end{lemma}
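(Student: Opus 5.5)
The plan is to reduce to Cauchy's bound (Lemma~\ref{Lemma_Cauchy_Bound}) by a change of variable that moves the point $1$ to $0$. Set $y = 1-x$ and consider
\[
q(y) \coloneqq p(1-y) = \sum_{k=0}^{h} a_k (1-y)^k = \sum_{j=0}^{h} \Big( \sum_{k=j}^{h} (-1)^j {k \choose j} a_k \Big) y^j .
\]
Since $y \mapsto 1-y$ is invertible, $q$ is a nonzero polynomial with integer coefficients. Moreover, $x$ is a root of $p$ in $(1-\varepsilon,1)$ if and only if $y=1-x$ is a root of $q$ in $(0,\varepsilon)$. It therefore suffices to show that $q$ has no positive root smaller than $\varepsilon \coloneqq \bigl(2H(p){h+1 \choose \lfloor (h+1)/2 \rfloor}\bigr)^{-1}$.

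By Lemma~\ref{Lemma_Cauchy_Bound}, every positive root $z$ of $q$ satisfies $z \geq 1/(2H(q))$. The statement thus follows once we establish the height estimate
\[
H(q) \leq H(p){h+1 \choose \lfloor (h+1)/2 \rfloor}.
\]

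The one step needing care is this estimate, which is where the binomial blow-up comes from. For the coefficient of $y^j$ we have
\[
\Big|\sum_{k=j}^h (-1)^j {k \choose j} a_k\Big| \leq H(p) \sum_{k=j}^{h} {k \choose j} = H(p){h+1 \choose j+1},
\]
using the hockey-stick identity $\sum_{k=j}^{h}{k\choose j} = {h+1\choose j+1}$. Finally, ${h+1 \choose j+1} \leq {h+1 \choose \lfloor (h+1)/2\rfloor}$ because the central binomial coefficient is the largest in its row. This gives $H(q) \leq H(p){h+1\choose \lfloor (h+1)/2\rfloor}$.

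Combining the pieces, any positive root $z$ of $q$ satisfies $z \geq 1/(2H(q)) \geq \varepsilon$. Hence $q$ has no roots in $(0,\varepsilon)$, and so $p$ has no roots in $(1-\varepsilon,1)$. The case where $1$ itself is a root of $p$ is irrelevant, since the interval is open at $1$.
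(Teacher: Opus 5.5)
Your proposal is correct and follows the paper's proof essentially verbatim. You substitute $q(y)=p(1-y)$, bound the coefficient of $y^j$ by $H(p)\binom{h+1}{j+1}$ via the hockey-stick identity, dominate this by the central binomial coefficient, and apply Cauchy's bound (Lemma~\ref{Lemma_Cauchy_Bound}) to $q$. You also make explicit two points the paper leaves implicit: $q$ is nonzero because the substitution is invertible, and $x\mapsto 1-x$ maps roots in $(1-\varepsilon,1)$ to roots in $(0,\varepsilon)$.
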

\begin{proof}
This is a direct application of Cauchy's bound to the polynomial $q(y) = p(1-y)$. By definition $q(y) = \sum_{i=0}^{h} b_i y_i$ with $b_i = (-1)^{i} \sum_{k=i}^{h} a_k {k\choose i} $. Since $\sum_{k=i}^h {k\choose i} = {h+1\choose i+1}$, we have $H(q) \leq H(p) {h+1\choose \lfloor (h+1)/2\rfloor}$. By \Cref{Lemma_Cauchy_Bound}, we conclude that $q(y)$ has no zeros in $\left(0, \frac{1}{2H(p) {h+1\choose \lfloor (h+1)/2\rfloor}}\right)$. 

\end{proof}
Equivalents of Lemma~\ref{lem:cauchy from 1} are used in~\cite{gaubert2026thresholds}. Other results, like a bound from Lagrange~\cite{Lagrange}, have been used in past works (e.g. in~\cite{mukherjee2025howard,gaubert2026thresholds}), but we find that the (arguably simpler) Cauchy's bound from Lemma~\ref{Lemma_Cauchy_Bound} yields the same final bounds on $\rhobw$ so we do not introduce the Lagrange bound in this paper.
\paragraph{Results based on Mahler measure.}
A recent work~\cite{gaubert2026thresholds} uses Mahler measure to provide bounds on $\alphabw$, albeit with an issue in the result on which it relies (see below). We provide a corrected version of this approach next.
\begin{definition}
If the polynomial $p(x)$ factorizes over the complex numbers as $p(x)=a \prod_{i=1}^{d}(x-z_{i})$, the Mahler measure $ \sfM(p) \in \R$ of $p$ is defined as $ \sfM(p) \coloneqq |a|  \prod_{i=1}^{d} \max \{1,|z_i|\}$. For $z \in \C$ algebraic, $ \sfM(z)$ is the Mahler measure of its minimal polynomial.
\end{definition}
Mahler measures can be used to separate algebraic numbers from $1$~\cite{mignotte1994algebraic,Dubickas1995}. We state the corrected version of Theorem 1 in~\cite{Dubickas1995} below.
\begin{theorem}\label{th:mahler}
For each $\epsilon>0$ there exists a constant $D_\epsilon \in \N$ such that, for any algebraic number $z$ of degree $d > D_\epsilon$ which is not a root of unity and satisfies $0 < \log( \sfM(z)) < d$, we have
\begin{equation}\label{eq:dubickas}
 |z -1 | > \exp\left(-(\pi/4 + \epsilon) \sqrt{d \log d  \log  \sfM(z)} \right) \; .
 \end{equation}
 \end{theorem}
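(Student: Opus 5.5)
The plan is to follow the auxiliary-polynomial argument of Mignotte and Dubickas~\cite{mignotte1994algebraic,Dubickas1995}, tracking constants carefully so that the corrected hypotheses ($d > D_\epsilon$, $z$ not a root of unity, $0<\log \sfM(z)<d$) are exactly what is needed. Let $P$ be the minimal polynomial of $z$, with leading coefficient $a$ and conjugates $z=z_1,\dots,z_d$, so that $\sfM(z)=|a|\prod_i\max\{1,|z_i|\}$. The basic mechanism is the following. If $F\in\Z[x]$ is not divisible by $P$, then the resultant $\mathrm{Res}(P,F)=a^{\deg F}\prod_{i=1}^d F(z_i)$ is a nonzero integer, so $|\mathrm{Res}(P,F)|\geq 1$. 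If moreover $F$ vanishes to order $T$ at $x=1$, then $|F(z)|\le |z-1|^T\,C(F)$ for an explicit $C(F)$ depending on the degree and height of $F$. Bounding the remaining factors $|F(z_i)|$, $i\ge 2$, by $H(F)(\deg F+1)\max\{1,|z_i|\}^{\deg F}$ then gives an inequality of the form
\[
1\le |z-1|^T\, C(F)\,\bigl((L+1)H(F)\bigr)^{d-1}\,\sfM(z)^{L},
\]
where $L=\deg F$. Solving for $|z-1|$ yields the lower bound.

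The first step is to construct $F$. A naive choice such as $F=(x-1)^T$ has too large a height, so instead I would use Siegel's lemma, or more concretely Dobrowolski/Mignotte-type polynomials such as products $\prod_k (x^{k}-1)^{m_k}$ or integer combinations of $x^j(x-1)^T$. The goal is a polynomial of degree $L$, vanishing to order $T$ at $1$, with $\log H(F)$ of order $T\log(L/T)$ rather than $T\log 2$; this saving is where the $\log d$ and the constant $\pi/4$ arise. The second step is to guarantee $P\nmid F$. Since $z$ is not a root of unity, $P$ cannot divide any cyclotomic factor. If a Siegel-lemma construction is used instead, I would take $L<d$, so that $P\nmid F$ automatically for degree reasons, or else argue with a nonvanishing derivative $F^{(t)}$ at the cost of a factorial loss.

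The third step is to optimize the parameters. Taking logarithms, the inequality becomes roughly
\[
T\log\frac{1}{|z-1|}\le d\log H(F)+L\log\sfM(z)+O(d\log L).
\]
Choosing $L\asymp d$ and $T\asymp\sqrt{d\log\sfM(z)/\log d}$, and balancing the height term against the Mahler term, should give $\log(1/|z-1|)\le(\pi/4+\epsilon)\sqrt{d\log d\log\sfM(z)}$. The condition $0<\log\sfM(z)<d$ keeps $1\le T\le L$ in the admissible range. The lower-order terms are absorbed into $\epsilon$ once $d>D_\epsilon$.

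I expect the main obstacle to be the first step. One must obtain an auxiliary polynomial whose height is sharp enough to produce precisely the constant $\pi/4$, which comes from an extremal integral in the counting of Siegel's lemma, while still certifying $F(z)\neq 0$. This is exactly where the original statement in~\cite{Dubickas1995} required correction. My first attempt would be to follow Dubickas' construction verbatim and to re-derive the asymptotic height estimate with explicit error terms. I would then check at which point the hypotheses $\log\sfM(z)<d$ and $d>D_\epsilon$ are genuinely used.
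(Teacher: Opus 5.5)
The paper does not prove this statement. It quotes Theorem~1 of \cite{Dubickas1995} and adds the hypothesis $0<\log\sfM(z)<d$ taken from \cite{dubickas1998algebraic}. Your proposal is therefore an attempt at an independent proof, and as written it has a genuine gap. The step you treat as routine, absorbing the $O(d\log L)$ term into $\epsilon$, fails, and it fails for structural reasons.

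With your choice $L\asymp d$ and $T\asymp\sqrt{d\log\sfM(z)/\log d}$, the term $(d-1)\log(L+1)/T$ is of order $\sqrt{d}\,(\log d)^{3/2}/\sqrt{\log\sfM(z)}$. This exceeds the target $\sqrt{d\log d\log\sfM(z)}$ by the factor $\log d/\log\sfM(z)$. So the term is not lower order whenever $\log\sfM(z)=O(\log d)$. That is exactly the small-measure regime, and it is the regime in which the paper uses the result for deterministic games, where $\log B=O(\log n)$ for fixed $W$.

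Retuning the parameters cannot help as long as $\deg F<d$ and the other conjugates are bounded through $H(F)$. Assume $F(0)\neq0$, which costs nothing. Then \Cref{th:bek99} forces $T\le c\sqrt{L(1+\log H(F))}$. It follows that $(d-1)\log\bigl((L+1)H(F)\bigr)/T\ge c_1\sqrt{d\log d}$ for every $L<d$. This floor does not shrink as $\sfM(z)$ shrinks, so the scheme cannot reach the theorem once $\log\sfM(z)$ is below an absolute constant.

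The missing idea is an auxiliary polynomial of degree much larger than $d$, whose size on the conjugates is controlled by its structure rather than by a Siegel-lemma height. The product of cyclotomic factors you mention in passing is the right object, provided it is chosen as
\[
F(x)=\prod_{0\le s<s'\le S}\bigl(x^{s'}-x^{s}\bigr)=\pm\det\bigl(x^{sj}\bigr)_{0\le s,j\le S}.
\]
Up to a power of $x$, this is $\prod_k(x^k-1)^{m_k}$ with $m_k=S+1-k$. This choice has four useful properties:
\begin{itemize}
\item $F(z)\neq0$ precisely because $z$ is not a root of unity, so $\deg F\sim S^3/3\gg d$ causes no difficulty.
\item $F$ vanishes to order $T=\binom{S+1}{2}$ at $1$.
\item Hadamard's inequality applied to the Vandermonde matrix gives $H(F)\le\max_{|x|=1}|F(x)|\le(S+1)^{(S+1)/2}$.
\item This height bound is exponentially smaller than the factorwise bound $2^{T}$, which would only reproduce Liouville's inequality.
\end{itemize}

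Inserting this $F$ into your resultant inequality gives $\log(1/|z-1|)\lesssim \frac{d\log S}{S}+\frac{2S}{3}\log\sfM(z)+\log S$. Taking $S\asymp\sqrt{d\log d/\log\sfM(z)}$ yields the correct shape $\sqrt{d\log d\log\sfM(z)}$. The constant, however, is worse: about $2/\sqrt3$ when $\log\sfM(z)$ is bounded, rather than $\pi/4$.

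Reaching $\pi/4$ needs a further, delicate optimization of the nodes and multiplicities. That optimization is where the work of \cite{mignotte1994algebraic,Dubickas1995} lies. Your plan explicitly leaves this to ``following Dubickas' construction verbatim'', so at that point your argument reduces to the same citation the paper relies on.
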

 Note that Theorem~\ref{th:mahler} does require the ``small Mahler measure'' condition $0 < \log( \sfM(z)) < d$, which is omitted in the statement of Theorem 1 in~\cite{Dubickas1995} but acknowledged by the same author in the later paper~\cite{dubickas1998algebraic}. Without this condition, the inequality~\eqref{eq:dubickas} cannot hold (as shown in the introduction in~\cite{mignotte1994algebraic}). 
To handle the cases not covered under the assumptions of Theorem~\ref{th:mahler} ($d \leq D_{\epsilon}$ or $d \leq \log( \sfM(z))$), we can resort to Liouville's inequality, which is always true: $|z-1| \geq 2^{1-d}  \sfM(z)^{-1}$, where $d$ is the degree of $z$. This yields the following bound.
\begin{theorem}\label{th:mahler sep - generic}
Let $\theta \coloneqq 1 + \log(2)$. For each $\epsilon>0$ there exists a constant $c_\epsilon  \geq 1$ such that, for any algebraic number $z$ of degree $d$ which is not a root of unity, we have
\begin{equation}\label{eq:dubickas-rev}
|z-1| \geq \min \left\{ \exp\left(-\left(\frac{\pi}{4}+\epsilon \right) \sqrt{d\log(d)\log(\sfM(z))}\right), \frac{1}{c_{\epsilon} (\sfM(z))^{\theta}} \right\}\; .
\end{equation}
\end{theorem}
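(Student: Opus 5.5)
The plan is a case split on whether the hypotheses of Theorem~\ref{th:mahler} hold; in each case one term of the minimum in~\eqref{eq:dubickas-rev} will be shown to be a valid lower bound. Fix $\epsilon>0$, let $D_\epsilon$ be the constant of Theorem~\ref{th:mahler}, and let $z$ be algebraic of degree $d$, not a root of unity. We may assume $z \neq 1$, since $z=1$ is a root of unity. Then the minimal polynomial of $z$ is not a cyclotomic polynomial. Kronecker's theorem gives $\sfM(z)>1$, because an integer polynomial with Mahler measure $1$ whose roots are all nonzero has only roots of unity as roots. Hence $\log \sfM(z)>0$ in all cases.

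\emph{Case 1: $d > D_\epsilon$ and $\log \sfM(z) < d$.} All hypotheses of Theorem~\ref{th:mahler} hold. Inequality~\eqref{eq:dubickas} then gives $|z-1|$ larger than the first term of the minimum, so~\eqref{eq:dubickas-rev} holds.

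\emph{Case 2: $d \le D_\epsilon$ or $\log \sfM(z) \ge d$.} Here we use Liouville's inequality $|z-1| \ge 2^{1-d}\sfM(z)^{-1}$. It remains to bound $2^{d-1}\sfM(z)$ by $c_\epsilon \sfM(z)^{\theta}$.
\begin{itemize}
\item If $d \le D_\epsilon$, then $2^{d-1} \le 2^{D_\epsilon}$. Since $\sfM(z) \ge 1$ and $\theta \ge 1$, we have $\sfM(z) \le \sfM(z)^\theta$. Any $c_\epsilon \ge 2^{D_\epsilon}$ therefore works.
\item If $d \le \log \sfM(z)$, then $2^{d-1} \le 2^{d} = e^{d\log 2} \le \sfM(z)^{\log 2}$. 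Hence $2^{d-1}\sfM(z) \le \sfM(z)^{1+\log 2} = \sfM(z)^{\theta}$, and $c_\epsilon \ge 1$ suffices.
\end{itemize}
Taking $c_\epsilon := 2^{D_\epsilon}$ covers both subcases. In Case 2, $|z-1|$ is then at least the second term of the minimum.

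Combining the two cases gives~\eqref{eq:dubickas-rev}. The choice $\theta = 1+\log 2$ is exactly what makes the large-Mahler-measure subcase work with a constant independent of $d$. The only delicate point is justifying $\log\sfM(z)>0$ in Case 1. This is where the non-root-of-unity assumption enters, through Kronecker's theorem. Beyond that the argument is routine bookkeeping.
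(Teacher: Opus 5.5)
Your proof is correct and is essentially the paper's. It uses the same split according to whether the hypotheses of Theorem~\ref{th:mahler} hold, and covers the remaining cases with Liouville's inequality $|z-1|\ge 2^{1-d}\sfM(z)^{-1}$, combined with $2^{d}\le \sfM(z)^{\log 2}$ when $d\le\log\sfM(z)$ and with $2^{d-1}\le 2^{D_\epsilon}$ when $d\le D_\epsilon$. Your constant $c_\epsilon=2^{D_\epsilon}$ is just a slightly looser choice than the paper's $\max\{1,2^{D_\epsilon-1}\}$.

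Your Kronecker argument for $\log\sfM(z)>0$, which the paper leaves implicit, has one exception: $z=0$, where $\sfM(0)=1$. This is harmless. The point $z=0$ has degree $1$, so it lies in your Liouville case (one may assume $D_\epsilon\ge 1$). In any event, \eqref{eq:dubickas-rev} is trivial there since $|z-1|=1$.
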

\begin{proof}
Given $\epsilon>0$, let $D_{\epsilon}\in \N$ be the constant provided by Theorem~\ref{th:mahler}. Let $z>0$ be an algebraic number of degree $d$ which is not a root of unit. We consider three cases.
In the first case, assume $d> D_{\epsilon}$ and $\log( \sfM(z)) < d$. Then, we can apply Theorem~\ref{th:mahler} directly to obtain $|z-1| > \exp\left(-(\frac{\pi}{4}+\epsilon) \sqrt{d\log(d)\log(\sfM(z))}\right)$. In the second case, assume $d > D_{\epsilon}$ but $\log( \sfM(z)) \geq d$. Then, Liouville's inequality gives $|z-1| \geq 2^{1-d}\sfM(z)^{-1} = 2 \exp(-\log( \sfM(z)) - d \log(2)) \geq 2 \exp(-(1+\log(2))\log( \sfM(z))) = 2  \sfM(z)^{-\theta}$. 
In the third case, assume $d \leq D_{\epsilon}$. Then, Liouville's inequality gives $|z-1| \geq 2^{1-d} \sfM(z)^{-1} \geq 2^{1-D_{\epsilon}} \sfM(z)^{-\theta}$ (since $\theta>1$). Taking $c_{\epsilon} \coloneqq \max \{1,2^{D_{\epsilon}-1}\}$ concludes the proof.
 
\end{proof}
To apply \Cref{th:mahler sep - generic}, we need an upper bound on $ \sfM(z)$ when $z$ is a root of a polynomial $p(x) = \sum^{d}_{k=0} a_k x^k$. We will use the bound $\sfM(z) \leq \sqrt{\sum^{d}_{k=0} a^2_k}$, which follows from the fact that $\sfM(z) \leq \sfM(p)$ since the minimal polynomial of $z$ divides $p$ (given that $p(z)=0$), and a classical bound from Landau~\cite{Landau1905}, which states that for any complex polynomial $q(x) =\sum^{d}_{k=0} b_k x^k$ we have $ \sfM(q) \leq \sqrt{\sum^{d}_{k=0} |b_k|^2}$. We are now ready to bound the Blackwell threshold.
\subsection{Lower bounds on $\rhobw$.}
We have established in \Cref{th:reduced family bound} the existence of a reduced family $\cP$ of controlled degree and height, while \Cref{th:reduced family root bound alphabw} shows that $\rhobw$ is bounded below by the minimum positive root across all polynomials in $\cP$. 
We can now apply the separation results from the previous section to the polynomials in the reduced family constructed in \Cref{th:reduced family bound}. The next theorem uses Cauchy's bound.
\begin{theorem}\label{Th:boundBlackwellCauchy}
The Blackwell threshold $\rhobw$ of a two-player perfect-information stochastic game satisfying Assumption \ref{assumption} verifies
  \[
\frac{1}{4 W M^{n-1} {n+1 \choose \lfloor (n+1)/2 \rfloor}}  \leq  \rhobw  \; .
  \]
\end{theorem}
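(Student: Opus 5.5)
The plan is to reduce from the game to its induced MDPs, and then to combine \Cref{th:reduced family root bound alphabw} with the height bound of \Cref{th:reduced family bound} through Cauchy's bound, \Cref{Lemma_Cauchy_Bound}.

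\textbf{Reduction to MDPs.} By item (a) of \Cref{prop:reduction games to mdps}, every MDP in $\cM(\Gamma)$ satisfies Assumption~\ref{assumption} with the same parameters $n$, $W$, $M$. Flipping the sign of the rewards for $\Gamma^{\sf Max}_\sigma$ keeps the bound $W$. By item (b) of the same proposition, it then suffices to prove the stated lower bound on $\rhobw$ for an arbitrary MDP satisfying Assumption~\ref{assumption}.

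\textbf{The MDP case.} Fix such an MDP. If $\rhobw=+\infty$ there is nothing to prove, so assume $\rhobw<+\infty$. Take the reduced family $\cP$ of \eqref{eq:definition reduced family}. By \Cref{th:cP is reduced family} it is a reduced family of comparison polynomials, and by \Cref{th:boundDeltaCoeff} every member has height at most
\[
2WM^{n-1}\max_k {n+1\choose k+2}\leq 2WM^{n-1}{n+1\choose \lfloor (n+1)/2\rfloor},
\]
as in \Cref{th:reduced family bound}. By \Cref{th:reduced family root bound alphabw}, $\rhobw$ is a positive root of some $\Delta\in\cP$.

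\textbf{The delicate point.} If $\Delta$ were the zero polynomial, the root statement would be vacuous. I would handle this as follows. Since the values of $\sigma'$ and $\sigma\opt$ differ, the relevant $C_i^{\sigma',\sigma\opt}$ is strictly negative at $\rho_\ell$ in the proof of \Cref{th:reduced family root bound alphabw}. Hence $\Delta(\rho_\ell)<0$, so $\Delta$ is nonzero.

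\textbf{Conclusion.} Applying \Cref{Lemma_Cauchy_Bound} to the nonzero integer polynomial $\Delta$ gives
\[
\rhobw\geq \frac{1}{2H(\Delta)}\geq \frac{1}{4WM^{n-1}{n+1\choose \lfloor (n+1)/2\rfloor}},
\]
which is the claimed bound.

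The main obstacle is minor: checking that the root produced by \Cref{th:reduced family root bound alphabw} lies on a nonzero member of $\cP$. This follows from the strict negativity of $\Delta(\rho_\ell)$ established in that proof. The remaining steps are direct substitutions.
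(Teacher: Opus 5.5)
Your proposal is correct and follows the paper's proof: you reduce to the induced MDPs via \Cref{prop:reduction games to mdps}, combine \Cref{th:reduced family root bound alphabw} with the height bound of \Cref{th:reduced family bound}, and conclude with Cauchy's bound (\Cref{Lemma_Cauchy_Bound}). Your check that the relevant $\Delta$ is nonzero because $\Delta(\rho_\ell)<0$ makes explicit what the paper uses implicitly through the ``nonzero polynomials'' clause of \Cref{th:reduced family root bound alphabw}.
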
 
\begin{proof}
From \Cref{prop:reduction games to mdps}, it is sufficient to prove this statement for the set of MDPs satisfying Assumption~\ref{assumption}.
From Theorem~\ref{th:reduced family root bound alphabw}, we can lower bound $\rhobw$ by the smallest positive root  across all polynomials in a reduced family. By Theorem~\ref{th:reduced family bound}, there exists a reduced family of polynomials with degree at most $n-1$ and height at most $2 W M^{n-1} {n+1 \choose \lfloor (n+1)/2 \rfloor}$. We conclude by a direct application of Cauchy's bound as in \Cref{Lemma_Cauchy_Bound}.

\end{proof}
The next bound uses the separation result based on Mahler measures (\Cref{th:mahler sep - generic}).
\begin{theorem}\label{th-dubickas} 
For each $\epsilon > 0$ there exists a constant $c_{\epsilon} \geq 1$ such that the Blackwell threshold $\rhobw$ of any two-player perfect-information stochastic game satisfying Assumption~\ref{assumption} verifies
\[\min \left\{ \exp\left(-\left(\frac{\pi}{4}+\epsilon \right) \sqrt{n\log(n)\log(B)}\right), \frac{1}{c_{\epsilon} B^{\theta}} \right\} \leq \rhobw \; ,\]
where $\displaystyle\theta \coloneqq 1 + \log(2)$ and $B \coloneqq \frac{1}{2}W M^{n-1}3^{n+1}$.
\end{theorem}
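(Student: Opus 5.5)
The plan is to reduce to MDPs, get $\rhobw$ as a root of an explicit integer polynomial from the reduced family, shift that root to sit next to $1$, and then apply \Cref{th:mahler sep - generic} with the Mahler measure controlled by Landau's inequality.

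\textbf{Reduction and choice of polynomial.} By \Cref{prop:reduction games to mdps} it suffices to treat an MDP satisfying Assumption~\ref{assumption}. If $\rhobw=+\infty$ there is nothing to prove. Otherwise, \Cref{th:reduced family root bound alphabw} applied to the family $\cP$ of~\eqref{eq:definition reduced family} shows that $\rhobw>0$ is a root of some $\Delta\in\cP$. This $\Delta$ can be taken nonzero: in the proof of \Cref{th:reduced family root bound alphabw} the selected $\Delta$ satisfies $\Delta(\rho_\ell)<0$. By \Cref{th:boundDeltaCoeff}, we may write $\Delta(\rho)=\sum_{k=0}^{n-1}b_k\rho^k$ with $b_k\in\Z$ and $|b_k|\le 2WM^{n-1}\binom{n+1}{k+2}$.

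\textbf{Shift to separation from $1$.} Set $z\coloneqq 1+\rhobw=\alphabw^{-1}$. Then $z$ is a root of $q(x)\coloneqq\Delta(x-1)=\sum_i c_i x^i$, which is a nonzero integer polynomial with $c_i=\sum_{k\ge i}(-1)^{k-i}\binom{k}{i}b_k$. It follows that
\[
\sum_i|c_i|\le\sum_k|b_k|\sum_i\binom{k}{i}=\sum_{k=0}^{n-1}2^k|b_k|\le 2WM^{n-1}\sum_{j=2}^{n+1}\binom{n+1}{j}2^{j-2}\le \tfrac{1}{2}WM^{n-1}3^{n+1}=B .
\]
The minimal polynomial of $z$ divides $q$, so $\sfM(z)\le\sfM(q)$. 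Landau's bound then gives $\sfM(z)\le\big(\sum_i c_i^2\big)^{1/2}\le\sum_i|c_i|\le B$. The degree $d$ of $z$ satisfies $d\le\deg q\le n-1<n$. Moreover $z$ is real and $z>1$, so $z$ is not a root of unity, and $\sfM(z)\ge z>1$, so $\log\sfM(z)>0$.

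\textbf{Conclusion.} Apply \Cref{th:mahler sep - generic} to $z$, which gives $\rhobw=|z-1|\ge\min\{\exp(-(\pi/4+\epsilon)\sqrt{d\log d\log\sfM(z)}),\,c_\epsilon^{-1}\sfM(z)^{-\theta}\}$. Both terms of the minimum are nonincreasing in $d\ge1$ and in $\sfM(z)\ge1$. Substituting $d\le n$ and $\sfM(z)\le B$ therefore yields the claimed inequality. The case $d=1$ is harmless, since then the exponential equals $1$.

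\textbf{Main obstacle.} The delicate point is the coefficient bookkeeping for the shifted polynomial $q$. One has to check that the binomial blow-up, summed against the $\binom{n+1}{k+2}$ bounds of \Cref{th:boundDeltaCoeff}, collapses exactly to $3^{n+1}/4$ and hence produces the constant $B$. The other step needing care is justifying that the monotone substitution of $d\le n$ and $\sfM(z)\le B$ is legitimate in both branches of the minimum.
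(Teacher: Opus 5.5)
Your proposal is correct and follows essentially the same route as the paper. Like the paper, you reduce to MDPs, take the polynomial $\Delta\in\cP$ vanishing at $\rhobw$ and shift it to $\Delta(x-1)$, bound its Mahler measure via Landau by $\sum_k 2^k|b_k|\le \tfrac12 WM^{n-1}3^{n+1}=B$, and then apply \Cref{th:mahler sep - generic}. Your extra checks (that $\Delta$ is nonzero, that $z=1+\rhobw>1$ is not a root of unity, and that both branches of the minimum are monotone in $d$ and $\sfM(z)$) just make explicit what the paper leaves implicit.
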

\begin{proof}
As in the proof of \Cref{Th:boundBlackwellCauchy}, we can lower bound $\rhobw$ by the smallest positive root across all polynomials in the reduced family $\cP$, so let 
 $\Delta(\rho)$ be any of its elements. \Cref{th:mahler sep - generic} can be used to separate a root of a polynomial from $1$, so we consider the polynomial $p(\beta) \coloneqq \Delta(\beta -1)$ to separate the positive roots of $\Delta(\rho)$ from $0$. By \Cref{th:boundDeltaCoeff} we know that $\Delta(\rho)$ is of the form $\sum_{k=0}^{n-1} b_k \rho^{k}$, and so $p(\beta) = \sum_{i=0}^{n-1} a_i \beta^i$,
where $a_i = \sum_{k=i}^{n-1} (-1)^{k-i} b_k {k\choose i}$. By Landau's bound we get 
\[ \sfM(p) \leq \sqrt{\sum_{i=0}^{n-1} |a_i|^2} \leq \sum_{i=0}^{n-1} | a_i| \leq \sum_{k=0}^{n-1} |b_k| \sum_{i=0}^{k} {k\choose i} = \sum_{k=0}^{n-1} |b_k|2^k.\]
Then, by~\eqref{main_ineq} we have $\sum_{k=0}^{n-1} |b_k|2^k \leq 2 W M^{n-1} \sum_{i=0}^{n-1} \sum_{k=i}^{n-1} {n+1 \choose k +2} {k\choose i} = 2 W M^{n-1} \frac{3^{n+1} - 2n - 3}{4} \leq \frac{1}{2}W M^{n-1}3^{n+1} $ since $\sum_{k=0}^{n} {n \choose k} 2^{k} = 3^{n}$. 
The theorem now follows from the fact that any root $z \in (1,2)$ of $p(\beta)$ of degree $d$ satisfies the inequality~\eqref{eq:dubickas-rev}, along with the facts that $d\leq n$ and $\sfM(z) \leq \sfM(p) \leq \frac{1}{2}W M^{n-1}3^{n+1}$. 

\end{proof}
Unfortunately, the bound on $\rhobw$ obtained in Theorem~\ref{Th:boundBlackwellCauchy} always dominates the one from Theorem~\ref{th-dubickas}. Indeed the bound based on Cauchy is always larger than $(4WM^{n-1}2^{n+1})^{-1}$ and this last quantity is itself larger than $(WM^{n-1}3^{n+1})^{-\theta} = B^{-\theta}$ (for $\theta = 1 + \log(2)>1$, $n \geq 2$, and $B$ as  defined in Theorem~\ref{th-dubickas}), which is always larger than the minimum on the right hand side in the bound from Theorem~\ref{th-dubickas}.
This happens because the upper bound $B$ on the Mahler measure $ \sfM(\rhobw)$ in Theorem~\ref{th-dubickas} is very large (exponential in the number of states). Yet, bounds based on Mahler measures will prove useful in the case of deterministic instances where we are better able to control $ \sfM(\rhobw)$, as will be explained in Sections \ref{sec:deter d sens} and~\ref{sec:discussion}.
\subsection{Upper bounds on $\rhobw$.}\label{sec:upper bound rho}
We now present some examples of MDP instances where $\rhobw$ can be exponentially (or at least superpolynomially) close to $0$.
\subsubsection{Stochastic instances.}\label{sec:lower bound sto}
The following theorem provides an upper bound on $\rhobw$ for the case of general MDP instances, including the case of randomized transitions ($M \geq 2$).
\begin{theorem}\label{th:lower bound stochastic}
For each $n \geq 2$ and $M,W \in \N$ there exists an MDP instance satisfying Assumption~\ref{assumption} such that
\[\rhobw \leq \frac{1}{(2W-1) M^{n-1}} \; .\]
\end{theorem}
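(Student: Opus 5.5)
I will reuse the instance built in the proof of \Cref{th:reduced family bound - tightness}, which is defined for every $n \geq 2$ and every $M,W \in \N$. The primality of $2W-1$ and the condition that $2W-1$ does not divide $M$ were only needed there for irreducibility, and play no role here. Recall the instance. There are states $s_0,\dots,s_{n-1}$, and $s_{n-1}$ is absorbing with reward $-1$. At $s_0$ there are two actions. Action $a_0$ stays at $s_0$ with reward $0$. Action $a_1$ gives reward $2W-1$ and moves to $s_1$ with probability $1/M$, staying at $s_0$ otherwise. Each intermediate state $s_i$ moves to $s_{i+1}$ with probability $1/M$ and to $s_0$ otherwise, with reward $0$. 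A final shift of all rewards by $W-1$ brings them into $[-W,W]$ without changing thresholds.

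From the computation recalled there, $C^{a_1,a_0}_{s_0}(\rho) = (2W-1) - \frac{1}{M^{n-1}\rho(1+\rho)^{n-2}}$. Since $a_0$ and $a_1$ are adjacent at $s_0$, item (c) of \Cref{lem:comp vector adjacent policies prop} and item (b) of \Cref{prop:resolvent properties - basic} show that the sign of $(v^{a_1}_\rho - v^{a_0}_\rho)_{s_0}$ equals the sign of this comparison entry. Equivalently, it equals the sign of the increasing polynomial $p(\rho)=(2W-1)M^{n-1}\rho(1+\rho)^{n-2}-1$. Hence $p$ has a unique positive root $\rho^\star$. For $\rho<\rho^\star$, action $a_1$ is strictly better (the MDP minimizes). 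For $\rho>\rho^\star$, action $a_0$ is strictly better at $s_0$. At $\rho^\star$ both are optimal, since with only two policies this is the only crossing. Therefore $a_1$ is the unique Blackwell optimal policy, $a_0$ is discount optimal for every $\rho > \rho^\star$ but not Blackwell optimal, and so $\rhobw = \rho^\star$.

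It remains to bound $\rho^\star$ from above. Evaluate $p$ at $\bar\rho = \frac{1}{(2W-1)M^{n-1}}$. This gives $p(\bar\rho) = (1+\bar\rho)^{n-2} - 1 \geq 0$. Because $p$ is increasing and $p(0)=-1<0$, its root satisfies $\rho^\star \leq \bar\rho$, which is the claim.

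The only delicate point is identifying $\rhobw$ with $\rho^\star$ exactly. Specifically, one must check that discount optimality for $\rho > \rho^\star$ is determined by the comparison at $s_0$ alone. This follows from \Cref{PropVeinott}: $s_0$ is the only state with a choice, so the only adjacent pair is $(a_0,a_1)$. When $n=2$ the bound is simply $\rho^\star = \bar\rho$, since then $p$ is linear.
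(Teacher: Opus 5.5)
Your proof is correct and matches the paper's argument: reuse the instance from \Cref{th:reduced family bound - tightness}, identify $\rhobw$ as the unique positive root of $p(\rho)=(2W-1)M^{n-1}\rho(1+\rho)^{n-2}-1$, and use $(1+\rho)^{n-2}\geq 1$. Evaluating $p$ at $\bar\rho$ and invoking monotonicity is equivalent to the paper's direct manipulation of the root equation, and your explicit check of $\rhobw=\rho^\star$ via \Cref{PropVeinott}, together with your remark that the primality hypotheses are unnecessary, simply spells out what the paper leaves implicit.
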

\begin{proof}
Consider the MDP instance from the proof of Theorem~\ref{th:reduced family bound - tightness}. As shown in that proof, $\rhobw$ is the unique positive solution to $(2W-1)M^{n-1} x (1+x)^{n-2}=1$. Since $(1+\rhobw)^{n-2} \geq 1$, we have $\rhobw \leq \frac{1}{(2W-1) M^{n-1}}$.

\end{proof}
To the best of our knowledge, only two prior works provide results on upper bounds for $\rhobw$ in the stochastic case. Theorem 1 of~\cite{boros2013discounted} shows that we may have $\rhobw \leq O\left(n^{1/12}/2^{n/3}\right)$ (showing the same inequality for $\rho_{-1}$) for an MDP instance with non-deterministic transitions out of only one node, $O(n)$ states and $W=O(n^{20})$. In~\cite[Theorem 2]{mukherjee2025efficient}, the authors construct an MDP instance satisfying Assumption~\ref{assumption} where $\rhobw < \frac{M}{\Omega(2^{n/3})}$ for $M \in \N$ and $W=1$ (see Appendix~\ref{app:comparison mukherjee} for more details; in fact the authors show this for constant integer values of $M$ but their proof extends easily to the case of $M \in \N$). Therefore, our bound of \Cref{th:lower bound stochastic} improves the ones in~\cite{boros2013discounted,mukherjee2025efficient} both in terms of the exponents in $M$ and the factor $W$, and it exactly matches the bound of \Cref{Th:boundBlackwellCauchy} regarding the term $M^{n-1}$. Achieving the missing term of order ${n \choose \lfloor n/2 \rfloor}$ in the denominator is an important open problem.
\subsubsection{Deterministic instances.} 
It is worth noting that the bound obtained in the previous section for {\em stochastic} instances relies crucially on the randomness of the transitions: the term $M^{n-1}$ in the denominator appears from a chain of length $n$ and a sequence of transition probabilities $1/M$ along that chain. Hence, the bound of \Cref{th:lower bound stochastic} is nearly vacuous when $M=1$. Therefore, bounds for deterministic instances require completely different ideas than for stochastic instances. Our proof strategy is to construct $\alphabw$ very close to $1$ and then use the fact that $\rhobw \leq 2 (1-\alphabw)$ for $\alphabw \geq 1/2$ to obtain a bound on $\rhobw$.
Constructing such $\alphabw$ is tightly connected to constructing polynomials with roots close to $1$, as has been observed repeatedly in the past, e.g.~\cite{asadi2024deterministic} or Section 3 in~\cite{grand2023reducing}. We provide a concise proof for completeness.
\begin{lemma}\label{lem:reduction poly to alphabw}
Let $p \in \Z[x]$ be a nonzero polynomial with at least one root in $(0,1)$. Then, there exists a deterministic MDP instance with $\deg(p)+1$ states, $2$ actions, maximum rewards bounded in absolute value by the height $H(p)$ of $p$, such that the Blackwell threshold $\alphabw$ is the largest root of $p$ in $(0,1)$.
\end{lemma}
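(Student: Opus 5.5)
We build a deterministic MDP whose comparison of two policies at a single state reproduces the sign of $p$ on $(0,1)$. Write $p(x)=\sum_{k=0}^{h} a_k x^k$ with $h=\deg(p)$.

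\textbf{Construction.} Take states $s_0,s_1,\dots,s_h$. For $1\le k\le h-1$, state $s_k$ has a single action with reward $a_k$ and moves deterministically to $s_{k+1}$. State $s_h$ is absorbing with reward $a_h$ on the self-loop. State $s_0$ has two actions.
\begin{itemize}
\item Action $a^{\rm stay}$ loops at $s_0$ with reward $0$.
\item Action $a^{\rm go}$ has reward $a_0$ and moves to $s_1$.
\end{itemize}
Rewards are then bounded in absolute value by $H(p)$, and there are $h+1$ states and at most $2$ actions per state, as required.

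\textbf{Value difference.} With the paper's convention that the first reward is discounted, $v^{\rm stay}(\alpha)_{s_0}=0$. Under $a^{\rm go}$,
\[
v^{\rm go}(\alpha)_{s_0}=a_0\alpha+a_1\alpha^2+\dots+a_{h-1}\alpha^{h}+a_h\frac{\alpha^{h+1}}{1-\alpha}.
\]
The absorbing tail is the one place this does not reproduce $p$ exactly, and it is the expected obstacle. Two ways to handle it:
\begin{itemize}
\item Make $s_h$ an ordinary state with reward $a_h$ that moves to an absorbing zero-reward state. This adds one state, which is too many.
\item Better: let $s_h$ carry reward $a_h$ and transition back to $s_0$, where $a^{\rm go}$ is again chosen. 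The go-policy is then a deterministic cycle of length $h+1$, so
\[
v^{\rm go}(\alpha)_{s_0}=\frac{\alpha\,p(\alpha)}{1-\alpha^{h+1}}.
\]
\end{itemize}
With the cycle version, both policies at $s_0$ compare through $\alpha p(\alpha)/(1-\alpha^{h+1})$, whose sign equals that of $p(\alpha)$ on $(0,1)$. The states $s_1,\dots,s_h$ have a single action, so their values are determined by the choice at $s_0$. Every value is a positive multiple of the one at $s_0$ plus a common constant-sign structure: along the cycle, $v_{s_k}$ is a shifted cyclic sum. Discount optimality of a policy at every state is therefore decided by the sign of $p(\alpha)$, using \Cref{PropVeinott}: the only adjacent policies differ at $s_0$, and the comparison vector is $C^{\rm go,stay}_{s_0}$ or its negative counterpart. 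To make this precise, I would compute $C^{\sigma,\sigma'}_{s_0}(\rho)$ for $\sigma'$ each of the two policies and check that each has the sign of $\pm p(\alpha)$ up to a positive factor.

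\textbf{Identifying the threshold.} Let $\bar\alpha$ be the largest root of $p$ in $(0,1)$. On $(\bar\alpha,1)$, $p$ has a constant sign, so a single policy is the unique discount-optimal one there, and it is Blackwell optimal. If $p$ changes sign at $\bar\alpha$, the other policy is optimal just below $\bar\alpha$ and is not Blackwell optimal. If $\bar\alpha$ is a root of even multiplicity, both policies are optimal at $\bar\alpha$ itself. In either case some non-Blackwell-optimal policy is optimal at $\bar\alpha$, while none is optimal above it, so $\alphabw=\bar\alpha$.

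If $p$ changes sign at $\bar\alpha$ we may take either orientation. In the even-multiplicity case, optimality exactly at $\bar\alpha$ still suffices by the definition of $\alphabw$ as the smallest $\alpha'$ such that every $\alpha\in(\alpha',1)$ works, read with the closure argument used in the proof of \Cref{th:reduced family root bound alphabw}. This edge case, together with verifying the cycle computation of values at all states, is the part to check most carefully.
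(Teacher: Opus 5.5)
Your final construction, with $s_h$ returning to $s_0$, is exactly the paper's, and your argument is correct. Your use of \Cref{PropVeinott} usefully makes the all-states check rigorous: the only adjacent pair differs at $s_0$, and with $\alpha=(1+\rho)^{-1}$ one computes $C^{\mathrm{go},\mathrm{stay}}_{s_0}(\rho)=p(\alpha)$ and $C^{\mathrm{stay},\mathrm{go}}_{s_0}(\rho)=-\rho\,\alpha p(\alpha)/(1-\alpha^{h+1})$.

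The even-multiplicity case needs no closure argument. Whatever the multiplicity, both policies are discount optimal at $\bar\alpha$ itself, and the non-Blackwell-optimal one is optimal nowhere on $(\bar\alpha,1)$, so the definition gives $\alphabw=\bar\alpha$ directly.
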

\begin{proof}
Assume $p(x) = \sum_{i=0}^{h} c_i x^i$ for some $h \in \N$ and $c_i \in \Z$ for $i \in \{0,...,h\}$. We construct the MDP instance as follows. There are $h+1$ states labeled $s_0,\ldots,s_h$, two actions $a_0$ and $a_1$ to choose in state $s_0$, and no choices in the other states (the only decision is in state $s_0$). If action $a_0$ is chosen in $s_0$ the next state is $s_{0}$ with probability $1$ and the reward is $0$, while if action $a_1$ is chosen in $s_0$ the next state is $s_1$ with probability $1$ and the reward is $c_0$.  
For $i \in \{1,...,h-1\}$, the transition from $s_{i}$ to $s_{i+1}$ is deterministic with reward $c_i$. From $s_h$, the MDP transitions deterministically back to $s_0$ with reward $c_h$. A policy is parametrized by the action chosen at $s_0$. 
The return for $a_0$ is $0$, while the return for $a_1$ is $\left(\sum_{i=0}^{h} c_i \alpha^{i+1}\right)/(1-\alpha^{h+1})=\alpha p(\alpha)/(1-\alpha^{h+1})$. Since the signs of the two polynomials $p(\alpha)$ and $\alpha p(\alpha)/(1-\alpha^{h+1})$ coincide for $\alpha \in (0,1)$, the Blackwell discount factor $\alphabw$ is the largest positive root of $p$ strictly smaller than $1$. 
\end{proof}

From Lemma~\ref{lem:reduction poly to alphabw}, we are left to construct polynomials with integer coefficients with roots very close to $1$. The next theorem is an extension of Theorem 5 from~\cite{borwein1997polynomials}.
\begin{theorem}\label{th:borwein pinner - coefficient H}
There exists $c>0$ such that for any $H \in \N$, there exists infinitely many $N \in \N$ such that there is a polynomial of degree at most $N$, with coefficients in $\{-H,...,H\}$, and with a real root $\alpha_{N,H} \in (0,1)$ satisfying
\[1-\alpha_{N,H} < \frac{1}{H} \exp \left( - c (N \log(N))^{1/3}\right).\]
\end{theorem}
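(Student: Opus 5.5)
The plan is to start from Theorem 5 of Borwein--Erd\'elyi--K\'os / Borwein--Pinner, which is the case $H=1$. It gives a constant $c_0>0$ and infinitely many $N$ together with a polynomial $q_N$ of degree at most $N$, coefficients in $\{-1,0,1\}$, and a root $\beta_N\in(0,1)$ with $1-\beta_N<\exp(-c_0(N\log N)^{1/3})$. Equivalently, it gives a polynomial with $\{-1,0,1\}$ coefficients having a zero of order about $(N/\log N)^{1/3}$ at $1$, perturbed to move that zero slightly inside $(0,1)$. The task is then to gain the extra factor $1/H$ in the distance to $1$ by spending the larger coefficient range $\{-H,\dots,H\}$, while giving up at most a constant factor in the exponent.

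The first idea is a perturbation. Write $q_N(x)=(1-x)^{m}g(x)$, where $m\approx c_1(N/\log N)^{1/3}$ is the multiplicity at $1$ given by the Borwein construction. The zero near $1$ in their proof comes from a small perturbation of such a polynomial. Replace that perturbation by one of size $\asymp 1/H$ relative to the same construction. Concretely, consider $p(x)=H\,q(x)\pm x^{k}$, with $q$ a $\{-1,0,1\}$ polynomial vanishing to order $m$ at $1$ and $k\le N$ chosen so that $p$ changes sign on $(1-\delta,1)$. Since $Hq(x)\approx H\,g(1)(1-x)^m$ near $1$, the root satisfies $(1-x)^m\approx 1/(H|g(1)|)$, i.e.\ $1-x\approx (H|g(1)|)^{-1/m}$. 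That only gains $H^{-1/m}$, which is far too weak.

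The approach I would actually pursue is therefore different. Use the multiplicity directly and place the perturbation at order $m+1$ rather than order $0$. Take $p(x)=q(x)\,(1+(H-1)\text{-type factor})$. More cleanly, combine $q_N$ with a low-degree integer polynomial $r$ of height at most $H$ that has a simple root at $1-\Theta(1/H)$, for instance $r(x)=Hx-(H-1)$, and apply composition or substitution. I would try $p(x)=q_N(x^{t})$, or the substitution $x\mapsto$ something that sends a neighbourhood of $1$ of size $\eta$ to one of size $\eta/H$ while keeping integer coefficients bounded by $H$.

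The cleanest candidate is the following. If $q_N(\beta)=0$ with $\beta=1-\eta$, set $p(x)=q_N\!\left(x\right)$ evaluated along the map $x\mapsto 1-H(1-x)$. This does not preserve integrality bounds, so I would instead write the multiple root structure explicitly. A polynomial with $\{-1,0,1\}$ coefficients and a zero of order $m$ at $1$ can be rescaled as $\sum_j \epsilon_j x^{j}$. Multiplying its coefficients by $H$ and adding the perturbation $\pm(1-x)^{0}$-term $x^{k}$ with coefficient $1$ moves the root to distance $\approx(\text{perturbation}/(H\,|g(1)|))^{1/m}$ from $1$. The Borwein--Pinner construction already arranges $1-\beta_N\le \exp(-c_0(N\log N)^{1/3})$, and here the perturbation is divided by $H$. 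The loss is $H^{-1/m}$ on top of the original bound.

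To obtain the full $1/H$, I would finally combine this with an interval shift. Apply Lemma-type reasoning to $\tilde p(x)=p(x)\cdot$ nothing, and instead argue via the reduction $\rho\le 2(1-\alpha)$. Then multiply by $(Hx-(H-1))$ only if the product keeps height at most $H$. That holds when $q_N$'s coefficients are sparse enough, since then the coefficients $H\epsilon_j-(H-1)\epsilon_{j-1}$ fail to stay within $H$ only where consecutive coefficients have opposite signs. Verifying this sign condition, or thinning $q_N$ by substituting $x\mapsto x^2$ (which at most doubles $N$ and changes $c$), is the step I expect to be the main obstacle.
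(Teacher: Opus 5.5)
\textbf{Gap: none of your steps delivers the factor $1/H$, and the step you settle on cannot.} The roots of $(Hx-(H-1))\,q_N(x)$ are the roots of $q_N$ together with $1-1/H$. So the root closest to $1$ is still $\beta_N$, and nothing is gained. The height/sign-pattern issue you flag as the main obstacle is real but beside the point; thinning by $x\mapsto x^2$ would fix it, uselessly.

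The other candidates also fail:
\begin{itemize}
\item The substitution $q_N(x^{t})$ does move the root to $\beta_N^{1/t}$ with $1-\beta_N^{1/t}\le 2(1-\beta_N)/t$. But with $t=H$ the degree becomes $N'=HN$, and $(N\log N)^{1/3}\le H^{-1/3}(N'\log N')^{1/3}$. So the exponent constant degrades like $H^{-1/3}$ and is no longer uniform in $H$.
\item The perturbation $Hq\pm x^k$ gains only $H^{-1/m}$, as you note yourself.
\item The remark about $\rho\le 2(1-\alpha)$ concerns MDPs, not this polynomial statement.
\end{itemize}

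\textbf{How to close it.} There are two routes.

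\emph{The statement as written needs no gain at all.} Here $H$ is fixed before $N$, and only infinitely many $N$ are required. Write $X=(N\log N)^{1/3}$. Every $N$ in the sequence from your $H=1$ input with $X\ge (2/c_0)\log H$ (all but finitely many) gives $1-\beta_N<e^{-c_0X}\le \frac1H e^{-c_0X/2}$. Since $q_N$ already has coefficients in $\{-1,0,1\}\subseteq\{-H,\dots,H\}$, the constant $c=c_0/2$ works.

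\emph{The paper proves more.} It obtains a genuine factor $1/H$ for all large $m$, independently of $H$, by reopening the construction rather than using the $H=1$ case as a black box. It realizes the idea you mentioned and then dropped: two pieces whose multiplicities at $1$ differ by one, with the weight $H$ on the higher one. Take $F$ of degree less than $m$, coefficients in $\{0,\pm1\}$ and multiplicity $L\ge c_1\sqrt{m/\log m}$ at $1$, and let $D=\lfloor 3em/L\rfloor+1$. The paper sets
\[G(x)=H\,x^m(x^{mD}-1)F(x^D)\frac{x^D-1}{x-1}-F(x).\]
The two pieces have disjoint supports, so $G$ has coefficients in $\{-H,\dots,H\}$. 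The first piece vanishes to order $L+1$ at $1$ with leading Taylor coefficient $aHmD^{L+2}$. The second vanishes to order exactly $L$, with $a=F^{(L)}(1)/L!$. Hence
\[G(x)=a(x-1)^L\bigl(-1+HmD^{L+2}(x-1)+E(x)\bigr),\]
with $|E|\le 1/4$ at $x=1+t/(HmD^{L+2})$ for $t\in\{1/2,2\}$.

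The resulting sign change gives a root $\gamma$ with $\gamma-1\asymp 1/(HmD^{L+2})$. This is linear in $1/H$ precisely because the ratio of the two pieces is linear in $x-1$. Finally $mD^{L+2}\ge \exp(c_3(N\log N)^{1/3})$, and $x^NG(1/x)$ has the root $1/\gamma\in(0,1)$.
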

For conciseness the proof is presented in Appendix~\ref{app:lower bound deterministic}.
Our hard-instance upper bound on $\rhobw$ for deterministic instances now follows easily by an application of Lemma~\ref{lem:reduction poly to alphabw}.
\begin{theorem}\label{th:lower bound deterministic}
There exists $c>0$ such that for any $W \in \N$, there exists infinitely many $n \in \N$ such that there is a deterministic MDP instance with $n$ states and rewards bounded by $W$ in absolute value, such that the Blackwell threshold of this MDP instance satisfies
\[\rhobw < \frac{1}{W} \exp \left( - c (n \log(n))^{1/3}\right).\]
\end{theorem}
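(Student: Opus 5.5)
The plan is to combine \Cref{th:borwein pinner - coefficient H} with \Cref{lem:reduction poly to alphabw}, then convert the resulting bound on $1-\alphabw$ into a bound on $\rhobw$.

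Fix $W \in \N$ and apply \Cref{th:borwein pinner - coefficient H} with $H \coloneqq W$. This produces a constant $c_0>0$ and infinitely many $N$ for which there is a polynomial $p_N$ with the following properties: $\deg(p_N) \le N$, its coefficients lie in $\{-W,\dots,W\}$ (so $H(p_N)\le W$), and it has a root $\alpha_{N,W}\in(0,1)$ with $1-\alpha_{N,W} < W^{-1}\exp(-c_0 (N\log N)^{1/3})$.

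Since $p_N$ has a root in $(0,1)$, \Cref{lem:reduction poly to alphabw} applies. It yields a deterministic MDP with $n = \deg(p_N)+1 \le N+1$ states and rewards bounded by $H(p_N) \le W$. Its Blackwell threshold $\alphabw$ is the largest root of $p_N$ in $(0,1)$, so $\alphabw \ge \alpha_{N,W}$ and hence $1-\alphabw < W^{-1}\exp(-c_0(N\log N)^{1/3})$.

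To express this in terms of $n$, use $n \le N+1 \le 2N$. This gives $(N\log N)^{1/3} \ge c_1 (n\log n)^{1/3}$ for some absolute $c_1>0$ and all $N\ge 2$. There is one subtlety: $\deg(p_N)$ may be smaller than $N$, so the state count $n$ is not controlled from below by $N$. Only the upper bound $n\le N+1$ is needed, because a smaller $n$ makes the target bound $\exp(-c(n\log n)^{1/3})$ weaker. Infinitely many distinct values of $n$ still arise. The roots $\alpha_{N,W}$ tend to $1$ as $N\to\infty$, and a polynomial of degree at most $d$ with coefficients bounded by $W$ cannot have roots arbitrarily close to $1$ other than $1$ itself (Cauchy's bound via \Cref{lem:cauchy from 1}). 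Therefore $\deg(p_N)\to\infty$ along a subsequence.

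Finally, convert to the interest rate. Since $\alphabw$ is close to $1$, in particular $\alphabw \ge 1/2$ (for $N$ large), we have $\rhobw = (1-\alphabw)/\alphabw \le 2(1-\alphabw)$. This gives $\rhobw < 2W^{-1}\exp(-c_0c_1(n\log n)^{1/3})$. The factor $2$ is absorbed by shrinking the constant to some $c< c_0c_1$, valid for $n$ large enough, which is harmless since only infinitely many $n$ are required.

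The main obstacle is the genuine content hidden in \Cref{th:borwein pinner - coefficient H}, which is assumed here. Within this proof, the only delicate point is the bookkeeping relating $N$ to the number of states $n$ and ensuring infinitely many distinct $n$. That is handled as above via Cauchy's bound forcing the degrees to grow.
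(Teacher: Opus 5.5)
Your proof is correct and takes the same route as the paper: apply \Cref{th:borwein pinner - coefficient H} with $H=W$, feed the resulting polynomials into \Cref{lem:reduction poly to alphabw}, and convert using $\rhobw \le 2(1-\alphabw)$ for $\alphabw \ge 1/2$. Your extra bookkeeping correctly fills in details the paper leaves implicit: you use only the upper bound $n \le N+1$, you invoke \Cref{lem:cauchy from 1} to force $\deg(p_N)\to\infty$ so that infinitely many distinct $n$ occur, and you absorb the factor $2$ with a constant $c$ independent of $W$.
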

In terms of the dependency on $n$ of $\log(\rhobw^{-1})$ for fixed $W \in \N$,  this leaves open a gap between $O(n)$ implied by Theorem~\ref{Th:boundBlackwellCauchy} and $\Omega((n\log(n))^{1/3})$ from the instance in Theorem~\ref{th:lower bound deterministic}. Closing this gap likely requires closing the gap on the multiplicity of real polynomials at $1$, which appears difficult. We note that Theorem 4.2 in~\cite{asadi2024deterministic} yields deterministic two-player perfect-information SGs with $\log(\rhobw^{-1}) =\Omega((\log(n))^2)$ and $\log(\rhobw^{-1}) =\Omega(n^{1/5}\sqrt{\log(n)})$ (with $W=2$), which both yield weaker bounds than Theorem~\ref{th:lower bound deterministic} for the asymptotic behavior of $\log(\rhobw^{-1}).$
\section{Bounds for the $d$-sensitive thresholds.}\label{sec:bound d sensitive}
We now bound the $\sens$-sensitive thresholds of games satisfying Assumption~\ref{assumption}. Following the reduction of Proposition~\ref{prop:reduction games to mdps}, we will do so for MDPs satisfying this assumption.
The $d$-sensitive optimality criterion was introduced in Definition~\ref{def:d sensitive optimality} in terns of limits of functions of the discount factor $\alpha$, but it is well-known that they can be equivalently expressed in terms of the interest rate $\rho$ as follows.
\begin{lemma}[\cite{puterman2014markov}, Equation (10.1.6)]
Let $d \in \{-1,0,1,\ldots\}$ and $\sigma\opt$ be a stationary policy. Then, $\sigma\opt$ is $d$-sensitive optimal if and only if for any other policy $\sigma$ we have
\begin{equation}\label{eq:d sensitive - rho}
\liminf_{\rho \to 0^+} \rho^{-d}\left( v^{\sigma}_{\rho} - v^{\sigma\opt}_{\rho}\right) \geq 0\; .
\end{equation}
\end{lemma}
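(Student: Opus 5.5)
The plan is to pass from the discount factor to the interest rate directly. Each side of the equivalence is then a statement about the sign of the same difference of value functions, rescaled by a factor that is asymptotically equivalent.

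First I record the change of variables. With $\alpha = (1+\rho)^{-1}$, the map $\rho\mapsto\alpha$ is a decreasing homeomorphism of $(0,+\infty)$ onto $(0,1)$ that sends $\rho\to 0^+$ to $\alpha\to 1^-$. By definition $v^{\sigma}_{\rho}=v^{\sigma}((1+\rho)^{-1})$. Hence for any function $f$ on $(0,1)$,
\[
\liminf_{\alpha\to1^-} f(\alpha) = \liminf_{\rho\to0^+} f\bigl((1+\rho)^{-1}\bigr).
\]
In the MDP (minimizing) setting, Definition~\ref{def:d sensitive optimality} reduces to $\liminf_{\alpha\to1^-}(1-\alpha)^{-d}\,(v^{\sigma}(\alpha)-v^{\sigma\opt}(\alpha))_i\ge 0$ for all $\sigma$ and $i$. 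The game definition with one player trivial gives exactly this, so I will also note that restricting the competing policies to stationary ones loses nothing. This holds because stationary deterministic policies attain the optimum for each fixed $\alpha$ (Shapley's result quoted in the paper). Therefore the infimum over all policies of $v^{\sigma}(\alpha)_i$ equals the minimum over the finitely many stationary deterministic ones, and the condition for general $\sigma$ follows from the stationary case.

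Next I compare the scalings. Since $1-\alpha = \rho/(1+\rho)$, we have $(1-\alpha)^{-d} = \rho^{-d}(1+\rho)^{d}$. The factor $(1+\rho)^d$ is positive and tends to $1$ as $\rho\to0^+$. Write $g(\rho) \coloneqq \rho^{-d}(v^\sigma_\rho - v^{\sigma\opt}_\rho)_i$. The quantity in the definition is then $(1+\rho)^d g(\rho)$.

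The key step is to show that $\liminf_{\rho\to 0^+}(1+\rho)^d g(\rho)\ge 0$ if and only if $\liminf_{\rho\to 0^+} g(\rho)\ge 0$. I would argue this via rationality. For stationary policies, $v^{\sigma}_\rho$ is a rational function of $\rho$ by \eqref{ValueStoch}, so $g$ is rational and has a Laurent expansion at $0$. Its limit as $\rho\to0^+$ therefore exists in $\R\cup\{\pm\infty\}$. Multiplying by $(1+\rho)^d\to1$ preserves the sign of this limit and its finiteness or infiniteness, which gives the equivalence at once.

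Alternatively, without rationality, one can use only positivity of the factor. If $\liminf g\ge0$, then for every $\epsilon>0$ we have $g>-\epsilon$ near $0$, so $(1+\rho)^d g > -\epsilon\,\max(1,(1+\rho)^d)$, which tends to $-\epsilon$. The reverse direction is symmetric, using $(1+\rho)^{-d}$.

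I expect the only real subtlety to be the reduction from arbitrary (history-dependent) policies to stationary ones, not the limit manipulation. I would handle it through the fixed-$\alpha$ optimality of stationary deterministic policies, as described above.
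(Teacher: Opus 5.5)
Your argument is correct and is the standard justification behind the paper's citation of Puterman's Equation (10.1.6); the paper gives no proof of its own. The identity $1-\alpha=\rho/(1+\rho)$ rewrites the quantity in the definition as $(1+\rho)^{d}\rho^{-d}(v^{\sigma}_{\rho}-v^{\sigma^{\ast}}_{\rho})_i$, and a positive factor tending to $1$ cannot change the sign of the $\liminf$, which your rationality-free $\epsilon$-argument establishes for arbitrary $\sigma$.

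Your reduction to stationary deterministic policies is valid, since it tacitly uses that $\liminf$ commutes with a minimum over finitely many functions. It is not needed, though: under the paper's standing convention, the definition and the lemma quantify over the same class of policies.
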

\subsection{A sufficient criterion for $d$-sensitive optimality.}
The rest of this section extensively uses the {\em Laurent series expansion} of the value functions in terms of the interest rate $\rho$, and what \eqref{eq:d sensitive - rho} implies for the coefficients in this expansion. In order to better understand the proofs in the rest of this section, it is worth focusing first on the Laurent series expansions of value functions. For a given stationary policy $\sigma$, there exists a sequence $\left(V_{[k]}^{\sigma}\right)_{k \geq -1}$ of vectors of $\R^{n}$ (for instance $V^{\sigma}_{[-1]}$ is the gain of $\sigma$) such that
\begin{equation}\label{eq:laurent series v}
v^{\sigma}_{\rho}  = \sum_{k=-1}^\infty \rho^k V^\sigma_{[k]}
\end{equation}
for $\rho>0$ sufficiently small (e.g. see~\cite{Veinott1969} for more details). 
Since the $d$-sensitive optimality requires analyzing the limits of expressions of the form $\rho^{-d}v^{\sigma}_{\rho}$ as $\rho \to 0^+$, we provide a higher-level analysis in the next lemma.
\begin{lemma}\label{lem:limit sign of Laurent expansion}
Assume $f:\R_{+}^{*} \rightarrow \R$ admits a Laurent series expansion close to $0$, i.e., there exist some scalars $\left(f_k\right)_{k \geq -1}$ such that $f(\rho) = \sum_{k=-1}^{+\infty} f_k \rho^k$ for small $\rho>0$. Given $d \in \{-1,0,1, \dots\}$, if $f_{-1} = \ldots = f_{d}=0$ then $ \displaystyle \liminf_{\rho \to 0^+} \rho^{-d}f(\rho)=0$. Otherwise, $\displaystyle \liminf_{\rho \to 0^+} \rho^{-d}f(\rho)$ has the same sign as the first nonzero coefficient among $f_{-1},\ldots ,f_{d}$.
\end{lemma}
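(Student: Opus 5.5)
The plan is to factor out the leading power of $\rho$ and read off the behavior of the remaining factor near $0$. There are two cases.

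\emph{First case: $f_{-1}=\cdots=f_d=0$.} Then for small $\rho>0$ we have $\rho^{-d}f(\rho)=\sum_{k\ge d+1} f_k\rho^{k-d}=\rho\, g(\rho)$, where $g(\rho)\coloneqq\sum_{k\ge d+1} f_k\rho^{k-d-1}$. Since $f$'s Laurent series converges on some punctured interval $(0,\rho_0)$, the power series $g$ converges there as well. Power series are continuous on their disc of convergence, so $g(\rho)\to f_{d+1}$ as $\rho\to0^+$, and in particular $g$ is bounded near $0$. Hence $\rho g(\rho)\to 0$, so the $\liminf$, which is in fact a limit, equals $0$.

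\emph{Second case: some coefficient among $f_{-1},\ldots,f_d$ is nonzero.} Let $m\in\{-1,\ldots,d\}$ be the smallest index with $f_m\neq0$. Then
\[
\rho^{-d}f(\rho)=\rho^{m-d}\,h(\rho),\qquad h(\rho)\coloneqq\sum_{k\ge m} f_k\rho^{k-m}.
\]
Here $h$ is again a convergent power series, so $h(\rho)\to f_m\neq0$. Consequently $h$ has the sign of $f_m$ for all sufficiently small $\rho>0$. There are two subcases.
\begin{itemize}
\item If $m<d$, then $\rho^{m-d}\to+\infty$, so $\rho^{-d}f(\rho)\to \operatorname{sign}(f_m)\cdot\infty$.
\item If $m=d$, then $\rho^{-d}f(\rho)=h(\rho)\to f_d$.
\end{itemize}
In either subcase the limit exists in $[-\infty,+\infty]$ and has the sign of $f_m$. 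Therefore the $\liminf$ has that sign too.

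The only point that needs care is justifying that the tail $h$ (or $g$) tends to its constant coefficient. For this I would use the standard fact that a power series converging at some $\rho_1>0$ converges uniformly on $[0,\rho_1/2]$ and defines a continuous function there. Alternatively, one can bound the tail directly by a geometric series. Apart from this, everything is routine.
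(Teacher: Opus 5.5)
Your proof is correct and follows the same route as the paper. In the all-zero case you show $\rho^{-d}f(\rho)=O(\rho)$; otherwise you write $\rho^{-d}f(\rho)=\rho^{m-d}(f_m+o(1))$, where $m$ is the index of the first nonzero coefficient. Your remark that the tail power series is continuous at $0$ is exactly the justification the paper's $O(\rho^{d+1})$ and $f_{d'}+o(1)$ notation leaves implicit.
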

\begin{proof}
If $f_{-1} = \ldots = f_{d}=0$, then $f(\rho) = O(\rho^{d+1})$ (and so $\rho^{-d}f(\rho) = O(\rho)$) for $\rho>0$ small, which implies that $\displaystyle \liminf_{\rho \to 0^+} \rho^{-d}f(\rho)=0$. Otherwise, let $d'$ be the index of the first nonzero coefficient among $f_{-1},\ldots,f_{d}$. Then, we have $\rho^{-d}f(\rho) = \rho^{d'-d}(f_{d'} + o(1))$, and therefore the sign of $\displaystyle \liminf_{\rho \to 0^+} \rho^{-d}f(\rho)$ is the same as the sign of $f_{d'}$.
 
\end{proof}
For conciseness we use the following lexicographic ordering for matrices.
\begin{definition}[Lexicographic ordering]
Given two real matrices $A$ and $B$ of the same size, 
we write $A \succeq B$ if, in each row of $A-B$, either all entries are $0$ or the first nonzero entry is positive. We write $A \succ B$ if $A \succeq B$ and $A \neq B$. 
\end{definition}
In terms of the lexicographic ordering, Lemma~\ref{lem:limit sign of Laurent expansion} says that $(f_{-1},\ldots,f_{d}) \succeq 0$ is equivalent to $\displaystyle \liminf_{\rho \to 0^+} \rho^{-d}f(\rho) \geq 0$. We will use this equivalence repeatedly in the rest of this section.

Let us now consider the case of the difference of two value functions. By~\eqref{eq:laurent series v} we know that
\begin{equation}\label{eq:d sens vs. laurent}
v^{\sigma}_{\rho} - v^{\sigma'}_{\rho} = \sum_{k=-1}^\infty \rho^{k} \left(V^\sigma_{[k]}- V^{\sigma'}_{[k]}\right)
\end{equation}
for $\rho>0$ small enough. A direct application of Lemma~\ref{lem:limit sign of Laurent expansion} gives the next well-known theorem, where we denote by $V_{[-1,d]}^{\sigma}$ the matrix with columns $V_{[-1]}^{\sigma},\dots ,V_{[d]}^{\sigma}$. 
\begin{theorem}\label{th:d sens lex ordering}
A policy $\sigma\opt$ is $d$-sensitive optimal if, and only if, $V_{[-1,d]}^{\sigma} \succeq V_{[-1,d]}^{\sigma\opt}$ for any other policy $\sigma$.
\end{theorem}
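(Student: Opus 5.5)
The plan is to combine the characterization \eqref{eq:d sensitive - rho} of $d$-sensitive optimality with the Laurent expansion \eqref{eq:d sens vs. laurent}, and then apply Lemma~\ref{lem:limit sign of Laurent expansion} one state at a time.

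First fix an arbitrary stationary policy $\sigma$ and a state $i\in[n]$. By \eqref{eq:d sens vs. laurent}, the scalar function
\[
f(\rho)\coloneqq (v^{\sigma}_{\rho}-v^{\sigma\opt}_{\rho})_i
\]
admits, for $\rho>0$ small enough, a Laurent expansion $\sum_{k\ge -1} f_k\rho^k$ with coefficients $f_k=(V^{\sigma}_{[k]}-V^{\sigma\opt}_{[k]})_i$. Note that $(f_{-1},\dots,f_d)$ is exactly the $i$-th row of the matrix $V^{\sigma}_{[-1,d]}-V^{\sigma\opt}_{[-1,d]}$. Lemma~\ref{lem:limit sign of Laurent expansion} then gives the following dichotomy:
\begin{itemize}
\item if this row is zero, then $\liminf_{\rho\to0^+}\rho^{-d}f(\rho)=0$;
\item otherwise, the liminf has the sign of the first nonzero entry of the row.
\end{itemize}
Hence $\liminf_{\rho\to 0^+}\rho^{-d}f(\rho)\ge 0$ holds if and only if the row is zero or its first nonzero entry is positive. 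This is precisely the row-wise condition in the definition of $\succeq$.

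Next, quantify over all states. The inequality in \eqref{eq:d sensitive - rho} is componentwise, so it holds for the vector $v^{\sigma}_\rho-v^{\sigma\opt}_\rho$ if and only if the row condition above holds for every $i\in[n]$. That is exactly $V_{[-1,d]}^{\sigma}\succeq V_{[-1,d]}^{\sigma\opt}$. Quantifying over all policies $\sigma$ and invoking the lemma from~\cite{puterman2014markov} (the characterization \eqref{eq:d sensitive - rho}) yields both directions of the theorem simultaneously.

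The only point requiring care is that the Laurent expansion is valid merely on some interval $(0,\bar\rho_\sigma)$, which may depend on $\sigma$. This does no harm, since each liminf only involves $\rho\to0^+$ for a single fixed $\sigma$. One should also note that $\rho^{-d}$ is positive, so multiplying by it does not affect signs. I do not expect a genuine obstacle: the argument is a direct bookkeeping of Lemma~\ref{lem:limit sign of Laurent expansion} applied coordinatewise.
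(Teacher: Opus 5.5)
Your proposal is correct and is essentially the paper's argument. The paper presents the theorem as a direct application of \Cref{lem:limit sign of Laurent expansion} to the Laurent expansion \eqref{eq:d sens vs. laurent} of $v^{\sigma}_{\rho}-v^{\sigma\opt}_{\rho}$, combined with the characterization \eqref{eq:d sensitive - rho}, which is exactly the row-by-row argument you give (the paper also cites \cite{Veinott1969} and \cite{puterman2014markov} for the result).
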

The previous theorem was first proved in~\cite{Veinott1969}; we refer to Theorem 10.2.3 in~\cite{puterman2014markov} for a more modern presentation. 
It shows that $d$-sensitive optimal policies lexicographically minimize the Laurent coefficients of the value functions, from order $-1$ to order $d$. 
We will show that a sufficient condition for the $d$-sensitive optimality of a policy $\sigma\opt$ can be given by considering the Laurent series expansion of the comparison vectors $C^{\sigma,\sigma\opt}(\rho)$ for policies $\sigma$ that are adjacent to $\sigma\opt$  (akin to Proposition~\ref{PropVeinott} for discount optimality).  This is due to the fact that the performance difference lemma~\eqref{eq:performance difference} allows us to relate the Laurent series of $v^{\sigma}_{\rho} - v^{\sigma'}_{\rho}$ with the Laurent series of $R^{\sigma}(\rho)$ and $C^{\sigma,\sigma'}(\rho)$. 
Computing the Laurent series expansion of $C^{\sigma,\sigma'}(\rho)$ is straightforward, given~\eqref{eq:laurent series v} and the definition of comparison vectors~\eqref{eq:definition comparison vector}. More precisely, we have 
\begin{equation}\label{eq:laurent series C}
C^{\sigma, \sigma'} (\rho ) = \sum_{k=-1}^\infty \rho^k C_{[k]}^{\sigma , \sigma'}
\end{equation}
where $C_{[k]}^{\sigma , \sigma'} \coloneqq 1_{ \{k =0\} }r^{\sigma} + L^{\sigma} V_{[k]}^{\sigma'} - V_{[k-1]}^{\sigma'}$, $V_{[-2]}^{\sigma\opt} \coloneqq 0$, and $1_{\{k=0\}}$ equal to $1$ if $k=0$ and $0$ otherwise.

The following lemma describes the asymptotic behavior of the resolvent $R^{\sigma}(\rho)$ close to $0$.
\begin{lemma}\label{lem:resolvent properties}
Let $\sigma$ be a stationary policy. For $i \in [n]$, we define $\epsilon^{\sigma}_{i} \in \{0,1\}$ as follows: $\epsilon^{\sigma}_{i}\coloneqq 1$ if state $i$ is recurrent under $P^{\sigma}$, and $\epsilon^{\sigma}_{i}\coloneqq 0$ otherwise. Then, for each $h \in [n]$, either $R_{hi}^{\sigma}$ is identically zero or there exists $c_{hi}>0$ such that $R_{hi}^{\sigma}(\rho) = c_{hi} \rho^{-\epsilon^{\sigma}_i} + O(\rho^{1-\epsilon^{\sigma}_i})$ for $\rho>0$.
\end{lemma}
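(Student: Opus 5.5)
We must show: each entry $R^\sigma_{hi}(\rho)$ is either identically zero, or behaves like $c_{hi}\rho^{-\epsilon_i}$ with $c_{hi}>0$ plus a correction of order $\rho^{1-\epsilon_i}$. Here $\epsilon_i$ indicates whether state $i$ is recurrent. The plan is to start from the Neumann-type series used in \Cref{prop:resolvent properties - basic}:
\[
R^\sigma(\rho)=\sum_{k\ge 0}(1+\rho)^{-(k+1)}(P^\sigma)^k .
\]
This shows $R^\sigma_{hi}(\rho)=\sum_k (1+\rho)^{-(k+1)}(P^\sigma)^k_{hi}$, a series with nonnegative terms. It is identically zero exactly when $i$ is not reachable from $h$, i.e.\ when $(P^\sigma)^k_{hi}=0$ for all $k\ge0$. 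So from now on we assume $i$ is reachable from $h$. We also use that $R^\sigma_{hi}$ is a rational function of $\rho$ (Cramer, as in~\eqref{ValueStoch}) whose denominator $D^\sigma(\rho)$ vanishes at $0$ to order at most the number of recurrent classes. Hence $R^\sigma_{hi}$ has a Laurent expansion at $0$, and it suffices to identify its leading exponent and the sign of its leading coefficient.

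\emph{Transient case ($\epsilon_i=0$).} Here $(P^\sigma)^k_{hi}$ decays geometrically in $k$, since the expected number of visits to a transient state is finite. By dominated or monotone convergence,
\[
R^\sigma_{hi}(\rho)\to \sum_k (P^\sigma)^k_{hi}=:c_{hi}<\infty \quad\text{as } \rho\to0^+ .
\]
Reachability gives $c_{hi}>0$. Because $R^\sigma_{hi}$ is rational and has a finite limit at $0$, it is analytic at $0$. This yields $R^\sigma_{hi}(\rho)=c_{hi}+O(\rho)$.

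\emph{Recurrent case ($\epsilon_i=1$).} We use the classical Laurent expansion of the resolvent,
\[
R^\sigma(\rho)=\rho^{-1}\Pi^\sigma + H^\sigma + O(\rho),
\]
where $\Pi^\sigma=\lim_N \frac1N\sum_{k<N}(P^\sigma)^k$ is the Cesàro limit (stationary) matrix and $H^\sigma$ is the deviation matrix. One can justify this either by citing Veinott/Puterman or directly from the Abelian limit
\[
\rho R^\sigma(\rho)=\tfrac{\rho}{1+\rho}\sum_k (1+\rho)^{-k}(P^\sigma)^k\to\Pi^\sigma ,
\]
combined with the fact that the pole at $0$ is simple (the eigenvalue $1$ of a stochastic matrix is semisimple). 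The leading coefficient is therefore $c_{hi}=\Pi^\sigma_{hi}$. For recurrent $i$ in class $\mathcal C$, we have $\Pi^\sigma_{hi}=\Pr_h(\text{reach }\mathcal C)\,\pi^{\mathcal C}_i$. Here $\pi^{\mathcal C}_i>0$ is the stationary probability inside the irreducible class, and the hitting probability is positive because $i$ is reachable from $h$. So $c_{hi}>0$, and the remainder is $O(1)=O(\rho^{1-\epsilon_i})$.

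The main obstacle is the recurrent case: showing that the pole at $0$ is simple, with no $\rho^{-2}$ terms, and that the leading coefficient is exactly $\Pi^\sigma_{hi}$. The cleanest route is to invoke semisimplicity of eigenvalue $1$ for stochastic matrices. Equivalently, boundedness of $(P^\sigma)^k$ implies $\|\rho R^\sigma(\rho)\|\le 1$, which rules out a higher-order pole. The remaining positivity facts about $\Pi^\sigma$ are standard Markov chain properties.
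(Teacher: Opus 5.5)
Your proof is correct and follows essentially the same route as the paper: the series $R^{\sigma}(\rho)=\sum_{k\ge 0}(1+\rho)^{-(k+1)}(P^{\sigma})^k$ settles the zero case and the transient case, and in the recurrent case the leading coefficient is $(P^{\sigma})^*_{hi}>0$ by reachability. The only difference is that where the paper simply cites Veinott's expansion $\rho R^{\sigma}(\rho)=(P^{\sigma})^*+O(\rho)$, you also derive it yourself from rationality of $R^{\sigma}$ and the bound $\|\rho R^{\sigma}(\rho)\|_\infty\le 1$.
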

\begin{proof}
If $i$ is transient, we have 
$\lim_{\rho \to 0^+} R^{\sigma}_{hi}(\rho) =\lim_{\rho \to 0^+} \left(\sum_{k=0}^{\infty} (1+\rho)^{-(k+1)} \left((P^{\sigma})^{k}\right)_{hi}\right) = \sum_{k=0}^{\infty} \left((P^{\sigma})^{k}\right)_{hi}<+\infty$.  
Hence, if $R^{\sigma}_{hi}(\rho)$ is not identically zero, this limit is positive and the first Laurent coefficient has index $0$ and is positive. 
If $i$ is recurrent, we use~\cite[Theorem 3]{veinott}, which shows that $\rho R^{\sigma}(\rho) = (P^{\sigma})^* + O(\rho)$, where $(P^{\sigma})^*$ is the limiting matrix of $P^{\sigma}$ (i.e., the Cesaro mean of $P^{\sigma}$). If $i$ is not reachable from $h$ under $P^\sigma$, then $R^{\sigma}_{hi}(\rho)$ is identically zero. Otherwise, we have $(P^{\sigma})^*_{hi}>0$, which completes the proof of the lemma.
 
\end{proof}

In the particular case of adjacent policies, the following lemma describes some properties of $C_{[k]}^{\sigma,\sigma'}$ that mirror the ones for comparison vectors from Lemma~\ref{lem:comp vector adjacent policies prop}.
\begin{lemma}\label{lem:comp matrix adjacent policies prop}
Let $\sigma$ and $\sigma'$ be adjacent policies and $i \in [n]$ the state where they differ. Then:
\begin{itemize}
\item[(a)] For $j \neq i$, we have $(C^{\sigma,\sigma'}_{[k]})_j =0 $ for any $k \geq -1$.
\item[(b)] For any $k \geq -1$, the $i$-th entry of $C^{\sigma,\sigma'}_{[k]}$ only depends on $\sigma'$ and $\sigma(i)$, not on $\sigma(j)$ for $j \neq i$.
\end{itemize}
\end{lemma}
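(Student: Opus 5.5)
The plan is to read both items off the explicit formula $C_{[k]}^{\sigma,\sigma'} = 1_{\{k=0\}} r^{\sigma} + L^{\sigma} V_{[k]}^{\sigma'} - V_{[k-1]}^{\sigma'}$ from~\eqref{eq:laurent series C}, working row by row. The basic observation is that the $j$-th row of this expression involves $\sigma$ only through $r^{\sigma}_j$ and the row $L^{\sigma}_j = P^{\sigma}_j - e_j^\top$. Both of these depend only on the action $\sigma(j)$. Everything else in the formula, namely the Laurent coefficients $V_{[k]}^{\sigma'}$ and $V_{[k-1]}^{\sigma'}$, depends on $\sigma'$ alone.

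For item (b), I apply this observation at $j=i$. The entry is
\[
(C^{\sigma,\sigma'}_{[k]})_i = 1_{\{k=0\}} r^{\sigma}_i + L^{\sigma}_i V^{\sigma'}_{[k]} - (V^{\sigma'}_{[k-1]})_i ,
\]
which is determined by $\sigma(i)$ and $\sigma'$ only, as required.

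For item (a), fix $j \neq i$. Since $\sigma$ and $\sigma'$ are adjacent at $i$, we have $\sigma(j)=\sigma'(j)$, so $r^{\sigma}_j = r^{\sigma'}_j$ and $L^{\sigma}_j = L^{\sigma'}_j$. Hence $(C^{\sigma,\sigma'}_{[k]})_j = (C^{\sigma',\sigma'}_{[k]})_j$, and it suffices to prove that $C^{\sigma',\sigma'}_{[k]}=0$ for every $k\ge -1$. From Definition~\ref{def:Comp-Vect} and~\eqref{eq:ValueDiscountedGame},
\[
C^{\sigma',\sigma'}(\rho) = r^{\sigma'} - (\rho I - L^{\sigma'}) R^{\sigma'}(\rho) r^{\sigma'} = 0
\]
for all $\rho>0$. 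So the Laurent series $\sum_k \rho^k C^{\sigma',\sigma'}_{[k]}$ vanishes identically on an interval $(0,\rho_0)$. By uniqueness of Laurent coefficients, all $C^{\sigma',\sigma'}_{[k]}$ are zero.

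As an alternative route for (a), I could use Lemma~\ref{lem:comp vector adjacent policies prop}(a), which gives $C^{\sigma,\sigma'}_j(\rho)=0$ for all $\rho>0$, and then conclude in the same way by uniqueness of the expansion~\eqref{eq:laurent series C}.

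The only step needing care is justifying this uniqueness: the series~\eqref{eq:laurent series C} must converge on a punctured neighbourhood of $0$. This follows because each $V_{[k]}^{\sigma'}$ comes from the convergent expansion~\eqref{eq:laurent series v} of the rational function $v^{\sigma'}_\rho$, and $C^{\sigma,\sigma'}(\rho)$ is obtained from it by finitely many linear operations. Two convergent Laurent series that agree on an interval have equal coefficients, since after multiplying by $\rho$ they are power series agreeing on an interval. Apart from this point, the argument is bookkeeping.
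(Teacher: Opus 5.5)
Your proposal is correct and follows the paper's proof. Part (b) is read off the same row formula. For (a), your reduction to $C^{\sigma',\sigma'}_{[k]}=0$ is exactly the identity the paper takes from Puterman's Theorem 8.2.8, here proved directly from $C^{\sigma',\sigma'}(\rho)\equiv 0$ and uniqueness of Laurent coefficients, and your alternative route is the ``more direct proof'' the paper itself mentions.
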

\begin{proof}
Both statements rely on the closed-form expression of $C_{[k]}^{\sigma , \sigma'}$ given right after~\eqref{eq:laurent series C}: for any $h \in [n]$ we have $(C_{[k]}^{\sigma , \sigma'})_h= 1_{ \{k =0\} }r^{\sigma}_{h} + (L^{\sigma} V_{[k]}^{\sigma'})_{h} - (V_{[k-1]}^{\sigma'})_h$.
 
(a) Since $\sigma$ and $\sigma'$ coincide for $j \neq i$ we get $(C_{[k]}^{\sigma , \sigma'})_j= 1_{ \{k =0\} }r^{\sigma}_{j} + (L^{\sigma} V_{[k]}^{\sigma'})_{j} - (V_{[k-1]}^{\sigma'})_j$, and \cite[Theorem 8.2.8]{puterman2014markov} shows that $(C_{[k]}^{\sigma , \sigma'})_j=0$. We note that a more direct proof uses the fact that $C^{\sigma,\sigma'}_{j}(\rho)=0$ for any $\rho>0$, and that a Laurent series is identically equal to $0$ if and only if all its coefficients are zero.

(b) The $i$-th entry of $C^{\sigma,\sigma'}_{[k]} $ is
$1_{ \{k =0\} }r^{\sigma}_{i} + (L^{\sigma} V_{[k]}^{\sigma'})_i - (V_{[k-1]}^{\sigma'})_i$. The claim now follows from the fact that $r^{\sigma}_{i}$ and the $i$th row of $L^{\sigma}$ only depend on $\sigma(i)$.  
\end{proof}
In what follows, given $d \in \{0,1,\ldots \}$, we denote by $C_{[-1,d]}^{\sigma,\sigma'}$ the matrix with columns $C_{[-1]}^{\sigma,\sigma'},\dots ,C_{[d]}^{\sigma,\sigma'}$. Note that item (a) of \Cref{lem:comp matrix adjacent policies prop} shows that, if $\sigma$ and $\sigma'$ are adjacent policies that differ at state $i$, then $(C_{[-1,d]}^{\sigma,\sigma'})_j = 0$ for any $d \in \{0,1,\ldots \}$ and $j \neq i$. 

We are now ready to give a sufficient condition for $d$-sensitive optimality.

\begin{proposition}\label{Prop:CharactDSen}
Let $\sigma\opt$ be a stationary policy and $d \in \{-1,0,1, \dots\}$. If $C_{[-1,d+1]}^{\sigma, \sigma^\ast} \succeq 0$ for every stationary policy $\sigma$ adjacent to $\sigma\opt$, then $\sigma\opt$ is $d$-sensitive optimal.
\end{proposition}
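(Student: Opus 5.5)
The plan is to verify the criterion of \Cref{th:d sens lex ordering} through the performance difference lemma~\eqref{eq:performance difference}. Equivalently, by~\eqref{eq:d sensitive - rho} and Lemma~\ref{lem:limit sign of Laurent expansion}, we show that $\liminf_{\rho\to 0^+}\rho^{-d}(v^{\sigma}_{\rho}-v^{\sigma\opt}_{\rho})_h \ge 0$ for every stationary policy $\sigma$ and every state $h$. Fix such $\sigma$ and $h$ and write
\[
(v^{\sigma}_{\rho}-v^{\sigma\opt}_{\rho})_h=\sum_{j=1}^n R^{\sigma}_{hj}(\rho)\,C^{\sigma,\sigma\opt}_j(\rho).
\]
All functions involved are rational in $\rho$, so each has a Laurent expansion at $0$ and these expansions can be multiplied termwise.

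\textbf{Step 1: reduce each row of $C^{\sigma,\sigma\opt}$ to an adjacent policy.} For each state $j$, let $\sigma_j$ agree with $\sigma$ at $j$ and with $\sigma\opt$ elsewhere. If $\sigma(j)=\sigma\opt(j)$, then $C^{\sigma,\sigma\opt}_j\equiv 0$. Indeed, its row equals $r^{\sigma\opt}_j+((L^{\sigma\opt}-\rho I)v^{\sigma\opt}_\rho)_j=0$, since $v^{\sigma\opt}_\rho=R^{\sigma\opt}(\rho)r^{\sigma\opt}$. Otherwise $\sigma_j$ is adjacent to $\sigma\opt$ at $j$, and by Lemma~\ref{lem:comp matrix adjacent policies prop}(b) (or Lemma~\ref{lem:comp vector adjacent policies prop}(b) at the level of functions) we have $C^{\sigma,\sigma\opt}_j=C^{\sigma_j,\sigma\opt}_j$. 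The hypothesis $C^{\sigma_j,\sigma\opt}_{[-1,d+1]}\succeq 0$ then gives a dichotomy for each $j$. Either
\[
C^{\sigma,\sigma\opt}_j(\rho)=\rho^{m_j}(\gamma_j+O(\rho)) \quad\text{with } \gamma_j>0 \text{ and } -1\le m_j\le d+1,
\]
or the coefficients of orders $-1,\dots,d+1$ all vanish, so that $C^{\sigma,\sigma\opt}_j(\rho)=O(\rho^{d+2})$.

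\textbf{Step 2: multiply by the resolvent.} By Lemma~\ref{lem:resolvent properties}, each $R^{\sigma}_{hj}$ is either identically zero or equal to $c_{hj}\rho^{-\epsilon_j}+O(\rho^{1-\epsilon_j})$ with $c_{hj}>0$ and $\epsilon_j\in\{0,1\}$. Hence each term $R^{\sigma}_{hj}C^{\sigma,\sigma\opt}_j$ falls into one of two classes:
\begin{itemize}
\item[(i)] it has the form $\rho^{m_j-\epsilon_j}(c_{hj}\gamma_j+O(\rho))$, with a strictly positive leading coefficient;
\item[(ii)] it is $O(\rho^{d+2-\epsilon_j})\subseteq O(\rho^{d+1})$.
\end{itemize}
This is exactly where the shift from $d$ to $d+1$ in the hypothesis is used: the possible simple pole of the resolvent at recurrent states eats one order. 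Splitting the sum, let $m:=\min\{m_j-\epsilon_j\}$ over the terms of class (i), if there are any. All leading coefficients of class-(i) terms of order $m$ are positive, so they cannot cancel. The sum therefore equals $\rho^{m}(\kappa+O(\rho))$ with $\kappa>0$, up to an $O(\rho^{d+1})$ remainder coming from class (ii) (and from class-(i) terms of order $\geq d+1$).

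\textbf{Step 3: conclude.} If $m\le d$, the first nonzero Laurent coefficient of $(v^\sigma_\rho-v^{\sigma\opt}_\rho)_h$ among the indices up to $d$ is $\kappa>0$. Otherwise the whole difference is $O(\rho^{d+1})$. In both cases Lemma~\ref{lem:limit sign of Laurent expansion} gives $\liminf_{\rho\to0^+}\rho^{-d}(v^\sigma_\rho-v^{\sigma\opt}_\rho)_h\ge 0$.

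A possible order $m=-2$ (from $m_j=-1$ and $\epsilon_j=1$) causes no trouble. It would be an uncancelled positive $\rho^{-2}$ term, which contradicts the fact that the value difference has a Laurent expansion starting at order $-1$, so it cannot occur. In any case a positive leading term only helps the inequality.

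The main point requiring care is Step 2: the absence of cancellation relies entirely on the nonnegativity of $R^\sigma(\rho)$ together with the positivity of the leading coefficients. One must also keep track of the remainder orders to check that every non-positive-leading contribution is indeed $O(\rho^{d+1})$.
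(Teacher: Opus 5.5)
Your proof is correct and follows essentially the same route as the paper. You reduce row-wise, $C^{\sigma,\sigma^\ast}_j=C^{\sigma_j,\sigma^\ast}_j$, to adjacent policies, then apply the performance difference lemma and the resolvent asymptotics of Lemma~\ref{lem:resolvent properties}, with the shift $d\to d+1$ absorbing the possible simple pole. The only difference is that the paper concludes termwise: each summand $\rho^{-d}R^{\sigma}_{hj}(\rho)C^{\sigma,\sigma^\ast}_j(\rho)$ is either $O(\rho)$ or nonnegative for small $\rho$, so its $\liminf$ is $\ge 0$. That makes your aggregation of leading coefficients and your discussion of a possible order $-2$ term unnecessary, though both are valid.
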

\begin{proof}
In the first place we show that, under the given assumption, we actually have $C_{[-1,d+1]}^{\sigma, \sigma\opt} \succeq 0$ for any policy $\sigma$ (not just for policies adjacent to $\sigma\opt$). To see this, let $\sigma $ be any policy and $i \in [n]$ be any state. Let $\sigma'$ be the policy defined by $\sigma'(i)\coloneq\sigma(i)$ and $\sigma'(j)\coloneq\sigma\opt(j)$ for $j \neq i$. Then, since $\sigma'$ is adjacent to $\sigma\opt$, by assumption we have $C_{[-1,d+1]}^{\sigma', \sigma\opt} \succeq 0$, and so $(C_{[-1,d+1]}^{\sigma', \sigma\opt})_i \succeq 0$. Moreover, by item (b) of Lemma~\ref{lem:comp matrix adjacent policies prop} and the fact that $\sigma'(i)=\sigma(i)$, we have $(C^{\sigma,\sigma\opt}_{[k]})_i = (C^{\sigma',\sigma\opt}_{[k]})_i$ for any $k$. We conclude that $(C_{[-1,d+1]}^{\sigma, \sigma\opt})_i \succeq 0$. Since this holds for any state $i \in [n]$, it proves our claim. 

Now we prove that $\sigma\opt$ is $d$-sensitive optimal. Let $\sigma$ be any stationary policy and $i \in [n]$ be any state. By~\eqref{eq:performance difference} we have $\displaystyle (v^{\sigma}_{\rho} - v^{\sigma\opt}_{\rho})_i = \sum_{j=1}^{n} R^{\sigma}_{ij}(\rho) C^{\sigma,\sigma\opt}_{j}(\rho)$, 
and therefore
\begin{equation}\label{eq:perf diff S1 S2}
 \frac{(v^{\sigma}_{\rho} - v^{\sigma\opt}_{\rho})_i}{\rho^{d}}   = \sum_{j \in [n], (C_{[-1,d+1]}^{\sigma, \sigma\opt})_j = 0} \frac{R^{\sigma}_{ij}(\rho) C^{\sigma,\sigma\opt}_{j}(\rho)}{\rho^{d}} + \sum_{j \in [n], (C_{[-1,d+1]}^{\sigma, \sigma\opt})_j \neq 0}  \frac{R^{\sigma}_{ij}(\rho) C^{\sigma,\sigma\opt}_{j}(\rho)}{\rho^{d}} \; .
\end{equation}
We now analyze the terms on the right-hand side of~\eqref{eq:perf diff S1 S2}. 
\begin{itemize}
\item If $(C_{[-1,d+1]}^{\sigma, \sigma\opt})_j = 0$, the terms of order $-1,0,...,d+1$ of the Laurent expansion of $C^{\sigma,\sigma\opt}_{j}(\rho)$ are null, so we have $C^{\sigma,\sigma\opt}_{j}(\rho) = O\left(\rho^{d+2}\right)$. Besides, by \Cref{lem:resolvent properties} we have $R^{\sigma}_{ij}(\rho) = O(\rho^{-\epsilon^{\sigma}_{j}})$, where $\epsilon^{\sigma}_{j} \in \{0,1\}$. It follows that $R^{\sigma}_{ij}(\rho) C^{\sigma,\sigma\opt}_{j}(\rho) = O(\rho^{d+1})$, and so $\liminf_{\rho \to 0^+} \rho^{-d} R^{\sigma}_{ij}(\rho) C^{\sigma,\sigma\opt}_{j}(\rho) = 0$.
\item If $(C_{[-1,d+1]}^{\sigma, \sigma\opt})_j \neq 0$, by the first part of the proof we must have $(C_{[-1,d+1]}^{\sigma, \sigma\opt})_j \succ 0$. Then, if $k'$ is the smallest $k \in \{-1,...,d+1 \}$ such that $(C^{\sigma,\sigma\opt}_{[k]})_j \neq 0$, we have $(C^{\sigma,\sigma\opt}_{[k']})_j>0$ and $
C^{\sigma,\sigma\opt}_{j}(\rho) = \rho^{k'} \left((C^{\sigma,\sigma\opt}_{[k']})_j  + o(1)\right)$. It follows that $C^{\sigma,\sigma\opt}_{j}(\rho)$ is positive for $\rho$ sufficiently small, and therefore this is still the case for $\rho^{-d}R^{\sigma}_{ij}(\rho) C^{\sigma,\sigma\opt}_{j}(\rho)$ because $R^{\sigma}(\rho)\geq 0$ by \Cref{prop:resolvent properties - basic}.  
Thus, we have $\liminf_{\rho \to 0^+} \rho^{-d}R^{\sigma}_{ij}(\rho) C^{\sigma,\sigma\opt}_{j}(\rho) \geq 0$. 
\end{itemize}
It follows that every term on the right-hand side of~\eqref{eq:perf diff S1 S2} has a non-negative $\liminf$ as $\rho \to 0^+$, and therefore $\liminf_{\rho \to 0^+} \rho^{-d}\left(v^{\sigma}_{\rho} - v^{\sigma\opt}_{\rho}\right)_i \geq 0$. Since this holds for any stationary policy $\sigma$ and any $i\in [n]$, we conclude that $\sigma\opt$ is $d$-sensitive optimal.

\end{proof}
The previous proposition improves \cite[Theorem 4]{Veinott1969} by considering only adjacent policies.

\subsection{Polynomials admitting $\rhod$ as a root.}

Let $\sigma$ and $\sigma'$ be adjacent policies at state $i \in [n]$. We denote by $T_{i}^{\sigma,\sigma'} \subset [n]$ the set of states reachable by $P^{\sigma'}$ from either $i$ or the successors of $i$ accessible by playing according to $\sigma$ at state $i$ (which coincides with the support of $P^{\sigma}_{i}$):
\[
T^{\sigma,\sigma'}_{i} \coloneq \left\{ j \in [n] \; | \; \exists h \in \{i\} \cup {\sf supp} \left(P^{\sigma}_{i}\right), \exists k \in \N, ((P^{\sigma'})^{k})_{hj} > 0 \right\} 
\]
where the support of a vector is the set of indices of the components that are nonzero. By the definition of $T_{i}^{\sigma,\sigma'}$ we get that the restriction $\tilde{P}^{\sigma'} \in \R^{|T_{i}^{\sigma,\sigma'}| \times |T_{i}^{\sigma,\sigma'}|}$ of $P^{\sigma'}$ to the states in $T_{i}^{\sigma,\sigma'}$ is stochastic.
We also consider the identity matrix $\tilde{I}$ and the Laplacian matrix $\tilde{L}^{\sigma'}$ restricted to $T_{i}^{\sigma,\sigma'}$, so $\tilde{I}, \tilde{L}^{\sigma'} \in \R^{|T_{i}^{\sigma,\sigma'}| \times |T_{i}^{\sigma,\sigma'}|}$. Using this notation, we define $\tilde{J}^{\sigma'} \coloneqq M \tilde{L}^{\sigma'}$ and $\tilde{D}^{\sigma'}_{i}(\rho) \coloneqq \det(\rho M \tilde{I} - \tilde{J}^{\sigma'})$. 
Then, we have the following property, which follows from~\cite[Theorem 2.1]{karlin1981second}.
\begin{lemma}\label{lem:positivity D tilde}

  Let $\lambda^{\sigma,\sigma'}_{i} \in \N$ be the multiplicity of $0$ as a root of $\tilde{D}^{\sigma'}_{i}(\rho)$. Then, $\lambda^{\sigma,\sigma'}_{i}$ is the number of recurrent classes of $P^{\sigma'}$ reachable from $\{i\} \cup {\sf supp} (P^{\sigma}_{i})$.
\end{lemma}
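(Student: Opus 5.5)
The plan is to identify $\lambda^{\sigma,\sigma'}_{i}$ with the algebraic multiplicity of the eigenvalue $0$ of $\tilde{L}^{\sigma'}$, and then to compute that multiplicity from the class structure of the stochastic matrix $\tilde{P}^{\sigma'}$. Since $\tilde{D}^{\sigma'}_{i}(\rho) = M^{|T|}\det(\rho \tilde{I} - \tilde{L}^{\sigma'})$, where $T \coloneqq T_{i}^{\sigma,\sigma'}$, the multiplicity of $0$ as a root of $\tilde{D}^{\sigma'}_{i}$ equals the algebraic multiplicity of the eigenvalue $1$ of $\tilde{P}^{\sigma'}$. The restriction $\tilde{P}^{\sigma'}$ is stochastic because $T$ is closed under $P^{\sigma'}$.

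I then order the states of $T$ so that $\tilde{P}^{\sigma'}$ is block upper triangular, with diagonal blocks indexed by its communicating classes. This is the Frobenius normal form. The characteristic polynomial then factorizes as the product of the characteristic polynomials of the diagonal blocks, so I treat the two kinds of blocks separately.

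\begin{itemize}
\item For a \emph{recurrent} (closed) class $K$, the block $P_{KK}$ is an irreducible stochastic matrix. By Perron--Frobenius, $1$ is a simple eigenvalue of $P_{KK}$.
\item For a \emph{transient} class $K$, the block $P_{KK}$ is irreducible (or a $1\times 1$ block) and substochastic, with at least one row sum strictly less than $1$, since some mass leaves $K$. Its spectral radius is therefore $<1$, so $1$ is not an eigenvalue. The $1\times1$ case of a transient state without a self-loop of mass $1$ is immediate.
\end{itemize}

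Hence $\lambda^{\sigma,\sigma'}_{i}$ equals the number of recurrent classes of $\tilde{P}^{\sigma'}$. This is the same as the general fact that the algebraic multiplicity of $1$ for a stochastic matrix equals its number of recurrent classes, which is what~\cite[Theorem 2.1]{karlin1981second} provides, and I would cite it for this step.

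It remains to match the recurrent classes of $\tilde{P}^{\sigma'}$ with the recurrent classes of $P^{\sigma'}$ reachable from $\{i\}\cup{\sf supp}(P^{\sigma}_i)$. Since $T$ is closed under $P^{\sigma'}$, a class of $\tilde{P}^{\sigma'}$ is closed in $\tilde{P}^{\sigma'}$ if and only if it is closed in $P^{\sigma'}$. Every recurrent class of $P^{\sigma'}$ reachable from that starting set lies entirely in $T$, because reachability propagates within a class. Conversely, every recurrent class of $\tilde{P}^{\sigma'}$ is contained in $T$, and hence is reachable from the starting set by the definition of $T$.

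The main obstacle is the transient-block claim: one must show that an irreducible substochastic block with some deficient row has spectral radius strictly below $1$. I would settle it with the standard argument that irreducibility plus a strict row deficiency gives $P_{KK}^m e < e$ for $m = |K|$, which forces the spectral radius below $1$.
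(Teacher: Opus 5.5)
Your proof is correct and takes the same route as the paper, which simply invokes \cite[Theorem 2.1]{karlin1981second} for the restricted stochastic matrix $\tilde{P}^{\sigma'}$: the multiplicity of $0$ as a root of $\tilde{D}^{\sigma'}_{i}$ equals the algebraic multiplicity of the eigenvalue $1$ of $\tilde{P}^{\sigma'}$, which equals its number of recurrent classes. Your version is more self-contained in two ways: the Frobenius normal form and Perron--Frobenius argument reproves the cited fact, and you make explicit what the paper leaves implicit, namely that because $T^{\sigma,\sigma'}_{i}$ is closed, the recurrent classes of $\tilde{P}^{\sigma'}$ are exactly the recurrent classes of $P^{\sigma'}$ reachable from $\{i\}\cup{\sf supp}(P^{\sigma}_{i})$.
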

We are now ready to construct our family $\cP_d$:
\begin{equation}\label{eq:reduced family - d sens}
    \cP_d  = \left\{\rho \mapsto \frac{\tilde{D}^{\sigma'}_{i}(\rho)}{M \rho} C^{\sigma,\sigma'}_{i}(\rho) \; | \; i \in [n], \sigma \text{ and } \sigma' \text{ are adjacent at state } i,  \left( C^{\sigma,\sigma'}_{[-1,d+1]}\right)_{i} \neq 0 \right\}.
\end{equation}
The next lemma follows from the same analysis as in Section~\ref{sec:bounding height degree}.
\begin{lemma}\label{lem:height degree cPd}
Let $\Delta(\rho)=\tilde{D}^{\sigma'}_{i}(\rho) C^{\sigma,\sigma'}_{i}(\rho)/M\rho$, where the policies $\sigma$ and $\sigma'$ are adjacent at state $i\in [n]$. 
Then, $\Delta(\rho)$ is of the form $\sum_{k=0}^{n-1} b_k \rho^k$, where $b_k \in \Z$ and $|b_k| \leq 2 W M^{n-1} {n+1 \choose k+2}$.
\end{lemma}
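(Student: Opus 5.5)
The plan is to reduce to the setting of Section~\ref{sec:bounding height degree} by working on the sub-MDP induced on $T \coloneqq T_{i}^{\sigma,\sigma'}$, of size $m \coloneqq |T| \le n$. The key observation is that $C^{\sigma,\sigma'}_i(\rho) = r^{\sigma}_i + \sum_h P^{\sigma}_{ih}(v^{\sigma'}_\rho)_h - (1+\rho)(v^{\sigma'}_\rho)_i$ only involves the entries $(v^{\sigma'}_\rho)_h$ with $h \in \{i\}\cup{\sf supp}(P^\sigma_i) \subseteq T$. Since $T$ is closed under $P^{\sigma'}$, these entries coincide with the value function of the restricted chain $\tilde P^{\sigma'}$ with rewards $r^{\sigma'}$ restricted to $T$. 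That is, $(v^{\sigma'}_\rho)_h = ((\rho \tilde I - \tilde L^{\sigma'})^{-1}\tilde r^{\sigma'})_h$ for $h\in T$.

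First, I will apply \eqref{ValueStoch} to the restricted system, which gives $(v^{\sigma'}_\rho)_h = M\sum_{j\in T}\cof_{jh}(\rho M\tilde I-\tilde J^{\sigma'})r^{\sigma'}_j/\tilde D^{\sigma'}_i(\rho)$. Substituting this into $C^{\sigma,\sigma'}_i$ yields the analogue of \eqref{eq:Delta decomposition 0} with $D^{\sigma'}$ replaced by $\tilde D^{\sigma'}_i$, with all cofactors taken in dimension $m$, and with the sum over $h$ still ranging only over $T$ because $P^\sigma_i$ is supported in $T$.

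Second, I need to verify integrality after dividing by $M\rho$. The argument in the proof of Theorem~\ref{th:cP is reduced family} transfers verbatim. Since $\tilde P^{\sigma'}$ is stochastic, $\tilde D^{\sigma'}_i(0)=0$ and $\tilde D^{\sigma'}_i$ is a polynomial in $M\rho$ with integer coefficients. Moreover, $\cof_{jh}(-\tilde J^{\sigma'})$ does not depend on $h$: either the rank is at most $m-2$ and the cofactors vanish, or the rank is $m-1$ and the kernel is $\R e$. Together with $\sum_{h\in T}P^\sigma_{ih}=1$, this makes the constant term vanish, so $\Delta\in\Z[\rho]$.

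Third, for the degree and height bounds I will apply Propositions~\ref{Prop_det_bound} and~\ref{Prop_adj_bound} to the $m$-state chain $\tilde P^{\sigma'}$. Their proofs use only row-stochasticity and the forest expansions of \cite{Chebotarev2002}, so they give $|\text{coef}_k(\tilde D^{\sigma'}_i)|\le M^m\binom{m}{k}$ and $|\text{coef}_k(\tilde D^{\sigma'}_i (v^{\sigma'}_\rho)_h)|\le WM^m\binom{m}{k+1}$ for $h\in T$. Repeating the four-term count from the proof of Theorem~\ref{th:boundDeltaCoeff} then gives that $\Delta$ has degree at most $m-1\le n-1$ with $|b_k|\le 2WM^{m-1}\binom{m+1}{k+2}$. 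Finally, since $m\le n$, monotonicity gives $M^{m-1}\le M^{n-1}$ and $\binom{m+1}{k+2}\le\binom{n+1}{k+2}$. For $k > m-1$ we simply set $b_k=0$.

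The only delicate step is the first one. There I must check carefully that the restriction of the value function to $T$ is exactly the value function of the restricted chain, which relies on $T$ being $P^{\sigma'}$-closed so that $R^{\sigma'}$ is block triangular with respect to $(T, T^c)$. I must also check that every index $h$ appearing in $C^{\sigma,\sigma'}_i$, including $i$ itself, lies in $T$. Both facts follow directly from the definition of $T_i^{\sigma,\sigma'}$.
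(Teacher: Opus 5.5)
Your proposal is correct and follows the paper's route. You restrict to the $P^{\sigma'}$-closed set $T_i^{\sigma,\sigma'}$ and observe that $C^{\sigma,\sigma'}_i(\rho)$ is the $i$-th entry of the restricted comparison vector. Theorem~\ref{th:boundDeltaCoeff}, with the integrality argument of Theorem~\ref{th:cP is reduced family}, then applies with $m=|T_i^{\sigma,\sigma'}|$ states, and you conclude by monotonicity of $2WM^{m-1}{m+1 \choose k+2}$ in $m$. The only difference is that you spell out the block-triangularity and the rerun of Propositions~\ref{Prop_det_bound} and~\ref{Prop_adj_bound}, which the paper's one-line proof leaves implicit.
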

\begin{proof}
This lemma follows by an application of \Cref{th:boundDeltaCoeff} to the MDP obtained restricting the state set to $T^{\sigma,\sigma'}_{i}$, noting that $C^{\sigma,\sigma'}_{i}(\rho)$ coincides with the $i$-th component of the comparison vector $\tilde{C}^{\sigma,\sigma'}(\rho)$ of this restricted MDP, along with the fact that the bounds in \Cref{th:boundDeltaCoeff} are non-decreasing in the number of states.  

\end{proof}
We are now ready to state the main result of this section, which shows that $\rhod$ is the root of a polynomial of the reduced family $\cP_d$, and shows some additional structure on the first nonzero coefficients of this polynomial.
\begin{theorem}\label{Th_d-sensitive_reduced}
Assume that the $\sens$-sensitive threshold $\rhod$ is such that $\rhod<+\infty$. Then, $\rhod$ is a positive root of a polynomial $\Delta(\rho)$ belonging to $\cP_d$ of the form $\sum_{k=j}^{n-1} b_k \rho^k$, with $b_j \neq 0$ and $j \leq d + \epsilon^{\sigma}_{i} + \lambda^{\sigma,\sigma'}_{i} - 1$,  where $\sigma$ and $\sigma'$ are policies adjacent at state $i \in [n]$.
\end{theorem}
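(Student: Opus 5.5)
We mirror the proof of Theorem~\ref{th:reduced family root bound alphabw}, replacing discount optimality at $\rho_\ell$ by the Laurent-series criterion of Proposition~\ref{Prop:CharactDSen}, and then read off the order of vanishing at $0$.

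\emph{Step 1: locating the root.} By definition of $\rhod$, there are interest rates $\rho_\ell \downarrow \rhod$ and policies discount optimal at $\rho_\ell$ that are not $d$-sensitive optimal. Since there are finitely many deterministic policies, one such policy $\hat\sigma$ works for infinitely many $\ell$. By continuity it is discount optimal at $\rhod$. Symmetrically, approaching from below, some policy $\sigma'$ is discount optimal at $\rhod$ and on $(0,\rhod)$, hence Blackwell optimal and in particular $d$-sensitive optimal.

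The policy $\sigma'$ cannot remain discount optimal at $\rho_\ell$ for large $\ell$. Otherwise $v^{\sigma'}_{\rho_\ell}=v^{\hat\sigma}_{\rho_\ell}$ infinitely often, so the two value functions coincide as rational functions, and then $\hat\sigma$ would be $d$-sensitive optimal. By Proposition~\ref{PropVeinott} there is therefore a policy $\sigma$ adjacent to $\sigma'$ at some state $i$ with $C^{\sigma,\sigma'}_i(\rho_\ell)<0$ for infinitely many $\ell$, while $C^{\sigma,\sigma'}_i(\rhod)\ge 0$. Continuity gives $C^{\sigma,\sigma'}_i(\rhod)=0$.

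\emph{Step 2: membership in $\cP_d$.} We must show $(C^{\sigma,\sigma'}_{[-1,d+1]})_i\neq 0$. Since $C^{\sigma,\sigma'}_i$ takes negative values, it is not identically zero. We therefore aim for a stronger statement: choose $\sigma'$ (among Blackwell optimal policies that are discount optimal near $\rhod$) so that the relevant comparison entries are nonzero up to order $d+1$.

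I expect this step to be the main obstacle. A vanishing $(C^{\sigma,\sigma'}_{[-1,d+1]})_i$ only means $C^{\sigma,\sigma'}_i=O(\rho^{d+2})$, which does not immediately contradict anything. The first approach I would try is to argue by contradiction at the level of $\hat\sigma$. Because $\hat\sigma$ is not $d$-sensitive optimal, Proposition~\ref{Prop:CharactDSen} gives an adjacent $\sigma$ with $(C^{\sigma,\hat\sigma}_{[-1,d+1]})_i\prec 0$. If the argument of Step 1 is run with the pair $(\sigma,\hat\sigma)$ in place of $(\sigma,\sigma')$, a root of $C^{\sigma,\hat\sigma}_i$ in $(\rhod,\rho_\ell]$ can be located. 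This suggests redefining $\rhod$'s witness as a sign change of a comparison entry whose low-order Laurent part is nonzero. The pair then lies in $\cP_d$ by construction.

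\emph{Step 3: the polynomial and its order at $0$.} By Lemma~\ref{lem:zeros delta c}-type positivity of $\tilde D^{\sigma'}_i$ on $(0,\infty)$ (Lemma~\ref{lem:positivity D tilde}), the polynomial $\Delta=\tilde D^{\sigma'}_i C^{\sigma,\sigma'}_i/(M\rho)$ vanishes at $\rhod$. By Lemma~\ref{lem:height degree cPd} it has integer coefficients and degree at most $n-1$. For the order of $\Delta$ at $0$, we write:
\begin{itemize}
\item $\tilde D^{\sigma'}_i(\rho)=\rho^{\lambda}(a+O(\rho))$ with $a\neq 0$ and $\lambda=\lambda^{\sigma,\sigma'}_i$, by Lemma~\ref{lem:positivity D tilde};
\item $C^{\sigma,\sigma'}_i(\rho)=\rho^{k'}(c+o(1))$ with $c\neq0$ and $k'\le d+1$, using membership in $\cP_d$.
\end{itemize}
Hence the lowest nonzero coefficient of $\Delta$ has index $j=\lambda+k'-1$. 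To get the sharper bound $j\le d+\epsilon^{\sigma}_i+\lambda-1$ we must show $k'\le d+\epsilon^\sigma_i$. When $i$ is recurrent under $P^\sigma$, the bound $k'\le d+1$ already suffices. When $i$ is transient, $R^\sigma_{ii}=\Theta(1)$ by Lemma~\ref{lem:resolvent properties}, so Lemma~\ref{lem:comp vector adjacent policies prop}(c) gives $(v^\sigma-v^{\sigma'})_i=\Theta(C^{\sigma,\sigma'}_i)$. The $d$-sensitive optimality of $\sigma'$, together with the failure at $\hat\sigma$ traced to order $\le d$ through $R^\sigma_{ii}$, forces the first nonzero Laurent index to be at most $d$. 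I would verify this last reduction carefully, as it is where $\epsilon^\sigma_i$ enters.
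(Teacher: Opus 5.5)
There is a genuine gap, and it is where you flagged it in Step~2. The policy $\sigma$ from Step~1 is just some policy adjacent to $\sigma'$ that beats it at infinitely many $\rho_\ell$. Nothing ties it to the non-$d$-sensitive-optimal policy $\hat\sigma$, and nothing prevents $\sigma$ from being $d$-sensitive optimal itself. That is, one may have $V^{\sigma}_{[-1,d]}=V^{\sigma'}_{[-1,d]}$ while the two value functions still cross at $\rhod$ through higher-order Laurent terms. In that case $(v^{\sigma}_\rho-v^{\sigma'}_\rho)_i=R^{\sigma}_{ii}(\rho)C^{\sigma,\sigma'}_i(\rho)$ forces $C^{\sigma,\sigma'}_i$ to have order at least $d+1+\epsilon^{\sigma}_i$ at $0$. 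So the pair may fail to belong to $\cP_d$, and even when it does belong, the bound $j\le d+\epsilon^{\sigma}_i+\lambda^{\sigma,\sigma'}_i-1$ fails. This is also why the transient case of your Step~3 cannot be closed: the ``failure at $\hat\sigma$'' you invoke says nothing about the pair $(\sigma,\sigma')$.

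Your tentative repair does not work either. Since $\hat\sigma$ is discount optimal at $\rhod$ and at every $\rho_\ell$, you have $C^{\sigma,\hat\sigma}_i\ge 0$ at all these points, so no root in $(\rhod,\rho_\ell]$ is forced. The sign change guaranteed by $(C^{\sigma,\hat\sigma}_{[-1,d+1]})_i\prec 0$ lies in $(0,\rhod]$ and can occur strictly below $\rhod$. Separately, your claim that $\sigma'$ is discount optimal on all of $(0,\rhod)$, ``hence Blackwell optimal'', is unjustified. Discount optimal policies on $(\rhobw,\rhod)$ need not be Blackwell optimal. Only the $d$-sensitive optimality of $\sigma'$ is available, which is all you need.

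The missing idea is to drop the Blackwell-style sign-change argument and interpolate between the two policies you already have.
\begin{enumerate}
\item Both $\sigma'$ and $\hat\sigma$ are discount optimal at $\rhod$, so they share the optimal value there. Every hybrid policy playing, at each state, the action of one of them satisfies the same Bellman equation, hence is also discount optimal at $\rhod$.
\item Switch states one at a time, $\sigma_0=\sigma',\sigma_1,\dots,\sigma_n=\hat\sigma$. Let $m$ be the first index with $V^{\sigma_m}_{[-1,d]}\neq V^{\sigma'}_{[-1,d]}$; it exists because $\hat\sigma$ is not $d$-sensitive optimal, by Theorem~\ref{th:d sens lex ordering}.
\item Then $\sigma_{m-1}$ is $d$-sensitive optimal, $\sigma_m$ is not, and they are adjacent at some state $i$. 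Moreover $C^{\sigma_m,\sigma_{m-1}}_i(\rhod)=0$ follows directly from the equality of their values at $\rhod$, with no limiting argument.
\item Since $V^{\sigma_m}_{[-1,d]}-V^{\sigma_{m-1}}_{[-1,d]}\succ 0$, there is a state $h$ at which $(v^{\sigma_m}_\rho-v^{\sigma_{m-1}}_\rho)_h=R^{\sigma_m}_{hi}(\rho)C^{\sigma_m,\sigma_{m-1}}_i(\rho)$ has order $k\le d$. In particular $R^{\sigma_m}_{hi}\not\equiv 0$.
\item By Lemma~\ref{lem:resolvent properties}, $R^{\sigma_m}_{hi}$ then has order exactly $-\epsilon^{\sigma_m}_i$. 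So the comparison entry has order $k+\epsilon^{\sigma_m}_i\le d+\epsilon^{\sigma_m}_i\le d+1$.
\end{enumerate}
This gives membership in $\cP_d$ and the bound on $j$ at once. Your Step~3 computation of the lowest index of $\Delta$ then goes through unchanged.
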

\begin{proof}
A similar argument to the one used in the proof of \Cref{th:reduced family root bound alphabw} shows that there exist two stationary deterministic policies $\sigma$ and $\sigma\opt$ such that both of them are discount optimal at $\rhod$, $\sigma\opt$ is $d$-sensitive optimal, and $\sigma$ is not $d$-sensitive optimal. 
We now show that we can actually pick $\sigma$ adjacent to $\sigma\opt$ at some state $i$. 
With this aim, let us consider the sequence of policies $\sigma_0, \ldots , \sigma_n$, where $\sigma_{k}$ is obtained from $\sigma\opt$ by replacing, in the states in $\{1,\ldots ,k\}$, the action chosen by $\sigma\opt$ by the action chosen by $\sigma$, so that $\sigma_0 = \sigma\opt$ and $\sigma_n = \sigma$. Note that any two consecutive policies in $\sigma_0, \ldots , \sigma_n$ are adjacent, and any policy of this sequence is discount optimal at $\rhod$ (any policy that coincides with $\sigma$ or $\sigma\opt$ at all states is discount optimal). Besides, by \Cref{th:d sens lex ordering}, for any $k \in [n]$ we have $V^{\sigma_{k}}_{[-1,d]} - V^{\sigma\opt}_{[-1,d]} \succeq 0$  due to the $d$-sensitive optimality of $\sigma\opt$. Let $k'$ be the smallest $k \in [n]$ such that $V^{\sigma_{k}}_{[-1,d]} - V^{\sigma\opt}_{[-1,d]} \neq  0$. By minimality of $k'$, it follows that $V^{\sigma_{k'-1}}_{[-1,d]} = V^{\sigma\opt}_{[-1,d]}$, and so $\sigma_{k'-1}$ is $d$-sensitive optimal by \Cref{th:d sens lex ordering}, and that $V^{\sigma_{k'}}_{[-1,d]} - V^{\sigma_{k'-1}}_{[-1,d]} \succ 0$, and so $\sigma_{k'}$ is not $d$-sensitive optimal by the same theorem. Hence, the policies $\sigma_{k'}$ and $\sigma_{k'-1}$ are adjacent, both of them are discount optimal at $\rhod$, $\sigma_{k'-1}$ is $d$-sensitive optimal, and $\sigma_{k'}$ is not $d$-sensitive optimal. For the rest of the proof, we relabel $\sigma_{k'}$ and $\sigma_{k'-1}$ as $\sigma$ and $\sigma'$ respectively, and we let $i \in [n]$ be the state where they differ.

Let $\Delta(\rho) = \tilde{D}^{\sigma'}_{i}(\rho)C^{\sigma,\sigma'}_{i}(\rho)/M\rho$. Note that $\Delta(\rhod)=0$ since $C^{\sigma,\sigma'}_{i}(\rhod)=0$ (both $\sigma$ and $\sigma'$ are discount optimal at $\rhod$). The fact that $\Delta(\rho)$ is of the form $\sum_{k=0}^{n-1} b_k \rho^k$ follows from Lemma~\ref{lem:height degree cPd}. 
Now let $d'$ be the smallest  $d\in \{-1,0,1,\ldots\}$ such that $(C^{\sigma,\sigma'}_{[d]})_i \neq 0$. By definition of $d'$, we have $C^{\sigma,\sigma'}_{i}(\rho) = \rho^{d'}g(\rho)$ for some Laurent series $g(\rho)$ such that $g(0) \neq 0$. Moreover, by definition of $\lambda^{\sigma,\sigma'}_{i}$ we have $\tilde{D}^{\sigma'}_{i}(\rho) = \rho^{\lambda^{\sigma,\sigma'}_{i}}p(\rho)$, where $p(\rho)$ is a polynomial such that $p(0) \neq 0$. 
It follows that $\Delta(\rho) = \rho^{d'+\lambda^{\sigma,\sigma'}_{i}-1}p(\rho)g(\rho)$ with $p(0)g(0) \neq 0$, which  implies $b_0=\ldots = b_{d'+\lambda^{\sigma,\sigma'}_{i}-2}=0$ and $b_{d'+\lambda^{\sigma,\sigma'}_{i}-1} \neq 0$, and so $j = d'+\lambda^{\sigma,\sigma'}_{i}-1$. We now show $d' \leq k + \epsilon^{\sigma}$ for some $k \in \N, k \leq d$.
Since $V^{\sigma}_{[-1,d]} - V^{\sigma'}_{[-1,d]} \succ 0$, 
for some state $h$ the first nonzero Laurent coefficient of $(v^{\sigma}_{\rho} - v^{\sigma'}_{\rho})_h$ has an index $k\leq d$ and is positive. Besides, we have $(v^{\sigma}_{\rho} - v^{\sigma'}_{\rho})_h = R^{\sigma}_{hi}(\rho)C^{\sigma,\sigma'}_{i}(\rho)$ because $\sigma$ and $\sigma'$ are adjacent at state $i$. Thus, $R^{\sigma}_{hi}$ is not identically zero because $(v^{\sigma}_{\rho} - v^{\sigma'}_{\rho})_h$ is not identically zero. 
By \Cref{lem:resolvent properties} we know that the first nonzero index of the Laurent expansion of $R^{\sigma}_{hi}$ is $-\epsilon^{\sigma}_{i}$, with a positive leading coefficient. It follows that $d'=k+\epsilon^{\sigma}_{i}$, and so $d'\leq d +1$ and $j = d'+\lambda^{\sigma,\sigma'}_{i}-1\leq d + \epsilon^{\sigma}_{i} + \lambda^{\sigma,\sigma'}_{i} - 1$ because $k\leq d$ and $\epsilon^{\sigma}_{i} \in \{0,1\}$. Finally, since $d'\leq d +1$, by definition of $d'$ we have $(C^{\sigma,\sigma'}_{[-1,d+1]})_i\neq 0$, which implies that $\Delta(\rho)$ belongs to $\cP_d$. 

\end{proof}
To obtain lower bounds on $\rhod$ it is sufficient to lower bound the positive roots of the polynomial $\Delta(\rho)$ that appears in Theorem~\ref{Th_d-sensitive_reduced}.  
The fact that some of the first coefficients of $\Delta(\rho)$ are zeros can be leveraged to prove stronger separation results, still relying on Cauchy's bound.
We get the following separation result for the roots of elements in $\cP_d$, based on a variant of Cauchy's bound.
\begin{lemma}\label{lem:ZerosDelta reduced}
Assume that $\Delta(\rho) \in \cP_d$ is of the form $\sum_{k=j}^{n-1} b_k \rho^k$, with $b_j \neq 0$ for some $j \in \{0,...,n-1\}$ (so $\Delta(\rho)$ is not the zero polynomial). If $j = n-1$, then $\Delta(\rho)$ has no positive roots. Otherwise, $\Delta(\rho)$ has no roots in the interval $\left(0, \frac{1}{4 W M^{n-1}{n+1\choose j + 3}}\right)$.
\end{lemma}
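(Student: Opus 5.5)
The plan is to factor out the zero coefficients and apply Cauchy's bound (\Cref{Lemma_Cauchy_Bound}) to the quotient. Write $\Delta(\rho)=\rho^{j}\,q(\rho)$ with $q(\rho)=\sum_{k=0}^{n-1-j} b_{k+j}\rho^{k}$. Then $q$ is an integer polynomial with $q(0)=b_j\neq 0$, and the positive roots of $\Delta$ are exactly the positive roots of $q$.

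In the case $j=n-1$, $q$ is the nonzero constant $b_{n-1}$. Hence $\Delta(\rho)=b_{n-1}\rho^{n-1}$ has no positive root, which settles the first claim.

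For $j<n-1$, I would not use the crude estimate $H(q)\le H(\Delta)$. Instead I would use the refined form of Cauchy's bound: if $q(z)=0$ with $z>0$, then
\[
|b_j|\le \sum_{k>j}|b_k|z^{k-j}.
\]
Since $|b_j|\ge 1$ and $0<z<1$, it follows that $1\le z\sum_{k>j}|b_k|$, that is, $z\ge \bigl(\sum_{k>j}|b_k|\bigr)^{-1}$. (If $z\ge 1$, the claim is trivial.)

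The remaining task is to bound $\sum_{k\ge j+1}|b_k|$ by $4WM^{n-1}{n+1\choose j+3}$. By \Cref{lem:height degree cPd}, $|b_k|\le 2WM^{n-1}{n+1\choose k+2}$. So I need
\[
\sum_{k=j+1}^{n-1}{n+1\choose k+2}\le 2{n+1\choose j+3},
\]
and this is the step I expect to be the main obstacle. The inequality holds when the binomial terms decay geometrically with ratio at most $1/2$, i.e.\ when $j+3$ lies past roughly $2(n+1)/3$. It fails in general, for example for small $j$, where the tail sum is of order $2^{n+1}$.

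So I would first try to sharpen the coefficient bound itself, returning to the proof of \Cref{th:boundDeltaCoeff}. The decomposition \eqref{eq:Delta decomposition} has a structure: the terms $D^{\sigma'}r_i^\sigma+\sum_h P^\sigma_{ih}D^{\sigma'}(v^{\sigma'})_h-D^{\sigma'}(v^{\sigma'})_i$ combine as a difference of forest-weighted sums. I would aim for a bound in which the sum over $k$ telescopes via ${n\choose k+1}+{n\choose k+2}$ identities to $2WM^{n-1}\bigl({n+1\choose j+3}+{n\choose j+2}\bigr)$, or something similar, bounded by $4WM^{n-1}{n+1\choose j+3}$ in the relevant range.

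Failing that, I would use the alternative route of bounding the sum of the $|b_k|$ through the factorization $\Delta(\rho)=\rho^{\lambda+d'-1}p(\rho)g(\rho)$ from the proof of \Cref{Th_d-sensitive_reduced}. I would then apply Cauchy to the factor $p(\rho)=\tilde D^{\sigma'}_i(\rho)/\rho^{\lambda}$, whose coefficients are controlled by forest counts as in \Cref{Prop_det_bound}, thereby controlling the tail sum. Either way, the final step is to plug the tail estimate into $z\ge\bigl(\sum_{k>j}|b_k|\bigr)^{-1}$ and conclude that there is no root in $\bigl(0,1/(4WM^{n-1}{n+1\choose j+3})\bigr)$.
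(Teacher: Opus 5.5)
\paragraph{There is a genuine gap.} The first case and the factorization $\Delta=\rho^j q$ are fine. The core step, however, discards too much, and the obstacle you flag cannot be removed along the lines you suggest.

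Bounding $z^{k-j}\le z$ turns the problem into showing $\sum_{k>j}|b_k|\le 4WM^{n-1}\binom{n+1}{j+3}$. That inequality is false for actual members of $\cP_d$, not merely for the a priori bounds. Take the instance from the proof of \Cref{th:reduced family bound - tightness} with $\sigma=a_1$, $\sigma'=a_0$, $i=s_0$. There $T^{\sigma,\sigma'}_i=[n]$ and $\tilde D^{\sigma'}_i(\rho)=M^n\rho^2(1+\rho)^{n-2}$, so $\Delta(\rho)=(2W-1)M^{n-1}\rho(1+\rho)^{n-2}-1$. This polynomial lies in $\cP_d$ for every $d$, because $C^{\sigma,\sigma'}_i$ has a nonzero $\rho^{-1}$ coefficient. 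It has $j=0$, but $\sum_{k\ge1}|b_k|=(2W-1)M^{n-1}2^{n-2}$, which exceeds $4WM^{n-1}\binom{n+1}{3}$ for large $n$. So no sharpening of \Cref{th:boundDeltaCoeff} can rescue this route.

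Your second fallback does not work either. Since $\tilde D^{\sigma'}_i(\rho)>0$ for $\rho>0$, the positive roots of $\Delta$ are roots of $C^{\sigma,\sigma'}_i$, not of $p=\tilde D^{\sigma'}_i/\rho^{\lambda}$. A root bound for $p$ therefore says nothing about them.

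\paragraph{The missing idea.} Keep the powers $z^r$, and use the fact that the coefficient bounds $B_k\coloneqq 2WM^{n-1}\binom{n+1}{k+2}$ are dominated by powers of the first one: $B_{j+r}\le B_{j+1}^r$ for $r\ge1$. This holds because $2WM^{n-1}\ge 1$ and $\binom{n+1}{j+r+2}\le\binom{n+1}{j+3}^r$. The latter follows, for instance, from
\[
\binom{N}{a+s}\le\binom{N}{a}\binom{N-a}{s}\le\binom{N}{a}(N-a)^s\le\binom{N}{a}^{s+1}\qquad(1\le a\le N-1).
\]
Now suppose $z\in\bigl(0,1/(2B_{j+1})\bigr)$ is a root. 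Then
\[
1\le|b_j|\le\sum_{r=1}^{n-j-1}|b_{j+r}|z^r\le\sum_{r=1}^{n-j-1}(B_{j+1}z)^r<\sum_{r=1}^{n-j-1}2^{-r}<1,
\]
which is a contradiction. This gives exactly the threshold $1/(2B_{j+1})=1/\bigl(4WM^{n-1}\binom{n+1}{j+3}\bigr)$. It is the paper's argument, a Fujiwara-type variant of Cauchy's bound.
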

\begin{proof} 
If $j=n-1$, then the unique root of $\Delta(\rho)$ is $0$. Otherwise, we can factor $\Delta(\rho)$ as $\Delta(\rho) = \rho^{j} q(\rho)$ with $q(\rho) = b_{j} + \sum_{r=1}^{n-j-1} b_{j+r}\rho^r$. By \Cref{lem:height degree cPd}, we know that $|b_{k}| \leq B_{k}$, where $B_{k} = 2 WM^{n-1} {n+1 \choose k+2}$ for $k \in \{j,...,n-1\}$. Thus, for $r \in \{1,...,n-j-1\}$ we have $B_{j+r} \leq B^{r}_{j+1}$, since $2 WM^{n-1} \geq 1$ and, from the definition of binomial coefficients, we have ${n+1 \choose j+r + 2} \leq {n+1 \choose j+3}^{r}$ for $r > 1$. 
Then, if $z \in (0,1/2B_{j+1})$ is a root of $\Delta(\rho)$, we have 
\[
1 \leq |b_j| = | \sum_{r =1}^{n-j-1} b_{j+r}z^r | \leq \sum_{r =1}^{n-j-1} |b_{j+r}|z^r \leq \sum_{r =1}^{n-j-1} |B_{j+1}z|^{r} < \sum_{r =1}^{n-j-1} \frac{1}{2^r} < 1 \; ,
\]
which is a contradiction. Hence, any positive root $z$ of $\Delta(\rho)$ satisfies $z \geq \frac{1}{4 W M^{n-1} {n+1 \choose j+3}}$. 
\end{proof}
We can now prove the main theorem of this section. Given a MDP $\cM$, we define $\Lambda(\cM) \coloneqq \max \{ \epsilon^{\sigma}_{i} +  \lambda^{\sigma,\sigma'}_{i}  \; | i \in [n], \sigma \text{ and } \sigma' \text{ are adjacent at state } i\}$, and given a game $\Gamma$, we define $\Lambda(\Gamma) \coloneqq \max \{\Lambda(\cM) \; | \; \cM \in \cM(\Gamma)\}$, where $\cM(\Gamma)$ is the set of MDPs induced by $\Gamma$ defined in \Cref{sec:games to mdps}.
\begin{theorem}\label{Th:d-sensitiveBound}
The $d$-sensitive threshold $\rhod$ of a two-player perfect-information stochastic game satisfying Assumption \ref{assumption} verifies
  \[
\frac{1}{4 W M^{n-1} {n+1 \choose {\min \{ d+\Lambda(\Gamma)+2, \lfloor  (n+1)/2\rfloor \} }}}  \leq  \rho_{\text{d}}.
  \]
\end{theorem}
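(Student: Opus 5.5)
The plan is to reduce to MDPs, locate $\rhod$ as a root of a polynomial in $\cP_d$ with controlled low-order vanishing, and then apply the separation bound for such polynomials.

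By \Cref{prop:reduction games to mdps}(c), it suffices to prove the bound for every MDP $\cM\in\cM(\Gamma)$. Each such $\cM$ satisfies Assumption~\ref{assumption} with the same parameters $n,W,M$. By definition of $\Lambda(\Gamma)$, we also have $\Lambda(\cM)\le \Lambda(\Gamma)$. Fix such an MDP. If $\rhod=+\infty$ there is nothing to prove, so assume $\rhod<+\infty$.

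\Cref{Th_d-sensitive_reduced} then supplies adjacent policies $\sigma,\sigma'$ at some state $i$ and a polynomial $\Delta\in\cP_d$ with the following properties:
\begin{itemize}
\item $\Delta(\rhod)=0$ and $\rhod>0$;
\item $\Delta(\rho)=\sum_{k=j}^{n-1} b_k\rho^k$ with $b_j\neq 0$;
\item $j\le d+\epsilon^{\sigma}_i+\lambda^{\sigma,\sigma'}_i-1\le d+\Lambda(\Gamma)-1$.
\end{itemize}

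Next apply \Cref{lem:ZerosDelta reduced} to $\Delta$. Since $\Delta$ has the positive root $\rhod$, the case $j=n-1$ is excluded, so $j\le n-2$. The lemma gives
\[
\rhod\ \ge\ \frac{1}{4WM^{n-1}{n+1\choose j+3}} .
\]

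It remains to bound ${n+1\choose j+3}$ by ${n+1\choose m}$, where $m\coloneqq\min\{d+\Lambda(\Gamma)+2,\lfloor (n+1)/2\rfloor\}$. The function $k\mapsto {n+1\choose k}$ is increasing on $0\le k\le\lfloor (n+1)/2\rfloor$, and its overall maximum is attained at $\lfloor (n+1)/2\rfloor$. Split into two cases.
\begin{itemize}
\item If $m=\lfloor (n+1)/2\rfloor$, the bound follows immediately, since ${n+1\choose j+3}\le\max_k {n+1\choose k}={n+1\choose m}$.
\item If $m=d+\Lambda(\Gamma)+2<\lfloor (n+1)/2\rfloor$, then $j+3\le d+\Lambda(\Gamma)+2=m$. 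Both $j+3$ and $m$ lie in the increasing range, so ${n+1\choose j+3}\le{n+1\choose m}$.
\end{itemize}

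The main obstacle is this last case. There I only have an upper bound $j+3\le m$, and I need $j+3$ to sit in the increasing regime of the binomial. This holds because $j+3\le m<\lfloor (n+1)/2\rfloor$. I would also check $j+3\ge 0$, which is trivial since $j\ge 0$.

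Combining the root bound with ${n+1\choose j+3}\le{n+1\choose m}$ gives the stated lower bound on $\rhod$ for every $\cM\in\cM(\Gamma)$. Hence it holds for $\Gamma$ by \Cref{prop:reduction games to mdps}(c).
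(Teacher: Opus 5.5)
Your proposal is correct and follows the paper's own route. You reduce to the induced MDPs, use \Cref{Th_d-sensitive_reduced} to realize $\rhod$ as a positive root of some $\Delta\in\cP_d$ with $j\le d+\epsilon^{\sigma}_{i}+\lambda^{\sigma,\sigma'}_{i}-1\le d+\Lambda(\Gamma)-1$, apply \Cref{lem:ZerosDelta reduced}, and then compare binomial coefficients. Your remarks that $j\le n-2$ and that $k\mapsto{n+1\choose k}$ is nondecreasing up to $\lfloor (n+1)/2\rfloor$ spell out steps the paper uses without stating.
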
 
\begin{proof}
By \Cref{prop:reduction games to mdps}, it is enough to show that the given bound holds for the $\sens$-sensitive threshold $\rhod$ of any MDP $\cM$ in $\cM(\Gamma)$, so let $\cM \in \cM(\Gamma)$. If $\rhod < +\infty$ (otherwise, there is nothing to prove), by \Cref{Th_d-sensitive_reduced} we know that $\rhod$ is a positive root of a polynomial $\Delta(\rho)$ in $\cP_d$. Let $\sigma$ and $\sigma'$ be adjacent policies differing at some state $i$ such that $\Delta(\rho)= \tilde{D}^{\sigma'}_{i}(\rho)C^{\sigma,\sigma'}_{i}(\rho)/M\rho$. Then, from \Cref{lem:ZerosDelta reduced} and \Cref{Th_d-sensitive_reduced}, we conclude that
$
\frac{1}{4 W M^{n-1} {n+1 \choose {\min \{ d+\epsilon^{\sigma}_{i} + \lambda^{\sigma,\sigma'}_{i}+2, \lfloor  (n+1)/2\rfloor \} }}}  \leq  \rho_{\text{d}}.
$
  The result now follows from the fact that $\epsilon^{\sigma}_{i} +\lambda^{\sigma,\sigma'}_{i}\leq \Lambda(\Gamma)$. 

\end{proof}
\subsection{The case of deterministic instances.}\label{sec:deter d sens}
Our results in the previous section can be improved for deterministic instances.
\paragraph{Sharper bounds on $\rhod$.}
For deterministic instances, we have a better control on the $\epsilon^{\sigma}_{i}$ and $\lambda^{\sigma,\sigma\opt}_{i}$, as we show next.
\begin{lemma}\label{lem:bound lambda}
Let $\cM$ be an MDP satisfying Assumption~\ref{assumption} with $M=1$.
If $\sigma$ and $\sigma'$ are policies adjacent at state $i$, then $\epsilon^{\sigma}_{i} + \lambda^{\sigma,\sigma'}_{i} \leq 2$.
\end{lemma}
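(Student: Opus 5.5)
The plan is to exploit the fact that, when $M=1$, every row of $P^{\sigma'}$ and of $P^{\sigma}$ has exactly one nonzero entry. The functional graph of $P^{\sigma'}$ therefore has a simple structure: from any state there is a unique path, and it eventually enters a single cycle. In particular, every state reaches exactly one recurrent class of $P^{\sigma'}$, namely the cycle at the end of its path.

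By \Cref{lem:positivity D tilde}, $\lambda^{\sigma,\sigma'}_{i}$ equals the number of recurrent classes of $P^{\sigma'}$ reachable from $\{i\}\cup{\sf supp}(P^{\sigma}_i)$. Under determinism, ${\sf supp}(P^{\sigma}_i)=\{j\}$ is a single state, where $j$ is the successor of $i$ under action $\sigma(i)$. So the starting set has at most two elements, $i$ and $j$. Each of them reaches exactly one recurrent class of $P^{\sigma'}$, which gives $\lambda^{\sigma,\sigma'}_i\le 2$ at once. The content of the lemma is to rule out $\epsilon^{\sigma}_i+\lambda^{\sigma,\sigma'}_i=3$, i.e., the case where $\lambda^{\sigma,\sigma'}_i=2$ and $i$ is recurrent under $P^{\sigma}$.

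I would argue by contradiction and assume $\lambda^{\sigma,\sigma'}_i=2$. Then the $\sigma'$-paths from $i$ and from $j$ end in two distinct cycles $\mathcal{C}_i\neq\mathcal{C}_j$. Now consider the path from $i$ under $\sigma$. Its first step goes to $j$. From $j$ onward, as long as the path avoids $i$, it uses only states $h\neq i$, where $\sigma(h)=\sigma'(h)$, so it coincides with the $\sigma'$-path from $j$. There are two cases for that $\sigma'$-path.

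\emph{If it never visits $i$}, then the $\sigma$-path from $i$ never returns to $i$. Hence $i$ is transient under $P^\sigma$, so $\epsilon^{\sigma}_i=0$ and the sum is at most $2$.

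\emph{If it does visit $i$}, then $i$ is reachable from $j$ under $\sigma'$. By determinism the $\sigma'$-path from $j$ then passes through $i$ and continues along the $\sigma'$-path from $i$. It therefore ends in $\mathcal{C}_i$, so $\mathcal{C}_j=\mathcal{C}_i$, contradicting $\lambda^{\sigma,\sigma'}_i=2$.

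So either $\lambda^{\sigma,\sigma'}_i\le 1$, in which case $\epsilon^\sigma_i\le1$ gives the sum $\le 2$, or $\epsilon^\sigma_i=0$ and $\lambda^{\sigma,\sigma'}_i\le2$. The degenerate case $j=i$ (a self-loop under $\sigma$) falls in the first branch, since then the starting set is just $\{i\}$ and $\lambda^{\sigma,\sigma'}_i=1$.

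The main point of care is the claim that each state reaches exactly one recurrent class under a deterministic transition matrix. Reachable states along the unique path eventually cycle, and that cycle is the unique closed communicating class reachable from the state. I will also be careful that reachability in $T^{\sigma,\sigma'}_i$ includes $k=0$, so the starting states themselves count. With these points in place the argument is purely combinatorial and short.
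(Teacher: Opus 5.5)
Your proof is correct and follows essentially the same route as the paper. Both arguments use determinism and \Cref{lem:positivity D tilde} to get $\lambda^{\sigma,\sigma'}_i\le 2$, and both rest on the same observation: the $\sigma$-path from $i$ returns to $i$ exactly when $\sigma(i)=i$ or the $\sigma'$-path from $\sigma(i)$ reaches $i$, which puts both starting states in the same recurrent class. The only difference is cosmetic: the paper argues directly (if $\epsilon^{\sigma}_i=1$ then $\lambda^{\sigma,\sigma'}_i=1$), while you argue the contrapositive (if $\lambda^{\sigma,\sigma'}_i=2$ then $\epsilon^{\sigma}_i=0$).
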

\begin{proof}
By \Cref{lem:positivity D tilde}, $\lambda^{\sigma,\sigma'}_{i}$ is the number of recurrent classes under $P^{\sigma'}$ starting from $\{i\} \cup {\sf supp}(P^{\sigma}_{i})$. Since the MDP is deterministic, we have ${\sf supp}(P^{\sigma}_{i}) = \{\sigma(i)\}$, identifying $\sigma(i)$ with the successor of $i$ chosen by $\sigma$. Recall that $\epsilon^{\sigma}_{i}$ is $1$ if state $i$ is recurrent under $P^{\sigma}$ and $0$ otherwise.  
We consider the two cases $\epsilon^{\sigma}_{i}=1$ and $\epsilon^{\sigma}_{i}=0$.

If $\epsilon^{\sigma}_{i}=1$ then $i$ is recurrent under $P^ \sigma$, i.e., there is a path that starts in $i$, ends in $i$, and does not visit $i$ in between the start and end states. Note that the second state in this path is $\sigma(i)$. 
Since $\sigma$ and $\sigma'$ coincide everywhere except at state $i$, it follows that either $\sigma(i) = i$ or there is a path from $\sigma(i)$ to $i$ by playing $\sigma'$. We conclude that under $P^{\sigma'}$, starting from $i$ or from $\sigma(i)$, we reach the same recurrent class. Thus $\lambda^{\sigma,\sigma'}_{i}=1$, and so $\epsilon^{\sigma}_{i} + \lambda^{\sigma,\sigma'}_{i} = 2$.

If $\epsilon^{\sigma}_{i}=0$, we still have that $\lambda^{\sigma,\sigma'}_{i}$ is the number of recurrent classes under $P^{\sigma'}$ starting from a set containing at most two states. Since the number of recurrent classes under $P^{\sigma'}$ starting from a single state is $1$ (the MDP is deterministic), we get that $\lambda^{\sigma,\sigma'}_{i} \leq 2$, and so $\epsilon^{\sigma}_{i} + \lambda^{\sigma,\sigma'}_{i} \leq 2$.  
\end{proof}
We get the following corollary by applying Lemma~\ref{lem:bound lambda} and Theorem~\ref{Th:d-sensitiveBound}.
\begin{corollary}\label{cor:d-sensitiveBound - deter}
The $d$-sensitive threshold $\rhod$ of a two-player perfect-information deterministic stochastic game satisfying Assumption \ref{assumption} with $M=1$ verifies
  \[
\frac{1}{4 W  {n+1 \choose {\min \{ d+4, \lfloor  (n+1)/2\rfloor \} }}}  \leq  \rho_{\text{d}}.
  \]
\end{corollary}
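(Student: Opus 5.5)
The plan is to combine \Cref{Th:d-sensitiveBound} with \Cref{lem:bound lambda}, essentially by substitution. First, we check that every MDP $\cM \in \cM(\Gamma)$ induced by a deterministic game (with $M=1$) is itself deterministic, so \Cref{lem:bound lambda} applies to it. This holds because, by \Cref{prop:reduction games to mdps}(a) and the construction in \Cref{sec:games to mdps}, the induced MDPs inherit the transitions of $\Gamma$ for the admissible actions, so their transition probabilities still have common denominator $1$.

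Next, for any such $\cM$ and any pair of policies $\sigma,\sigma'$ adjacent at a state $i$, \Cref{lem:bound lambda} gives $\epsilon^{\sigma}_{i} + \lambda^{\sigma,\sigma'}_{i} \leq 2$. Taking the maximum first over adjacent pairs and states, and then over $\cM \in \cM(\Gamma)$, we obtain $\Lambda(\cM)\le 2$ and hence $\Lambda(\Gamma) \leq 2$.

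Then we plug into \Cref{Th:d-sensitiveBound} with $M=1$, so that $M^{n-1}=1$. We need the lower bound to be monotone in $\Lambda(\Gamma)$. The binomial ${n+1 \choose m}$ is nondecreasing in $m$ for $m \le \lfloor (n+1)/2 \rfloor$, so capping by $\lfloor (n+1)/2\rfloor$ makes $m \mapsto {n+1 \choose \min\{m,\lfloor (n+1)/2\rfloor\}}$ nondecreasing in $m$. Therefore replacing $d+\Lambda(\Gamma)+2$ by the larger quantity $d+4$ can only increase the denominator. This gives
\[
\rho_{\text{d}} \;\ge\; \frac{1}{4W{n+1\choose \min\{d+\Lambda(\Gamma)+2,\lfloor (n+1)/2\rfloor\}}} \;\ge\; \frac{1}{4W{n+1\choose \min\{d+4,\lfloor (n+1)/2\rfloor\}}}.
\]

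There is no real obstacle here. The only points needing care are the monotonicity of the capped binomial in its lower index, and the fact that the induced MDPs remain deterministic so that \Cref{lem:bound lambda} is applicable.
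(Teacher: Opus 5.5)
Your proposal is correct and matches the paper's argument, which simply combines \Cref{lem:bound lambda} (giving $\Lambda(\Gamma)\le 2$) with \Cref{Th:d-sensitiveBound} at $M=1$. Your two extra checks are exactly the details the paper leaves implicit: that the induced MDPs remain deterministic, and that the capped binomial $m\mapsto{n+1\choose \min\{m,\lfloor (n+1)/2\rfloor\}}$ is nondecreasing.
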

Note that for $d=-1$ this recovers the bound $\rho_{-1}^{-1} = O(W n^3)$ obtained in~\cite{zwick} using ${n+1 \choose 3} \leq n^3$.
\paragraph{The multiplicity approach for $\rhobw$.}
The following approach was developed in~\cite{gaubert2026thresholds} for deterministic MDPs, based on the complete comparison family induced by the difference of all value functions. We now show how to adapt this method to our reduced family.

Recall that in general, Blackwell optimal policies correspond to policies that are $d$-sensitive for $d = n-2$. An interesting property of deterministic instances is that Blackwell optimality may be achieved by $d$-sensitive optimal policies with $d$ potentially much smaller than $n-2$, as we show next. 
The following definition extends the notion of multiplicity at $0$ (usually defined for polynomials) to rational functions.
\begin{definition}
Let $F:\R_{+}^{*} \to \R$ be a meromorphic function.
We say that $F$ has order $\ell \in \Z$ at $0$ if $F(\rho)=\rho^{\ell}f(\rho)$ for $\rho >0$, where $f$ is an  analytic function at $0$ such that $f(0) \neq 0$.
\end{definition}
We note that the order of $F$ at $0$ is just the exponent of the first non-zero term of the Laurent series expansion of $F$ at $0$. We have the next proposition.
\begin{proposition}\label{prop:bar d implies blackwell}
Assume that $\bar{d} \in \N$ is an upper bound on the order at $0$ of every nonzero $C^{\sigma,\sigma'}_{i}(\rho)$, where $\sigma$ and $\sigma'$ are any policies adjacent at some state $i \in [n]$. Then, any $\bar{d}$-sensitive optimal policy is also Blackwell optimal.
\end{proposition}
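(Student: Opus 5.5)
Let $\sigma\opt$ be $\bar{d}$-sensitive optimal; we show it is Blackwell optimal. The plan is to prove that for every policy $\sigma$ adjacent to $\sigma\opt$ at a state $i$, the comparison function $C^{\sigma,\sigma\opt}_i(\rho)$ is nonnegative for all sufficiently small $\rho>0$. By Lemma~\ref{lem:comp vector adjacent policies prop}(a), the other entries of $C^{\sigma,\sigma\opt}(\rho)$ vanish identically, so the whole vector is then $\geq 0$ near $0$. Since there are finitely many adjacent policies, we can take a common $\bar\rho>0$ such that $C^{\sigma,\sigma\opt}(\rho)\geq 0$ for all adjacent $\sigma$ and all $\rho\in(0,\bar\rho)$. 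Proposition~\ref{PropVeinott} then shows that $\sigma\opt$ is discount optimal for every such $\rho$, which is exactly Blackwell optimality.

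Fix $\sigma$ adjacent to $\sigma\opt$ at $i$. If $C^{\sigma,\sigma\opt}_i$ is identically zero there is nothing to prove. Otherwise, by hypothesis its order at $0$ is some $d'\le\bar d$. Writing $C^{\sigma,\sigma\opt}_i(\rho)=\rho^{d'}(c+o(1))$ with $c=(C^{\sigma,\sigma\opt}_{[d']})_i\neq 0$, it suffices to prove $c>0$. Suppose instead $c<0$. By Lemma~\ref{lem:comp vector adjacent policies prop}(c),
\[
(v^{\sigma}_\rho-v^{\sigma\opt}_\rho)_i=R^{\sigma}_{ii}(\rho)\,C^{\sigma,\sigma\opt}_i(\rho).
\]
By Lemma~\ref{lem:resolvent properties} (and $R^\sigma_{ii}\ge (1+\rho)^{-1}>0$, so it is not identically zero), $R^\sigma_{ii}(\rho)=c_{ii}\rho^{-\epsilon^\sigma_i}+O(\rho^{1-\epsilon^\sigma_i})$ with $c_{ii}>0$. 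Hence the difference has leading Laurent term $c_{ii}c\,\rho^{d'-\epsilon^\sigma_i}$, which is negative, at an index $d'-\epsilon^\sigma_i\le d'\le\bar d$. Lemma~\ref{lem:limit sign of Laurent expansion} then gives $\liminf_{\rho\to0^+}\rho^{-\bar d}(v^\sigma_\rho-v^{\sigma\opt}_\rho)_i<0$. Equivalently, by Theorem~\ref{th:d sens lex ordering}, $V^\sigma_{[-1,\bar d]}\not\succeq V^{\sigma\opt}_{[-1,\bar d]}$. This contradicts the $\bar d$-sensitive optimality of $\sigma\opt$ via \eqref{eq:d sensitive - rho}.

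The step needing the most care is the order bookkeeping. We must check that the index $d'-\epsilon^\sigma_i$ lies in $\{-1,\dots,\bar d\}$; it does, since the value-difference Laurent series starts at index $\ge -1$ and $d'\le\bar d$. We also need that a Laurent series with a negative leading coefficient is eventually negative, which holds because these are rational functions of $\rho$. Everything else is a direct assembly of Proposition~\ref{PropVeinott} and the earlier lemmas.
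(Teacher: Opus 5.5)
Your proof is correct and follows essentially the same route as the paper's. You reduce to adjacent policies via Proposition~\ref{PropVeinott} and write $(v^{\sigma}_{\rho}-v^{\sigma\opt}_{\rho})_i = R^{\sigma}_{ii}(\rho)\,C^{\sigma,\sigma\opt}_i(\rho)$; since $R^{\sigma}_{ii}$ has a positive leading Laurent coefficient at order $0$ or $-1$, $\bar d$-sensitive optimality forces the leading coefficient of $C^{\sigma,\sigma\opt}_i$ (at order at most $\bar d$) to be positive. The only differences are cosmetic: you argue by contradiction rather than directly, and you make explicit the choice of a common $\bar\rho$ over the finitely many adjacent policies, which the paper leaves implicit.
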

\begin{proof}
Let $\sigma\opt$ be $\bar{d}$-sensitive optimal. To prove that $\sigma\opt$ is Blackwell optimal, by Proposition~\ref{PropVeinott} it is enough to show that for any policy $\sigma$ adjacent to $\sigma\opt$ at some state $i$, we have $C^{\sigma,\sigma\opt}_{i}(\rho)\geq 0$ for $\rho>0$ sufficiently small. So let $\sigma$ be any policy adjacent to $\sigma\opt$ at some state $i$. If $C^{\sigma,\sigma\opt}_{i}(\rho)=0$ for $\rho>0$, then there is nothing to prove. 
Otherwise, from Lemma \ref{lem:resolvent properties} we have $(v^{\sigma}_{\rho} - v^{\sigma\opt}_{\rho})_i = R^{\sigma}_{ii}(\rho)C^{\sigma,\sigma\opt}_{i}(\rho)$, with $R^{\sigma}_{ii}(\rho)>0$ for $\rho>0$ and $R^{\sigma}_{ii}$ has order $0$ or $-1$ at $\rho=0$ (depending if $i$ is transient or recurrent under $P^{\sigma}$). In both cases, the leading coefficient in the Laurent series of $R^{\sigma}_{ii}$ is positive. Therefore, the first nonzero coefficient in the Laurent series of $(v^{\sigma}_{\rho} - v^{\sigma\opt}_{\rho})_i$ has the same sign as the first nonzero coefficient of the Laurent series of $C^{\sigma,\sigma\opt}_{i}$, and its order is at most the order of $C^{\sigma,\sigma\opt}_{i}$, which is at most $\bar{d}$ by assumption. Since $\sigma\opt$ is $\bar{d}$-sensitive optimal we have $\liminf_{\rho \to 0^+} \rho^{-\bar{d}}(v^{\sigma}_{\rho} - v^{\sigma\opt}_{\rho})_i \geq 0$ and since the first nonzero coefficient in the Laurent series of $(v^{\sigma}_{\rho} - v^{\sigma\opt}_{\rho})_i$ is of order at most $\bar{d}$, this coefficient must be positive, and the same is true for the first nonzero coefficient in the Laurent series of $C^{\sigma,\sigma\opt}_{i}$. Hence $C^{\sigma,\sigma\opt}_{i}(\rho)>0$ for $\rho>0$ small enough, which concludes the proof. 
\end{proof}
\begin{remark}
The previous proposition does not assume the MDP is deterministic. But we will need this assumption to obtain stronger bounds on the coefficients of the comparison vectors.
\end{remark}
We are now left with bounding the order at $0$ of the comparison vectors. For deterministic instances, by definition of the comparison vectors, we have
\begin{equation}\label{eq:comp vector - deterministic}
C^{\sigma,\sigma\opt}_{i}(\rho) = r^{\sigma(i)}_{i} + (v^{\sigma\opt}_{\rho})_{\sigma(i)} - (1+\rho)(v^{\sigma\opt}_{\rho})_i
\end{equation}
where, with a slight abuse of notation, we denote by $\sigma(i)$ the successor of $i$ when playing according to the stationary deterministic policy $\sigma$ at state $i$.
The value functions $(v^{\sigma}_{\rho})_i$ have a very particular structure in deterministic instances: every trajectory consists of an initial phase visiting transient states, then a cycle. We use this structure to arrive at the next proposition.
\begin{proposition}\label{prop:comp vector height and degree}
Assume that the MDP instance satisfies Assumption~\ref{assumption} with $M=1$. Let $\sigma$ and $\sigma'$ be two stationary deterministic policies and $i \in [n]$. Then, we can write 
\[C^{\sigma,\sigma'}_{i}(\rho) = F(\rho)/G(\rho),  \quad \forall \; \rho>0,\]
where $G$ has order at most $2$ at $0$, $G(\rho)>0$ for $\rho>0$, and $F(\rho)=\sum_{k=0}^{n} a_k (1+\rho)^{-k}$ with $a_k \in \Z$ and $|a_k| \leq  8W$ for any $k \in \{0,...,n\}$.
\end{proposition}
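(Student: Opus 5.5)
The plan is to compute $C^{\sigma,\sigma'}_{i}(\rho)$ explicitly from \eqref{eq:comp vector - deterministic}, working throughout in the variable $\alpha = (1+\rho)^{-1}$ and clearing denominators by hand. With $M=1$, the trajectory of $\sigma'$ from any state $j$ is a simple path (a transient prefix of $t_j$ states) followed by a simple cycle of length $\ell_j$, and $t_j+\ell_j\le n$. Writing $r_1,\dots$ for the rewards collected along this trajectory, we get
\[
(v^{\sigma'}_{\rho})_j=\sum_{k=1}^{t_j}\alpha^k r_k+\frac{\alpha^{t_j}\sum_{k=1}^{\ell_j}\alpha^k r_{t_j+k}}{1-\alpha^{\ell_j}}.
\]
We apply this to the two states $j=i$ and $j=\sigma(i)$ that appear in \eqref{eq:comp vector - deterministic}. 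The term $(1+\rho)(v^{\sigma'}_\rho)_i=\alpha^{-1}(v^{\sigma'}_\rho)_i$ causes no trouble: every summand of $(v^{\sigma'}_\rho)_i$ carries at least one factor of $\alpha$, so multiplying by $\alpha^{-1}$ only lowers all exponents by one.

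We split into two cases according to whether the trajectories from $i$ and from $\sigma(i)$ under $\sigma'$ end in the same cycle or in different cycles.
\begin{itemize}
\item If they end in the same cycle $Z$ of length $\ell$, we take $G(\rho)=1-\alpha^{\ell}$.
\item If they end in distinct cycles $Z_1,Z_2$ of lengths $\ell_1,\ell_2$, we take $G(\rho)=(1-\alpha^{\ell_1})(1-\alpha^{\ell_2})$.
\end{itemize}
Since $1-\alpha^{\ell}=\frac{\rho}{1+\rho}(1+\alpha+\dots+\alpha^{\ell-1})$, each factor is positive for $\rho>0$ and has order exactly $1$ at $0$. Hence $G>0$ on $\rho>0$ and $G$ has order at most $2$ at $0$. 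Then $F\coloneqq G\cdot C^{\sigma,\sigma'}_i$ is a polynomial in $\alpha$ with integer coefficients, which is the required form $\sum_k a_k(1+\rho)^{-k}$.

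The key point for the degree bound $n$ is a disjointness argument. In the two-cycle case, the trajectory from $i$ ends in $Z_1$. Because transitions are deterministic, any trajectory that enters $Z_2$ stays there forever, so this trajectory never visits $Z_2$. Hence the prefix from $i$, the cycle $Z_1$, and the cycle $Z_2$ are pairwise disjoint sets of states, giving $t_i+\ell_1+\ell_2\le n$; the same holds for $\sigma(i)$ with the roles of the cycles swapped. It follows that every product that appears has $\alpha$-degree at most $n$:
\begin{itemize}
\item the prefix sum times $(1-\alpha^{\ell_1})(1-\alpha^{\ell_2})$;
\item the cycle numerator $\alpha^{t}\sum_k\alpha^k r_{t+k}$ times the remaining factor $(1-\alpha^{\ell_2})$.
\end{itemize}
The constant reward $r^{\sigma(i)}_i$ times $G$ has degree $\ell_1+\ell_2\le n$. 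The one-cycle case is analogous and easier.

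The main obstacle is the coefficient bound $|a_k|\le 8W$, which requires bookkeeping of how many contributions can land on a given power $\alpha^k$. The factor $G$ has at most four monomials, with coefficients $\pm1$. Within each of the two value functions, the rewards are attached to distinct powers of $\alpha$ (one per step of the path or cycle). So multiplying $(v^{\sigma'}_\rho)_j$ by $G$ contributes at most about $4W$ to each coefficient, and combining the terms for $j=\sigma(i)$ and $j=i$, together with the constant reward times $G$, should yield at most $8W$. If the naive count overshoots, I would first try to tighten it by expressing $(1-\alpha^{\ell_1})(1-\alpha^{\ell_2})$ applied to a cycle sum as a telescoping difference of shifted sums with disjoint supports, so that each power of $\alpha$ receives only a bounded number of rewards.
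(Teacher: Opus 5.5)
Your structural argument is essentially the paper's: pass to $\alpha=(1+\rho)^{-1}$, clear the cycle denominators with a product of at most two factors $1-\alpha^{q}$, and use disjointness of the prefix and the two cycles to get degree at most $n$. That part is fine. One small correction: with two distinct cycles of equal length $\ell$, your $G=(1-\alpha^{\ell})^2$ has coefficient $-2$ at $\alpha^{\ell}$, so ``coefficients $\pm1$'' is false there. The paper avoids this by taking the product over the \emph{set} of lengths $\{q_h,q_\ell\}$, i.e.\ a single factor when the lengths coincide.

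The genuine gap is the bound $|a_k|\le 8W$, which is the real content of the proposition. Your estimate gives at most $4W$ for each of $G\,v_{\sigma(i)}$ and $G\,\alpha^{-1}v_i$. That is correct if you first bound the coefficients of $(1-\alpha^{q})v_j$ by $2W$ and then multiply by the other factor. But at the exponents $0,\ell_1,\ell_2,\ell_1+\ell_2$ (the support of $G$) the term $r^{\sigma(i)}_iG$ adds another $W$, so the naive count gives $9W$. Your telescoping idea for cycle sums does not address this. Nor can it be absorbed by slack: in the paper's seven-state tightness example, $8W$ is attained exactly at $\alpha^{q_h+q_\ell}$, which lies in the support of $G$.

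The missing ingredient is finer information on
$A_j(\alpha):=(1-\alpha^{q_j})v_j(\alpha)=\sum_{k=1}^{t_j+q_j} r_k\alpha^k-\alpha^{q_j}\sum_{k=1}^{t_j} r_k\alpha^k$.
Its constant coefficient is $0$. Its coefficients of $\alpha^1$ and $\alpha^{q_j}$ are each a single reward, hence bounded by $W$, because the shifted prefix only starts at $\alpha^{q_j+1}$.

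Write $A=\sum a_k\alpha^k$ for $j=i$ (cycle length $\ell_1$) and $A'=\sum a'_k\alpha^k$ for $j=\sigma(i)$ (cycle length $\ell_2\neq\ell_1$), with $a_m=a'_m=0$ for $m\le 0$. The coefficient of $\alpha^k$ in $F$ is
$b_k=r^{\sigma(i)}_i(1_{\{k=0\}}-1_{\{k=\ell_1\}}-1_{\{k=\ell_2\}}+1_{\{k=\ell_1+\ell_2\}})+a'_k-a'_{k-\ell_1}-a_{k+1}+a_{k+1-\ell_2}$.
Checking the four special exponents:
\begin{itemize}
\item $|b_0|=|r-a_1|\le 2W$.
\item $|b_{\ell_1}|\le W+2W+0+2W+2W=7W$, using $a'_0=0$.
\item $|b_{\ell_2}|\le W+W+2W+2W+W=7W$, using $|a'_{\ell_2}|,|a_1|\le W$.
\item $|b_{\ell_1+\ell_2}|\le W+2W+W+2W+2W=8W$, using $|a'_{\ell_2}|\le W$.
\end{itemize}
All other coefficients are at most $8W$ without refinement. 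In the single-factor case the same checks give at most $4W$. This is exactly part (b) of the paper's lemma on value functions of deterministic instances, together with the case analysis that follows it, and your proposal needs it to close.
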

We show that our bounds on the coefficients of $F$ are tight in Appendix \ref{app:deter d sens}.
From the previous proposition, a bound on the orders of the comparison vectors can be obtained by bounding the order of the functions $F$ appearing in Proposition~\ref{prop:comp vector height and degree}. We can bound this order using the following lemma, which is an easy extension of~\cite{BEK1999}.
\begin{lemma}\label{lem:BEK99 at 0}
There exists a constant $A>0$ such that for any $F(\rho)=\sum_{k=0}^{N} a_k (1+\rho)^{-k}$ that is non-identically zero with integer coefficients bounded in absolute value by $H$, we have that the order of $F$ at $0$ is bounded by $A \sqrt{N(1+\log(H))}$.
\end{lemma}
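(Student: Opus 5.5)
Set $x \coloneqq (1+\rho)^{-1}$, so that $\rho \to 0^+$ corresponds to $x \to 1^-$. Then $F(\rho) = p(x) \coloneqq \sum_{k=0}^{N} a_k x^k$ is a nonzero polynomial in $\Z[x]$ of degree at most $N$ with height at most $H$. Since $x-1 = -\rho/(1+\rho)$ and $(1+\rho)^{-1}$ is analytic and nonvanishing at $\rho=0$, we have $(x-1)^m = (-1)^m \rho^m u(\rho)$ with $u$ analytic at $0$ and $u(0) = 1$. Hence, if $p(x) = (x-1)^m q(x)$ with $q(1) \neq 0$, then
\[
F(\rho) = (-1)^m \rho^m u(\rho)\, q\bigl((1+\rho)^{-1}\bigr),
\]
and $q((1+\rho)^{-1}) \to q(1) \neq 0$ as $\rho \to 0$. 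So the order of $F$ at $0$ equals the multiplicity $m$ of $1$ as a root of $p$. The lemma therefore reduces to bounding this multiplicity.

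This is exactly the setting of the Borwein--Erd\'elyi--K\'os result~\cite{BEK1999}. Their theorem states that a polynomial of degree at most $N$ of the form $\sum a_k x^k$, with $|a_0| = 1$ and $|a_k| \le 1$, has a zero at $1$ of multiplicity at most $c\sqrt{N}$. The extension to integer coefficients bounded by $H$, with $a_0 \neq 0$, gives multiplicity at most $c\sqrt{N(1+\log H)}$. The latter form is stated in the same paper, where the normalization is $|a_0| \ge 1$ and $|a_k| \le H$ with bound $c\sqrt{N(1+\log H)}$; alternatively it follows from their Theorem 2 on coefficients bounded by $M$ relative to $|a_0|$. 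I would invoke this directly.

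The one adjustment is the hypothesis $a_0 \ne 0$. If $a_0 = 0$, I would write $p(x) = x^s p_1(x)$, where $s$ is the smallest index with $a_s \ne 0$. Then $p_1$ has degree at most $N - s \le N$, nonzero integer constant term (hence of absolute value at least $1$), and height at most $H$. Multiplication by $x^s$ does not change the multiplicity at $x = 1$. Applying the bound to $p_1$ gives $m \le c\sqrt{N(1+\log H)}$, and we take $A \coloneqq c$.

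The main obstacle is ensuring that the cited result is available in exactly the needed form: integer coefficients, constant term $\ge 1$ in absolute value, and dependence $\sqrt{N(1+\log H)}$ on the height. If only the $H = 1$ version were available, I would reproduce their argument. Suppose $(x-1)^m$ divides $p$. Then $p$ is forced to be extremely small on a disk or arc near $1$ (for instance via the Hadamard three-circle or Jensen-type estimates they use). Compare this with the lower bound $|p(0)| \ge 1$ together with the upper bound $|p(z)| \le H/(1-|z|)$ inside the unit disk. Optimizing the radius then yields $m^2 \lesssim N\log(eH)$. This balancing of the lower bound at $0$ against growth $\log H$ is the only genuinely nontrivial step.
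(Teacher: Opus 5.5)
Your proposal is correct and matches the paper's proof: substitute $x=(1+\rho)^{-1}$, factor out the lowest power of $x$ so the constant term is a nonzero integer, and apply the Borwein--Erd\'elyi--K\'os bound. The paper uses that bound in the normalized form $c\sqrt{n(1-\log|c_0|)}$ after dividing by $H$, which gives the same $c\sqrt{N(1+\log H)}$ since $|a_j|\ge 1$. Your explicit check that the order at $0$ equals the multiplicity at $1$ via $x-1=-\rho/(1+\rho)$ fills in a step the paper only asserts, and the fallback sketch in your last paragraph is not needed.
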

Combining Proposition~\ref{prop:comp vector height and degree} and Lemma~\ref{lem:BEK99 at 0}, the order at $0$ of any comparison vector is bounded by $\bar{d}=O\left(\sqrt{n (1+ \log(W))}\right)$. Plugging this into the bound from Corollary~\ref{cor:d-sensitiveBound - deter} yields the next corollary.
\begin{corollary}\label{cor:d-sensitiveBound - deter - plugged}
The Blackwell threshold $\rhobw$ of a two-player perfect-information deterministic stochastic game satisfying Assumption \ref{assumption} with $M=1$ verifies 
  \[
\frac{1}{4 W {n+1 \choose {\min \{ \bar{d}+4, \lfloor  (n+1)/2\rfloor \} }}}  \leq  \rhobw \; ,
  \]
where $\bar{d}=O\left(\sqrt{n (1+ \log(W))}\right)$.
\end{corollary}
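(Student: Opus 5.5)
The plan is to show that the Blackwell threshold coincides with the $\bar{d}$-sensitive threshold, and then read the bound off \Cref{cor:d-sensitiveBound - deter} with $d=\bar{d}$. By \Cref{prop:reduction games to mdps}, it suffices to treat deterministic MDPs ($M=1$) in $\cM(\Gamma)$; these are again deterministic and satisfy Assumption~\ref{assumption} with the same $n$ and $W$. For such an MDP $\cM$, I would show that $\rhobw(\cM)=\rho_{\bar{d}}(\cM)$, or at least that $\rhobw(\cM)\geq\rho_{\bar{d}}(\cM)$ in the interest-rate direction relevant for a lower bound.

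The first step fixes $\bar{d}$. Take any pair of policies $\sigma,\sigma'$ adjacent at a state $i$ with $C^{\sigma,\sigma'}_i\not\equiv 0$. By \Cref{prop:comp vector height and degree}, $C^{\sigma,\sigma'}_i=F/G$, where $F(\rho)=\sum_{k=0}^n a_k(1+\rho)^{-k}$ with $|a_k|\le 8W$, and $G>0$ has order at most $2$ at $0$. The order of $F/G$ at $0$ is the order of $F$ minus the order of $G$. Since the order of $G$ is nonnegative (as a positive function arising as a product of denominators), the order of $C^{\sigma,\sigma'}_i$ is at most the order of $F$. I will need to check from the proof of that proposition that $G$ indeed has nonnegative order, i.e. is analytic at $0$.

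\Cref{lem:BEK99 at 0} with $N=n$ and $H=8W$ then bounds the order of $F$ by $A\sqrt{n(1+\log 8W)}=O(\sqrt{n(1+\log W)})$. Let $\bar{d}$ be this quantity, rounded up to an integer (and at least $0$). It is an upper bound on the order at $0$ of every nonzero adjacent comparison vector, so \Cref{prop:bar d implies blackwell} applies: every $\bar{d}$-sensitive optimal policy is Blackwell optimal.

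The second step compares the thresholds. Let $\rho<\rho_{\bar{d}}$. By the definition of $\rho_{\bar{d}}$ (translated from $\alpha$ to $\rho$), any policy discount optimal at $\rho$ is $\bar{d}$-sensitive optimal, hence Blackwell optimal by the first step. Thus every $\rho'<\rho_{\bar{d}}$ is a valid interest-rate threshold for Blackwell optimality, so $\rhobw\ge\rho_{\bar{d}}$. Conversely, $\alphad\le\alphabw$ gives the reverse inequality, though it is not needed.

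Finally, \Cref{cor:d-sensitiveBound - deter} with $d=\bar{d}$ gives $\rho_{\bar{d}}\geq \bigl(4W{n+1\choose \min\{\bar{d}+4,\lfloor (n+1)/2\rfloor\}}\bigr)^{-1}$, and passing back to the game via \Cref{prop:reduction games to mdps}(b) concludes. The main obstacle I anticipate is the order bookkeeping in the first step: confirming that $G$ contributes nonnegative order, so that the order of $C$ is bounded by the order of $F$ rather than being worsened by $G$. If $G$ could have negative order, I would simply add $2$ to $\bar{d}$, which does not affect the $O(\cdot)$.
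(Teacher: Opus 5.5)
Your proposal is correct and follows the paper's own argument. The paper bounds the order of every nonzero adjacent comparison vector via \Cref{prop:comp vector height and degree} and \Cref{lem:BEK99 at 0}, invokes \Cref{prop:bar d implies blackwell} to get $\rhobw \geq \rho_{\bar{d}}$, and plugs $d=\bar{d}$ into \Cref{cor:d-sensitiveBound - deter}; your explicit threshold comparison and MDP reduction simply spell out steps the paper leaves implicit, and the obstacle you anticipate does not arise, since $G(\rho)=\tilde{G}((1+\rho)^{-1})$ with $\tilde{G}(\alpha)=\prod_{q}(1-\alpha^{q})$ is analytic at $0$ with $G(0)=0$, so its order is at least $1$.
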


\begin{remark}
An application of this method for non-deterministic instances yields bounds on $\bar{d}$ that are meaningless. Indeed in the case of stochastic instances, the term $\log(H)$ from Lemma~\ref{lem:BEK99 at 0} is of order $\Theta(n)$ so that the bound on the order $\bar{d}$ is of order $\Theta(n)$, and we already know that $\bar{d} \leq n$. 
\end{remark}
\paragraph{Sharper bounds based on Mahler.}
The next theorem can be obtained by combining \Cref{prop:comp vector height and degree} and \Cref{th:mahler sep - generic}.
\begin{theorem}\label{th:dubickas - deter}
For each $\epsilon > 0$ there exists a constant $c_\epsilon>0$ such that the Blackwell threshold $\rhobw$ of a two-player perfect-information stochastic game satisfying Assumption~\ref{assumption} with $M=1$ verifies, for $\theta \coloneqq 1 + \log(2)$ and $\displaystyle B \coloneqq 8 W \sqrt{n+1}$,
\[\min \left\{ \exp\left(-\left(\frac{\pi}{4}+\epsilon \right) \sqrt{n\log(n)\log(B)}\right), \frac{1}{c_{\epsilon} B^{\theta}}\right\} \leq \rhobw.\]
\end{theorem}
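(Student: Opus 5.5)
We reduce to MDPs via \Cref{prop:reduction games to mdps}: deterministic games induce deterministic MDPs ($M=1$), so it suffices to bound $\rhobw$ for a deterministic MDP. If $\rhobw=+\infty$ there is nothing to prove. Otherwise, the proof of \Cref{th:reduced family root bound alphabw} gives a pair of adjacent policies $\sigma,\sigma\opt$ differing at a state $i$, both discount optimal at $\rhobw$, with $C^{\sigma,\sigma\opt}_i$ not identically zero and $C^{\sigma,\sigma\opt}_i(\rhobw)=0$. The plan is to work with \Cref{prop:comp vector height and degree} instead of the polynomial family $\cP$, since its coefficient bound $8W$ is independent of $n$. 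Writing $C^{\sigma,\sigma\opt}_i=F/G$ with $G(\rho)>0$ for $\rho>0$, we get $F(\rhobw)=0$ and $F\not\equiv 0$.

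We change variables to $\beta\coloneqq 1+\rho$ and set $q(\beta)\coloneqq \beta^n F(\beta-1)=\sum_{k=0}^n a_k\beta^{n-k}$. This is a nonzero integer polynomial of degree at most $n$ with coefficients bounded by $8W$. The number $z\coloneqq 1+\rhobw$ is a root of $q$ in $(1,+\infty)$, and $|z-1|=\rhobw$. Since the minimal polynomial of $z$ divides $q$, we have $\sfM(z)\le \sfM(q)$. Landau's bound then gives
\[\sfM(z)\le \sqrt{\textstyle\sum_k a_k^2}\le 8W\sqrt{n+1}=B,\]
and the degree $d$ of $z$ satisfies $d\le n$. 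Also, $z>1$ is real, so $z$ is not a root of unity.

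We now apply \Cref{th:mahler sep - generic} to $z$ and replace $d$ by $n$ and $\sfM(z)$ by $B$. This requires checking that the right-hand side of~\eqref{eq:dubickas-rev} is monotone. The map $d\mapsto d\log d$ is nondecreasing for $d\ge1$, and both terms of the minimum decrease as $\sfM$ grows, so the bound only gets weaker. The degenerate cases also need separate treatment. When $\sfM(z)=1$, the first term equals $1$. When $d=1$, $z$ is rational and $|z-1|\ge 1/H(q)\ge 1/B$ by Cauchy-type reasoning. Both are covered by the term $1/(c_\epsilon B^\theta)$ because $c_\epsilon\ge1$ and $\theta>1$.

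The main obstacle is ensuring that the root $\rhobw$ really yields a root of the numerator $F$ in the form given by \Cref{prop:comp vector height and degree}, without cancellation issues. Here $G>0$ on $(0,\infty)$ makes this clean, so the remaining care goes into the monotonicity bookkeeping in the Mahler bound. A minor point is the definition of the order of $G$ at $0$: it is irrelevant here, because only positivity of $G$ at $\rhobw$ is used.
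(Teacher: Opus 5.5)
Your proof is correct and is essentially the paper's argument. Both use \Cref{prop:comp vector height and degree} to get a nonzero integer polynomial of degree at most $n$ and height at most $8W$ that vanishes at the threshold, bound its Mahler measure by $8W\sqrt{n+1}$ via Landau, and apply \Cref{th:mahler sep - generic}.

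The only difference is cosmetic. The paper separates the root $\alphabw$ of $\sum_k a_k\alpha^k$ from $1$ and then uses $\rhobw \ge 1-\alphabw$. You instead work with the reciprocal polynomial, whose root $1+\rhobw$ gives $|z-1|=\rhobw$ directly. Your separate handling of $d=1$ and $\sfM(z)=1$ is unnecessary: \Cref{th:mahler sep - generic} already covers every degree, and $\sfM(z)\ge z>1$ rules out $\sfM(z)=1$.
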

\begin{proof}
From \Cref{prop:comp vector height and degree}, to bound $\alphabw$ we can use the map $\tilde{F}:\alpha \mapsto F((1-\alpha)/\alpha) = \sum_{k=0}^{n} a_{k}\alpha^k$ with $|a_k| \leq 8W$. Now we can use \Cref{th:mahler sep - generic} to separate a root $z$ of $\tilde{F}$ of degree $d$ from $1$, noting that $d\leq n$ and $\sfM(z) \leq \sfM(\tilde{F}) \leq 8W\sqrt{n+1}$. 
\end{proof}
\section{Discussion.}\label{sec:discussion}
\paragraph{More comparison with prior work on $\rhobw$.} We first provide more context on prior work studying bounds on $\alphabw$.
While the existence of $\alphabw$ is shown for the first time in~\cite{blackwell1962discrete} (for MDPs) and~\cite{liggettlippman} (for SGs), to the best of our knowledge~\cite{andersson2009complexity} is the first to provide a separation result for $\alphabw$ and $1$ in SGs using root separation techniques (\cite{andersson2009complexity} uses Cauchy's bound). We note that the authors of~\cite{andersson2009complexity} do not try to get the tightest bound since they are only interested in obtaining a polynomial reduction from mean-payoff optimality to discount optimality, and their bound suffices for this purpose (it gives that $1-\alphabw$ has polynomial bit-size under Assumption~\ref{assumption}).
\cite{grand2026beyond} uses a construction related to~\cite{andersson2009complexity} but for MDPs, and using a less specialized separation result (Theorem~3 of~\cite{rump1979polynomial}, which separates any two algebraic numbers, while Cauchy is tailored to separation from $0$).
The two works~\cite{asadi2024deterministic, mukherjee2025howard} are very close and focus on deterministic instances, both  relying on~\cite{borwein1997polynomials} to bound the multiplicity of polynomials at $1$, and use this to bound $\alphabw$ directly (without studying $\alphad$ or showing a bound on $\bar{d}$ as defined in \Cref{prop:bar d implies blackwell}). \cite{asadi2024deterministic, mukherjee2025howard} arrive at the same bounds for $\alphabw$, and \cite{asadi2024deterministic} additionally provides lower bounds for deterministic instances. For exactness, we note that \cite{asadi2024deterministic, mukherjee2025howard} focus on studying the complexity of Strategy Iteration and only obtain bounds on $\alphabw$ as a byproduct of their analysis; we believe that their lines of arguments can be adapted to reduce the degree of the polynomials they study by $1$.
\cite{gaubert2026thresholds} focuses on SGs and introduce the separation method based on Mahler measures for $\alphabw$ in their Theorems~3.8 and~4.5, but these results need to be fixed to account for the small height assumption necessary to use the results from \cite{Dubickas1995} (as we highlighted in \Cref{sec:root separation}); \Cref{th:mahler sep - generic} provides such a fix and \Cref{th-dubickas} applies this corrected separation bound for bounding the Blackwell threshold. We also note that~\cite{loff2024smoothed} obtains a bound on $\alphabw$ for deterministic instances but only control it in probability (e.g. Theorems~5.7 and~6.1 in~\cite{loff2024smoothed}) whereas we are interested in universal bounds on the Blackwell threshold. Finally, we note that there is a stream of work using root separation results for the discounted values and mean-payoff values as algebraic numbers, see e.g.~\cite{oliu2021new,akian2019operator,hansen2011exact}. In contrast to these works, we use root separation results for the discount factor and interest rate.

To simplify the comparisons among the many bounds obtained in the past literature, we find it easier to directly compare the comparison families used in each work. This comparison is presented in Table~\ref{tab:comparison}.  
To be able to fairly compare our results with prior works, it is crucial to note that many prior works~\cite{grand2026beyond,asadi2024deterministic,mukherjee2025howard,gaubert2026thresholds} have worked with polynomials in the variable $\alpha$, so that they need to separate their positive roots from $1$, whereas we work with polynomials in the variable $\rho$ and we need to separate their positive roots from $0$. 
This is why in Table~\ref{tab:comparison} we separate the comparison with prior works depending if they work with separation from $1$ (as in~\cite{grand2026beyond,asadi2024deterministic,mukherjee2025howard,gaubert2026thresholds}) or separate from $0$ (as in~\cite{andersson2009complexity} and our work).
We also isolate the results for deterministic instances since in this case more structure can be exploited to bound the coefficients of the polynomials in the comparison family (we give our results in this case in Section~\ref{sec:deter d sens}).
We see that our reduced family has the smallest degree among comparison families studied in prior work, and our bounds on the coefficients are in fact also much better (keeping in mind that prior works separating positive roots of polynomials from $1$ would need an extra binomial coefficient to be translated as separation from $0$, which would make their bounds even worse, see \Cref{lem:cauchy from 1}). 
Overall, we note that we obtain the best bounds on the degree and height of our comparison family (as in \Cref{th:reduced family bound}), and we are the first to show that these bounds are tight (\Cref{th:reduced family bound - tightness}). This is made possible by (a) working with comparison vectors to lower the degree of the polynomials studied, and (b) working with the variable $\rho$ instead of $\alpha$, so that the Laplacian $L^{\sigma}$ appears naturally in our expression and we can rely on the Matrix Forest theorem. As we explained at the end of \Cref{sec:upper bound rho}, our upper bound on $\rhobw$ from \Cref{th:lower bound deterministic} improves in the asymptotics in $n$ compared to the bounds from~\cite{asadi2024deterministic}. 

\begin{table}[htp]
\caption{Comparison families obtained in our paper and in previous works. The columns ``Degree" and ``Height" give upper bound on the degree and height of the comparison family computed in the relevant works.}
    \label{tab:comparison}
    \centering
    \begin{tabular}{c|cccccc}
\toprule
    Paper  & Comp. family & Variable & Separate from & Degree &  Height & Remarks \\
        \midrule
    \cite{asadi2024deterministic} & complete & $\alpha$ & $1$ & $2n$  & $12 W$ & Deterministic \\
\cite{mukherjee2025howard} & complete & $\alpha$ & $1$ & $2n+1$  & $12 W$ & Deterministic
    \\  
     \cite{gaubert2026thresholds} & complete & $\alpha$ & $1$ & $2n-1$  & $12 W$ & Deterministic \\
     This paper & reduced  & $\alpha$  &  $1$  & $n$ & $8W$ & Deterministic \\
        \hline
    \cite{grand2023reducing} & complete & $\alpha$ & $1$  & $2n-1$ & $2W n 4^{n} M^{2n}$  &  MDPs only \\

    \cite{gaubert2026thresholds} & complete & $\alpha$  & $1$  & $2n-1$& $2W n M^{2n-1} {2n-1 \choose n }  $  & SGs \\
        \hline
        \hline

    \cite{andersson2009complexity} & complete & $1-\alpha$ & $0$ & $2n-1$  & $2 W (n!)^2 4^{n} M^{2n^2}$ &  SGs \\
      This paper & reduced & $\rho$ & $0$  & $n-1$ & $2 W M^{n-1} {n+1 \choose \lfloor (n+1)/2 \rfloor}$ &  Tight (Th.~\ref{th:reduced family bound - tightness})  \\
        \bottomrule
    \end{tabular}
\end{table}

We also find it worthy to emphasize the simplicity of our bounds. Our Theorem~\ref{Th:boundBlackwellCauchy} gives $\rhobw^{-1} \leq 4 W M^{n-1} {n+1 \choose \lfloor (n+1)/2 \rfloor}$ while Corollary 4.2 in~\cite{gaubert2026thresholds} gives 
\[(1-\alphabw)^{-1} \le   \frac{nW (2M)^{2n-1} {2n-1 \choose \lfloor \frac{2}{3} n \rfloor}}{2^{\lfloor \frac{2}{3} n \rfloor-2}}\]
and Theorem 5.2 in~\cite{mukherjee2025howard} shows that for deterministic instances and $\bar{d} = O\left( \sqrt{(2n+1) \log(12 W)} \right)$, we have
\[(1- \alphabw)^{-1} \le W \cdot O\left(  \frac{ e^{\bar{d}} (2n+1)^{\bar{d}+2}} {\bar{d}^{\bar{d}}} \right).\]
\paragraph{More comparison with prior work on $\alphad$.}  \cite{gaubert2026thresholds} also provides bounds for $\alphad$ for multichain deterministic instances, but in the case of stochastic instances \cite{gaubert2026thresholds} is only able to bound $\alphad$ under the unichain assumption, whereas we are able to do so for multichain stochastic instances in \Cref{Th:d-sensitiveBound}. Our proof of the bound on $\bar{d}$ such that $\bar{d}$-sensitive optimal policies are Blackwell optimal (\Cref{prop:bar d implies blackwell}) follows the same lines as the one of Theorem~3.5 in~\cite{gaubert2026thresholds}, but our bound on $\bar{d}$ is much better given we greatly improve the degree and height in the comparison family.
\paragraph{Comparisons of our bounds.}
The bounds obtained on the Blackwell thresholds based on Cauchy's bound, Mahler measure, and multiplicity approach are difficult to compare and which bound is best depends on the parameter regimes for $n$, $W$ and $M$. As argued at the end of \Cref{sec:separation blackwell}, for stochastic instances the bound based on Mahler from \Cref{th-dubickas} is always worse than the bound based on Cauchy from \Cref{Th:boundBlackwellCauchy}, while the multiplicity approach does not work for stochastic instances. For deterministic instances the picture is more mixed, so we provide a comparison in \Cref{tab:deterministic-bound-comparison-logW}; we see that bounds based on Mahler, Cauchy or multiplicity can all be optimal for some regimes of $\log(W)$, thus no separation method uniformly dominates the others. Note that the multiplicity bound is always (weakly) stronger than Cauchy's bound (because of the minimum in the bound of \Cref{cor:d-sensitiveBound - deter - plugged}) so in \Cref{tab:deterministic-bound-comparison-logW} we distinguish when Cauchy matches the multiplicity bound (in which case we indicate that ``Cauchy is the best method" in the last column) or not (in which case we indicate that ``Multiplicity is the best method" in the last column).

\begin{table}[t]
\caption{Best deterministic upper bounds on
$\log(\rhobw^{-1})$ in the different regimes of $\log(W)$. For the sake of readability we omit the multiplicative constants in the entries above. We write $c'=\frac{2(\pi/4+\varepsilon)}{A}$ with $A>0$ the constant appearing in Lemma \ref{lem:BEK99 at 0} and $c''=c'/2$. For two real sequences $(a_n)_{n \in \N},(b_n)_{n \in \N}$ we write $a_n \ll b_n$ for $a_n = o(b_n)$.}
\centering
\small
\renewcommand{\arraystretch}{1.55}
\setlength{\tabcolsep}{5pt}
\begin{tabularx}{\textwidth}{
    >{\raggedright\arraybackslash}p{5.5cm}
    >{\centering\arraybackslash}X
    >{\centering\arraybackslash}p{4.0cm}
}
\toprule
Regime for $\log(W)$
&
Best $O(.)$ upper bound on $\log(\rhobw^{-1})$
&
Best methods
\\
\midrule

$\log W=O(1)$
&
$\sqrt{n}\log(n)
$
&
Mahler (Th. \ref{th-dubickas}) and Multiplicity (Cor. \ref{cor:d-sensitiveBound - deter - plugged})
\\[1ex]

\hline

$\displaystyle 1 \ll \log W \ll n\exp\left(-c'\sqrt{\log(n)}\right)$
&
$\displaystyle\sqrt{n\log(n)(\log W+\log(n))}$
&
Mahler (Th. \ref{th-dubickas})
\\[2ex]

$\displaystyle \log W = \Theta\left( n\exp\left(-c'\sqrt{\log(n)}\right)\right)$
&
$\displaystyle n \sqrt{\log(n)}\exp(-c'' \sqrt{\log(n)})$
&
Mahler (Th. \ref{th-dubickas}) and Multiplicity (Cor. \ref{cor:d-sensitiveBound - deter - plugged})
\\[2ex]

$\displaystyle
n\exp\left(-c'\sqrt{\log(n)}\right)\ll \log(W) \ll n
$
&
$\displaystyle
\sqrt{n\log(W)}\,
\log\frac{n}{\log(W)}
$
&
Multiplicity
(Cor. \ref{cor:d-sensitiveBound - deter - plugged})
\\[2ex]

$ \log(W) = \Theta(n)
$
&
$n
$
&
Multiplicity (Cor. \ref{cor:d-sensitiveBound - deter - plugged}) and Cauchy (Th. \ref{Th:boundBlackwellCauchy})
\\[2ex]

$\displaystyle n \ll \log(W) $
&
$\displaystyle \log(W) + n$
&
Cauchy (Th. \ref{Th:boundBlackwellCauchy})
\\
\bottomrule
\end{tabularx}
\label{tab:deterministic-bound-comparison-logW}
\end{table}
\paragraph{Future work.}
Closing the gap between the upper and lower bounds on $\rhobw$ for deterministic instances (the upper bound is $\rhobw^{-1} = O(\exp(c  \sqrt{n} \log(n))$ and the lower bound is $\rhobw^{-1} = \Omega(\exp(c (n\log(n))^{1/3})$ (for fixed $W \in \N$) from Theorem~\ref{th:lower bound deterministic}) is an important problem. One can also study more restricted classes of games to obtain stronger results, e.g., irreducible games as in~\cite{boros2013discounted}. For more general classes of games such as concurrent (non-perfect information) stochastic games it is already known that we may have $1/\rho_{-1}$ to be double-exponential in the number of states of the game~\cite{miltersen2011discounted} so it may be difficult to obtain any useful bound on $\rhobw$.
Since our bounds for the polynomials in the reduced family are tight both for stochastic and deterministic instances (Theorem~\ref{th:reduced family bound - tightness} and after Proposition~\ref{prop:comp vector height and degree}), it is worth noting that to obtain improved bounds one must either exploit more structure for the root separation results, or use completely new ideas not based on comparison families (in contrast to all the past literature). 
\section{Conclusion.}
In this paper, we develop a unified approach for bounding the Blackwell and $d$-sensitive thresholds of perfect-information stochastic games. Our main contributions are the construction of reduced comparison families of polynomials based on adjacent policies, and a novel matrix-forest analysis that yields sharper (and essentially tight) estimates of the degree and height of the polynomials. These structural results are then applied to yield improved bounds on the Blackwell thresholds and the first general bounds for the $d$-sensitive thresholds in the multichain stochastic case, refined in the case of deterministic instances. The main open questions left by our work are to close the gap between the upper and lower bounds for the Blackwell thresholds (especially for deterministic instances) as well as to identify subclasses of games where the Blackwell and $d$-sensitive thresholds are ``only'' polynomially close to $1$.
\bibliography{tropical}
\bibliographystyle{alpha}
\appendix
\section{Proof of \Cref{prop:reduction games to mdps}.}\label{app:proof mdp reduction}
\begin{proof}[Proof of \Cref{prop:reduction games to mdps}.]
(a) This claim follows from the definition of the MDPs induced by a stochastic game.

(b) Let $(\sigma\opt,\tau\opt)$ be a pair of discount optimal policies in $\Gamma$ for an interest rate $\rho'>0$ satisfying $\rho'<\bar{\rho}$. To show that $\bar{\rho} \leq \rhobw$ we need to prove that $(\sigma\opt,\tau\opt)$ is Blackwell optimal in $\Gamma$. As $(\sigma\opt,\tau\opt)$ is discount optimal in $\Gamma$ for $\rho'$, by~\eqref{eq:DefDiscOpt} we have that $\sigma\opt$ (resp. $\tau\opt$) is discount optimal in $\Gamma_{\tau\opt}^{\Min}$ (resp. $\Gamma_{\sigma\opt}^{\Max}$) for $\rho'$. Since $\rho'<\bar{\rho}$ and the Blackwell threshold of any MDP in $\cM(\Gamma)$ is lower bounded by $\bar{\rho}$, it follows that $\sigma\opt$ (resp. $\tau\opt$) is discount optimal in $\Gamma_{\tau\opt}^{\Min}$ (resp. $\Gamma_{\sigma\opt}^{\Max}$) for $\rho>0$ small enough. Thus, $(\sigma\opt,\tau\opt)$ is discount optimal in $\Gamma$ for $\rho>0$ small enough, and so it is Blackwell optimal in $\Gamma$.

(c) The argument for proving this claim is similar to the one used to prove (b), and so we omit it for conciseness. 
\end{proof}

\section{Comparison with lower bounds in~\cite{mukherjee2025efficient}.}\label{app:comparison mukherjee}
We compare our upper bound for stochastic MDP instances from Section~\ref{sec:lower bound sto} with the results from~\cite{mukherjee2025efficient}.
\begin{proposition}[Theorem 2, \cite{mukherjee2025efficient}]
Let $\epsilon \in (0,1/2)$.
There exists an MDP with rewards in $\{0,1\}$, transitions in $\displaystyle \{0,1,1/2,1/2-\epsilon,\epsilon\}$ and $n+7$ states such that $\alphabw \geq 1 - \frac{1}{\epsilon 2^{n/3}}.$
\end{proposition}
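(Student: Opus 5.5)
The statement is the one from \cite{mukherjee2025efficient}: we must build an MDP with $n+7$ states, rewards in $\{0,1\}$ and transitions in $\{0,1,1/2,1/2-\epsilon,\epsilon\}$ whose Blackwell threshold satisfies $\alphabw \ge 1-1/(\epsilon 2^{n/3})$.

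By the argument in the proof of \Cref{th:reduced family root bound alphabw}, $\alphabw$ is a point where the value functions of two policies cross for the last time before $1$. So the plan is to exhibit two policies, differing at one decision state $s$, whose values at $s$ cross at some discount factor $\alpha^\star$ with $1-\alpha^\star \le 1/(\epsilon 2^{n/3})$. The gadget will be a binary-counter chain.

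\emph{Construction.} Take a chain of roughly $n/3$ blocks, each of three states. Each block ``halves'' the probability mass that proceeds forward: with probability $1/2$ the process continues along the chain, and with probability $1/2$ it returns to a hub. In this way the probability of traversing the whole chain is of order $2^{-n/3}$, and the expected time to traverse it is of order $2^{n/3}$. At the decision state $s$ we offer two actions.
\begin{itemize}
\item Action $a$ enters the slow chain. The reward $1$ is collected only once the end of the chain is reached.
\item Action $b$ is a ``leaky'' alternative using the transition probabilities $\epsilon$ and $1/2-\epsilon$. It yields a reward stream with slightly smaller long-run average but larger early rewards.
\end{itemize}
The seven extra states serve as the hub, the absorbing $0$-reward and $1$-reward states, and the gadget for action $b$.

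\emph{Key steps.}
\begin{enumerate}
\item Compute both values at $s$ in closed form as rational functions of $\alpha$, using the geometric structure of the chain. For action $a$, the value behaves like $g/(1-\alpha)$ minus a correction of size about $2^{n/3}$ times a bias term.
\item Write $\Delta(\alpha)=v^a(\alpha)-v^b(\alpha)$. Show that its gain term has one sign of order $\epsilon$, while its bias term has the opposite sign and is of order $2^{n/3}$.
\item Conclude that the sign change occurs where $(1-\alpha)\cdot 2^{n/3} \approx \epsilon^{-1}$ in the appropriate normalization. This gives a crossing point $\alpha^\star$ with $1-\alpha^\star \le 1/(\epsilon 2^{n/3})$.
\item Deduce $\alphabw \ge \alpha^\star$. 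The policy that is optimal just below $\alpha^\star$ is discount optimal there but is not Blackwell optimal.
\end{enumerate}

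\emph{Main obstacle.} The delicate step is calibrating the gadget so that the crossing scales exactly as $1/(\epsilon 2^{n/3})$ and is not swamped by lower-order terms. My first attempt would be to compute $(1-\alpha)\Delta(\alpha)$ explicitly and to verify that it changes sign exactly once in $(0,1)$, using monotonicity in $\alpha$. Intermediate value and monotonicity arguments then give a unique crossing at the claimed scale.
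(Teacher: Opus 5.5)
The paper gives no proof of this proposition. It is quoted from \cite{mukherjee2025efficient} and used only to translate $\epsilon=1/q$ into $M=2q$ when comparing with \Cref{th:lower bound stochastic}. So your proposal has to stand on its own, and as written it is a strategy rather than a proof. The strategy is sound: a single decision state, two actions, and value curves that cross very close to $1$ because one action feeds a slowly traversed halving chain. But the MDP is never pinned down. Action $b$ is not specified at all, and the chain is described only qualitatively. The three-state blocks, the transitions $\epsilon$ and $1/2-\epsilon$, and the seven extra states are chosen to fit the parameter counts of the statement rather than derived. Steps~1--3 are claims about value functions that are never computed. The whole content of the statement is \emph{where} the crossing lies, so deferring that computation to your ``main obstacle'' leaves nothing established.

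The heuristic in Steps~2--3 also does not hold together. Suppose $\Delta(\alpha)\approx g/(1-\alpha)+h$ with $|g|\asymp\epsilon$ and $|h|\asymp 2^{n/3}$ of opposite signs. Then the zero is at $1-\alpha\approx|g|/|h|\asymp\epsilon\,2^{-n/3}$, i.e.\ $(1-\alpha)2^{n/3}\approx\epsilon$, not $\epsilon^{-1}$. Moreover, the two-term truncation needs a remainder estimate. For a chain whose hitting times are of order $T\approx 2^{n/3}$, the $k$-th Laurent coefficient can be of order $T^{k+1}$, so the dropped terms are not small once $1-\alpha\gtrsim 1/T$, and in particular not at the scale you claim.

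The fix is to make everything exact, as the paper does for its own instance in the proof of \Cref{th:reduced family bound - tightness}, and to check the sign at a single point. With one decision state, if the non-Blackwell action is discount optimal at some $\alpha_0$, then $\alphabw\ge\alpha_0$, so you do not need uniqueness of the crossing. One state per halving suffices, and $\epsilon$ plays no role. For example:
\begin{itemize}
\item Let $a_1$ at $s_0$ and each $s_j$, $1\le j\le k$, cost $1$ and move to $s_{j+1}$ or back to $s_0$ with probability $1/2$ each, with $s_{k+1}$ an absorbing cost-$0$ sink.
\item Let $a_0$ cost $0$ and enter a deterministic two-cycle with costs $1,0$.
\end{itemize}
Then $v^{a_0}_{s_0}(\alpha)=\alpha^2/(1-\alpha^2)$ and $v^{a_1}_{s_0}(\alpha)=\alpha(1-f)/(1-\alpha)$. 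Here $f=\E[\alpha^\tau]=x^{k+1}/(1-\sum_{j=1}^{k+1}x^j)$, with $x=\alpha/2$ and $\tau$ the number of cost-$1$ steps before absorption. Action $a_0$ has gain $1/2$, so it is not Blackwell optimal. Its optimality condition $f\le 1/(1+\alpha)$ reduces, with $\eta=1-\alpha$, to $x^{k+1}(1+2\eta-\eta^2)\le 2\eta$, which holds at $\eta=2^{-(k+1)}$. Taking $k=n+3$ gives $n+7$ states, rewards in $\{0,1\}$, transitions in $\{0,1/2,1\}$, and $1-\alphabw\le 2^{-(n+4)}$. This implies the stated bound.
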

For $\epsilon=1/q$ with $q \geq 2, q \in \N$ we have that $M=2q=2\epsilon^{-1}$ in an MDP instance with transitions in $\{0,1,1/2,1/2-\epsilon,\epsilon\}$, and the authors in~\cite{mukherjee2025efficient} must take $\epsilon \in (0,1/2)$ to maintain non-negative transition probabilities. The term $\frac{1}{\epsilon 2^{n/3}}$ then becomes $\frac{M}{2}\frac{1}{2^{n/3}}$, as we claim in Section~\ref{sec:lower bound sto}.
\section{Proof of Theorems~\ref{th:borwein pinner - coefficient H} and~\ref{th:lower bound deterministic}.}\label{app:lower bound deterministic}
\Cref{th:lower bound deterministic} follows directly from applying \Cref{lem:reduction poly to alphabw} to the polynomials constructed in \Cref{th:borwein pinner - coefficient H}. The rest of this section is devoted to proving \Cref{th:borwein pinner - coefficient H}.
The proof proceeds in several steps. The main steps are similar to~\cite{borwein1997polynomials}, except that we adapt the polynomials constructed in~\cite{borwein1997polynomials}, which have coefficients in $\{0,1,-1\}$, to polynomials with coefficients in $\{-H,...,H\}$.
\paragraph{Step 1.}
We parametrize the degree $N$ by an integer $m \in \N$. 

By~\cite[Theorem 2.7]{BEK1999}, we can find a polynomial $F$ of degree at most $m-1$ with coefficients in $\{0,+1,-1\}$ and with multiplicity $L \geq 1$ at $1$, where $L \geq c_{1} \sqrt{m/\log(m)}$ for some $c_1>0.$ 
Besides, by~\cite[Theorem 2.4]{BEK1999} we know that we must have $L \leq c_{2} \sqrt{m}$.
Let $D:= \lfloor 3em/L \rfloor + 1$ and consider the following polynomial
\[G(x) \coloneqq H \cdot x^m \cdot (x^{mD}-1) F(x^D)\frac{x^D-1}{x-1} - F(x) \; . \]
We write $N$ for the degree of $G$, we note that $N \leq 2mD+m-1 < 3mD$ and that $N \geq mD = c m^{3/2}$ (for a constant $c >0$) so that $N \rightarrow + \infty$ for $m \rightarrow + \infty$ . We also note that by construction $G$ has coefficients in $\{-H,...,H\}$. We will show that $G$ has a root very close to $1$.
\begin{lemma}\label{lem:beta root G}
For $m$ large enough, $G$ has a real root $\alpha>1$ such that 
\[\frac{1}{2HmD^{L+2}} \leq \alpha - 1 \leq \frac{2}{HmD^{L+2}} \; . \]
\end{lemma}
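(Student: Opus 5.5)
The plan is to write $x = 1+t$ with $t>0$ small, Taylor-expand both terms of $G$ near $t=0$, and apply the intermediate value theorem on the interval $t\in[t_-,t_+]$ with $t_- = 1/(2HmD^{L+2})$ and $t_+ = 2/(HmD^{L+2})$. Write $G = HA - F$ with $A(x) = x^m (x^{mD}-1) F(x^D)\frac{x^D-1}{x-1}$.

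For $F$, multiplicity exactly $L$ at $1$ gives $F(x) = (x-1)^L F_1(x)$ with $F_1(1) \neq 0$. Since $F_1\in\Z[x]$, we have $|F_1(1)|\ge 1$. Hence $F(1+t) = t^L(F_1(1) + O_m(t))$.

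Similarly $F(x^D) = (x^D-1)^L F_1(x^D)$. This gives
\[
A(1+t) = (1+t)^m\,((1+t)^{mD}-1)\,((1+t)^D-1)^{L}\,F_1((1+t)^D)\,\frac{(1+t)^D-1}{t}.
\]
In the range considered, $Dt$ and $mDt$ are tiny: $mDt \le 2/(HD^{L+1}) \to 0$ because $L\ge 1$ and $D\to\infty$. We can therefore replace $(1+t)^{mD}-1 \approx mDt$, $(1+t)^D-1\approx Dt$, $(1+t)^m\approx 1$, $(1+t)^D-1)/t \approx D$ and $F_1((1+t)^D)\approx F_1(1)$. The result is $A(1+t) = mD^{L+2} t^{L+1} F_1(1)(1+\eta)$, where $\eta \to 0$ uniformly on the interval as $m\to\infty$. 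Then
\[
G(1+t) = t^L F_1(1)\big(HmD^{L+2} t (1+\eta) - 1 - \eta'\big).
\]
At $t_-$ the bracket is about $1/2-1<0$, and at $t_+$ it is about $2-1>0$. The intermediate value theorem then yields a root $\alpha = 1+t$ in $[1+t_-,1+t_+]$, as claimed.

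The main obstacle is making the error terms $\eta,\eta'$ uniformly small; the constants in $O_m(t)$ depend on $F$ and hence on $m$. For $\eta'$, I would bound $|F_1^{(j)}|$ on $[1,2]$ crudely. Since $F$ has coefficients in $\{0,\pm1\}$ and degree less than $m$, we have $F_1(x) = F(x)/(x-1)^L$, whose coefficients are bounded by roughly $m^{L}2^{m}$ (dividing by $(x-1)$ means taking partial sums). This gives $|F_1(1+t) - F_1(1)| \le t\, m^{O(L)}2^{O(m)}$. Because $t \le 2/(HmD^{L+2})$, $D \approx 3em/L$ and $L\ge c_1\sqrt{m/\log m}$, we have $D^{L} \ge (3e\sqrt{m}/c_2)^{L}$, which grows like $\exp(\Omega(\sqrt{m\log m}))$.

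This shows the choice of $D$ must be checked carefully. I would first try to prove that $D^L$ beats $2^{O(m)}$ via $D = \lfloor 3em/L\rfloor+1$, using that $(3em/L)^L$ is at least $\binom{m}{L}$-type growth. If the crude bound on $F_1$ is too lossy, I would instead bound $F_1((1+t)^D)$ directly. The idea is to compare $F(y)/(y-1)^L$ to $F_1(1)$ using Taylor's formula at order $L$, where $F^{(L+1)}$ is bounded by $m^{L+1}\cdot m$. This gives a relative error of $O(m^{L+2}Dt/(L+1)!\,|F_1(1)|)$, and with $|F_1(1)|\ge1$ and $t\le 2/(HmD^{L+2})$ the factor $(m/D)^{L}/L!$ is at most $(L/3e)^L/L! \le e^{-L}$-ish. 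This is small, and it is precisely why $D \approx 3em/L$ was chosen. The same Taylor-remainder estimate handles $\eta'$ for $F(1+t)$.
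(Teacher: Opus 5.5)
Your argument is correct, provided you go straight to the Taylor-remainder estimate. Its skeleton is the paper's: pull out $a(x-1)^L$ with $a=F_1(1)=F^{(L)}(1)/L!$ and $|a|\ge 1$, show the remaining factor is $-1+HmD^{L+2}(x-1)$ up to an error below $1/4$ at $x_{1/2}$ and $x_2$, and apply the intermediate value theorem.

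The difference is in how the error is controlled. The paper Taylor-expands all of $G$ at $1$ and bounds the whole tail $\sum_{r\ge 2}G^{(L+r)}(1)(x-1)^r/(a(L+r)!)$ with the blunt estimate $|G^{(j)}(1)|\le 2HN^{j+1}$, $N<3mD$. That needs $|x-1|<1/(6mD)$ for geometric convergence and gives an error of order $1/D+1/L$. You instead use the product structure $A(x)=x^m(x^{mD}-1)(x^D-1)^L F_1(x^D)\frac{x^D-1}{x-1}$. Every factor other than $F_1$ is explicit, with relative error $O(LDt+mDt)=O(D^{-(L+1)})$. Only $F_1$ needs the Lagrange remainder $F_1(y)-F_1(1)=F^{(L+1)}(\xi)(y-1)/(L+1)!$. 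Here $|F^{(L+1)}(\xi)|\le m^{L+2}\xi^{m}$ and $\xi^m\le (1+t)^{mD}=1+o(1)$, which you should state explicitly. This gives a relative error at most about $\frac{2}{H}\frac{(m/D)^{L+1}}{(L+1)!}\le \frac{2}{H}3^{-(L+1)}$. Your route is sharper and shows exactly why $D>3em/L$ is chosen. The paper's route is more mechanical, a single series with a crude derivative bound, but it loses more.

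Drop the first option you list, because it cannot work. Bounding the coefficients of $F_1$ by $m^L2^m$, or even by $m^L$, gives $|F_1(y)-F_1(1)|\lesssim m^{L+2}(y-1)$ with $y-1\approx Dt\le 2/(HmD^{L+1})$. You would then need $D^{L+1}$ to dominate $m^{L+1}$, which is impossible since $D<m$ for large $m$. Moreover $D^L=\exp(O(\sqrt m\log m))$ is far smaller than $2^m$, so "$D^L$ beats $2^{O(m)}$" is false. What rescues the estimate is exactly the factor $1/(L+1)!$. Equivalently, it is the true coefficient size $\approx\binom{m}{L}$ of $F_1$, evaluated at points $1+o(1/m)$ rather than on $[1,2]$. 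The Taylor-remainder bound should therefore be your main argument, not a fallback.
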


Lemma~\ref{lem:beta root G} follows from the intermediate value theorem and the next key lemma.
\begin{lemma}\label{lem:key lemma}
Let $a:= F^{(L)}(1)/L!$ with $a\in \Z,|a|\geq 1$ by assumption. Then, we can write $G(x) = a \cdot (x-1)^L \cdot \phi(x)$, where $\phi(x) = -1 + H m D^{L+2} (x-1) + E(x)$ for some function $E(x)$ such that for $x_{t} = 1 + \frac{t}{HmD^{L+2}}$ we have
\begin{equation}\label{eq:phi change of sign}
\phi(x_{1/2}) = - \frac{1}{2} + E(x_{1/2}),\;  \phi(x_{2}) = 1 + E(x_{2}),\; |E(x_{1/2})| \leq 1/4,\text{ and }|E(x_{2})| \leq 1/4 \; .
\end{equation}
\end{lemma}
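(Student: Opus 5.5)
We will factor $G$ by separating the two pieces that vanish to high order at $x=1$. Since $F$ has multiplicity exactly $L$ at $1$, we can write $F(x)=(x-1)^L F_1(x)$ with $F_1(1)=a:=F^{(L)}(1)/L!\in\Z\setminus\{0\}$. Then $F(x^D)=(x^D-1)^L F_1(x^D)$. Also $x^{mD}-1=(x-1)\,\frac{x^{mD}-1}{x-1}$. Substituting, the first term of $G$ becomes
\[
H x^m (x-1)^{L+1}\Big(\tfrac{x^D-1}{x-1}\Big)^{L+1}\tfrac{x^{mD}-1}{x-1}\,F_1(x^D),
\]
so $(x-1)^L$ divides both terms of $G$. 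Pulling out $a(x-1)^L$ defines
\[
\phi(x)=\frac{1}{a}\Big[H x^m (x-1)\Big(\tfrac{x^D-1}{x-1}\Big)^{L+1}\tfrac{x^{mD}-1}{x-1}F_1(x^D)-F_1(x)\Big].
\]
At $x=1$ this gives $\phi(1)=-F_1(1)/a=-1$. The coefficient of $(x-1)$ comes from the first bracket, whose value at $1$ is $H\cdot D^{L+1}\cdot mD\cdot a/a=HmD^{L+2}$. This motivates defining
\[
E(x):=\phi(x)+1-HmD^{L+2}(x-1).
\]
The displayed identities $\phi(x_{1/2})=-\tfrac12+E(x_{1/2})$ and $\phi(x_2)=1+E(x_2)$ are then immediate by plugging in $x_t$.

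The substance is bounding $|E(x_t)|\le 1/4$ for $t\in\{1/2,2\}$. Write $\delta=x_t-1=t/(HmD^{L+2})$. We will split $E$ into two pieces.

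The first piece is $-(F_1(x)-F_1(1))/a$. We bound it by $|F_1'|\,\delta$ on $[1,1+\delta]$. For this we need a crude bound on the derivative of $F_1$, which is a polynomial of degree $<m$. Since $F$ has coefficients in $\{0,\pm1\}$, $F_1$ has height at most $2^{m}$-ish (via Cauchy integral or division bounds), so $|F_1'|\le m^2 2^m$ near $1$. Then $\delta$ is tiny as long as $HmD^{L+2}$ dominates. Here $D^{L+2}\ge (3em/L)^{L}$ with $L\ge c_1\sqrt{m/\log m}$, and $(3em/L)^L$ grows like $\exp(cL\log m)$. This beats $2^m$ only if $L\log m\gg m$, which fails. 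So we need a better estimate. Instead we use $F_1(x)=F(x)/(x-1)^L$ together with Taylor expansion: $F_1(1+\delta)-F_1(1)=\sum_{k>L}\frac{F^{(k)}(1)}{k!}\delta^{k-L}$. Since $|F^{(k)}(1)/k!|\le \binom{m}{k}m$, the sum is bounded by $m\sum_{j\ge1}\binom{m}{L+j}\delta^j\le m\,\binom{m}{L+1}\delta(1+o(1))$. Because $\binom{m}{L+1}\le (em/L)^{L+1}$ while $D^{L+2}\ge (3em/L)^{L+2}$, this piece is $O(3^{-L})\to0$. This is exactly why $D$ was chosen as $\lfloor 3em/L\rfloor+1$.

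The second piece is the first bracket minus its linear approximation $HmD^{L+2}\delta$, divided by $a$. Writing it as $HmD^{L+2}\delta\cdot(\Psi(x)/\Psi(1)-1)$, where $\Psi$ is the product of the smooth factors $x^m$, $(\tfrac{x^D-1}{x-1})^{L+1}$, $\tfrac{x^{mD}-1}{mD(x-1)}$ and $F_1(x^D)/a$, it suffices to show $\Psi(x_t)/\Psi(1)=1+o(1)$. For the first three factors this is elementary: each has the form $(1+O(\delta\cdot\text{degree}))$, and $\delta\cdot mD^2(L+1)\to0$ because $D^{L}$ is huge. The factor $F_1(x^D)$ is handled by the same Taylor argument with $\delta$ replaced by $x^D-1\approx D\delta$, still tiny relative to $\binom{m}{L+1}^{-1}$. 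The main obstacle is this bookkeeping, namely controlling $F_1$ near $1$ via the binomial bound on Taylor coefficients of $F$. With $t\le2$ and $m$ large, both pieces are at most $1/8$, giving $|E(x_t)|\le1/4$.
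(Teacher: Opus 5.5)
Your proposal is correct, but it takes a different route from the paper.

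\textbf{The paper's route.} The paper never factors $G$. It Taylor-expands $G$ at $1$ and reads off
$E(x)=-\frac{F^{(L+1)}(1)}{a(L+1)!}(x-1)+\sum_{r\ge 2}\frac{G^{(L+r)}(1)}{a(L+r)!}(x-1)^r$.
It then bounds all higher derivatives of $G$ crudely through its height and degree, $|G^{(r)}(1)|\le 2HN^{r+1}$ with $N<3mD$. The factorial $(L+2)!$ together with $m/D\le L/(3e)$ yields $|E(x_t)|\le C/(HD)+C'/(HL)$. This step is where both constants in $D=\lfloor 3em/L\rfloor+1$ are consumed: the $3$ absorbs $N<3mD$ raised to the power $L+3$, and the $e$ comes from Stirling.

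\textbf{Your route.} You use $F=(x-1)^L F_1$ to make $\phi$ an explicit polynomial. You then split $E$ exactly into two pieces: the deviation $F_1(x_t)-F_1(1)$, and $t\,(\Psi(x_t)/\Psi(1)-1)$ with $\Psi(1)=D^{L+1}$.
\begin{itemize}
\item The normalized cyclotomic-type factors lie in $[1,x_t^{\deg}]$. Their total error is of order $\delta mD^2(L+1)=t(L+1)/(HD^{L})$, which is astronomically small.
\item Both evaluations of $F_1$ are controlled by the Taylor coefficients of $F$, namely at $x_t$ and at $x_t^D$ (where $x_t^D-1\approx D\delta$). You use $|F^{(k)}(1)/k!|\le m\binom{m}{k}$, which is correct (indeed $\le\binom{m}{k+1}$), and $\binom{m}{L+1}\le (em/L)^{L+1}$. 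This gives errors $O(3^{-L})$.
\end{itemize}

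\textbf{Comparison.} Your decomposition is more structural and gives much sharper error bounds. It shows the only delicate ingredient is evaluating $F_1$ near $1$; in your argument, $3e$ in the definition of $D$ could be replaced by any constant larger than $e$. The cost is more bookkeeping. The paper's argument is shorter and treats $G$ as a black box, needing only its degree and height, but it needs the full constant $3e$ and only gets $O(1/L)$ decay.

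You should delete the abandoned first attempt (the $2^m$ height bound on $F_1$), since it plays no role in the final argument.
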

From the above lemma we get a sign change for $\phi$ and hence for $G$, very close to $1$ on the right, which is enough to conclude.
\begin{proof}[Proof of Lemma~\ref{lem:key lemma}.]
A Taylor expansion at $1$ gives
\[
G(x) = a \cdot (x-1)^L \cdot \left( -1 + H m D^{L+2} (x-1) + E(x) \right)
\]
with $\displaystyle
E(x) = - \frac{F^{(L+1)}(1)}{a (L+1)!}(x-1) + \sum_{r = 2}^\infty \frac{G^{(L+r)}(1)}{a (L+r)!}(x-1)^r$. 
For $\phi(x) = -1 + H m D^{L+2} (x-1) + E(x)$ we indeed get $G(x) = a \cdot (x-1)^{L} \cdot \phi(x)$, and 
 for $x_{t} = 1 + \frac{t}{HmD^{L+2}}$ we have
 \[\phi(x_{1/2}) = - 1 + \frac{1}{2} + E(x_{1/2}) = - \frac{1}{2} + E(x_{1/2})\;  \text{ and }\;  \phi(x_{2}) = - 1 + 2 + E(x_{2}) = 1 + E(x_{2})\; .
\]
There remains to show that we can control $|E(x_t)|$ for $t$ small enough:
\begin{itemize}
\item $F$ has degree at most $m-1$ and coefficients bounded by $1$ so 
\begin{equation}\label{eq:bound F}
F^{(r)}(1) \leq 1 \cdot m \cdot (m-1)^r \leq m^{r+1} \; .
\end{equation}
\item $G$ has degree $N$ and coefficients bounded by $H$ so $
G^{(r)}(1) \leq H \cdot (N+1)  \cdot (N)^r \leq 2 H N^{r+1}$.
\end{itemize}
We start by bounding the first term of $E(x)$ in absolute value. From $|a| \geq 1$ and \eqref{eq:bound F} we have
\begin{equation*}
|\frac{F^{(L+1)}(1)}{a (L+1)!}(x-1)| \leq \frac{m^{L+2}}{(L+1)!}|x-1| \; .
\end{equation*}
We now bound the second term of $E(x)$ in absolute value:
\begin{align*}
|\sum_{r = 2}^\infty \frac{G^{(L+r)}(1)}{a (L+r)!}(x-1)^r | & \leq \frac{2 H |x-1|^2 (3mD)^{L+3}}{(L+2)!} \sum_{r = 2}^\infty \frac{(3mD|x-1|)^{r-2}}{\frac{(L+r)!}{(L+2)!}} \\
& \leq \frac{2 H |x-1|^2 (3mD)^{L+3}}{(L+2)!} \sum_{r = 2}^\infty (3mD|x-1|)^{r-2} \; .
\end{align*}
Now for $|x-1| < \frac{1}{2 \cdot 3mD}$ we get that the sum in the right-hand side of the above equation is smaller than $2$, so we conclude that for $|x-1| < \frac{1}{2 \cdot 3mD}$ we have
\begin{equation}\label{eq:bound on E(x)}
|E(x)| \leq \frac{m^{L+2}}{(L+1)!}|x-1| + \frac{4 H  (3mD)^{L+3}}{(L+2)!} |x-1|^2 \; .
\end{equation}
Let $x_t = 1 + \frac{t}{HmD^{L+2}}$ for $t \in [1/2,2]$. We can check that $|x_t-1| < \frac{1}{6 mD}$, and now we bound $E(x_t)$:
\begin{itemize}
\item For the first term of $E(x_t)$ as in \eqref{eq:bound on E(x)} we have
\begin{align*}
\frac{m^{L+2}}{(L+1)!}|x_t-1| = \frac{m^{L+2}}{(L+1)!}\frac{t}{HmD^{L+2}}  = \frac{t}{H} \frac{m^{L+1}}{(L+1)!}\frac{1}{D^{L+2}} & = \frac{t}{ D H} \frac{1}{(L+1)!}\left(\frac{m}{D}\right)^{L+1} \\
& \leq \frac{t}{ D H} \frac{1}{3^{L+1}} \frac{1}{(L+1)!}\left(\frac{L}{e}\right)^{L+1}
\end{align*}
where the equalities are just reformulations and the inequality is from $D \geq 3em/L$ .
Therefore, for $L$ large enough (and therefore for $m$ large enough), from $k! \geq (k/e)^{k}$ for any $k \in \N$, we get $\displaystyle \frac{m^{L+2}}{(L+1)!}|x_t-1| \leq \frac{t}{ D H} \frac{1}{3^{L+1}}.$
\item For the second term of $E(x_t)$ as in \eqref{eq:bound on E(x)} we have (absorbing the factor $4$ in the constants later)
\begin{align*}
\frac{H  (3mD)^{L+3}}{(L+2)!} |x_t-1|^2 = \frac{H  (3mD)^{L+3}}{(L+2)!} \frac{t^2}{H^2 m^2 D^{2L+4}} & = \frac{t^2}{H} \frac{3^{L+3}}{(L+2)!} \frac{m^{L+1}}{D^{L+1}}
 \\
& \leq \frac{t^2}{H} \frac{9}{(L+2)!} \left( \frac{L}{e}\right)^{L+1}
\end{align*}
where the inequality follows from $m/D \leq L/3e$. Now from $k! \geq (k/e)^k$ we get that $(L+2)! \sim \sqrt{2\pi(L+2)} \left(\frac{L+2}{e}\right)^{L+2}$, hence we conclude that for $L$ large enough we must have, for some $C'>0$,
\[\frac{t^2}{H} \frac{9}{(L+2)!} \left( \frac{L}{e}\right)^{L+1} \leq \frac{t^2}{H}\frac{9}{L+2} \left( \frac{e}{L+2}\right)^{L+2} \left( \frac{L}{e} \right)^{L+1} \leq \frac{C'}{HL} t^2.\]
\end{itemize}
Overall, we have obtained that for $L$ large enough (and therefore $m$ large enough) we have that, for some absolute constants $C,C'>0$,
$\displaystyle |E(x_t)| \leq \frac{C}{HD} + \frac{C'}{HL} \leq \frac{C}{D} + \frac{C'}{L}$ for any $t \in [1/2,2]$ (using $H\geq 1$).
Therefore, for $L$ large enough, we can ensure that $|E(x_{1/2})|  \leq 1/4$ and $|E(x_{2})|  \leq 1/4$,
which combined with \eqref{eq:phi change of sign} concludes the proof of Lemma~\ref{lem:key lemma}. 
\end{proof}
We have shown the existence of a real root $\gamma$ of $G$, with $\gamma >1$ and $\gamma - 1 = O \left( \frac{1}{HmD^{L+2}} \right)$. 
\paragraph{Step 2.} We now compute an equivalent of $mD^{L+2}$. The next lemma is shown in \cite{borwein1997polynomials}, at the end of their proof of Theorem 5. We include it here for completeness.
\begin{lemma}\label{lem:small denominator}
Let $m \in \N$ and $(L_m)_{m \in \N}$, $(D_m)_{m \in \N}$, and $(N_m)_{m \in \N}$ be three sequences of integers such that there exists some $c_1,c_2>0$ satisfying
\[c_1 \sqrt{m/\log(m)} \leq L_m \leq c_2 \sqrt{m},\; D_m = \lfloor 3em/L_m\rfloor +1,\text{ and } N_m = 2mD_m+m-1\;.\]
Then, there exists a constant $c_3>0$ such that for $m$ large enough we have \[\frac{1}{mD_m^{L_m+2}} \leq \exp(- c_3 (N_m \log(N_m))^{1/3}) \; . \]
\end{lemma}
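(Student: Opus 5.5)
We aim to show that $\log(m D_m^{L_m+2}) \geq c_3 (N_m\log N_m)^{1/3}$ for $m$ large, since the claimed inequality is exactly this after taking logarithms. The plan is to express both sides in terms of $m$ alone, using the two-sided control on $L_m$. The left-hand side is at least $(L_m+2)\log D_m \geq L_m \log D_m$, so it suffices to lower bound $L_m$ and $\log D_m$ separately. The right-hand side is controlled by upper bounding $N_m$ through $D_m$.

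\emph{Step 1 (size of $D_m$).} From $L_m \leq c_2\sqrt{m}$ we get $D_m \geq 3em/L_m \geq (3e/c_2)\sqrt{m}$, hence $\log D_m \geq \tfrac12\log m - O(1) \geq \tfrac14 \log m$ for $m$ large. From $L_m \geq c_1\sqrt{m/\log m}$ we get $D_m \leq 3em/L_m + 1 \leq (3e/c_1)\sqrt{m\log m} + 1 \leq c_4\sqrt{m\log m}$.

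\emph{Step 2 (size of $N_m$).} Combining Step 1 with $N_m = 2mD_m + m - 1 \leq 3mD_m$ gives $N_m \leq c_5 m^{3/2}\sqrt{\log m}$. In particular $\log N_m \leq 2\log m$ for large $m$, and therefore
\[
(N_m\log N_m)^{1/3} \leq c_6\, \sqrt{m}\,(\log m)^{1/6}(\log m)^{1/3} = c_6\sqrt{m\log m}.
\]

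\emph{Step 3 (comparison).} From $L_m \geq c_1\sqrt{m/\log m}$ and Step 1,
\[
\log(m D_m^{L_m+2}) \geq L_m\log D_m \geq \tfrac{c_1}{4}\sqrt{m/\log m}\,\log m = \tfrac{c_1}{4}\sqrt{m\log m}.
\]
Comparing with Step 2, we may take $c_3 = c_1/(4c_6)$, which completes the argument.

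The main point to handle carefully is that the two bounds on $L_m$ are used in opposite directions. The upper bound $L_m \leq c_2\sqrt m$ is what keeps $\log D_m$ of order $\log m$, so it yields the lower bound on the left-hand side. The lower bound $L_m \geq c_1\sqrt{m/\log m}$ is used twice: it keeps $D_m$, and hence $N_m$, from being too large, and it also lower bounds the exponent $L_m$. The resulting powers of $\log m$ match exactly, since the left-hand side is of order $\sqrt{m\log m}$ and the right-hand side is $O(\sqrt{m\log m})$. Beyond tracking these powers, no further obstacle is expected.
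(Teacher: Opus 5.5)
Your proof is correct and follows essentially the same route as the paper. Both arguments lower bound $\log(m D_m^{L_m+2})$ by $L_m\log D_m$, using $L_m \le c_2\sqrt{m}$ to get $\log D_m$ of order $\log m$ and $L_m \ge c_1\sqrt{m/\log m}$ for the exponent, which gives order $\sqrt{m\log m}$. Both then use $N_m < 3mD_m = O(m^{3/2}\sqrt{\log m})$ to get $(N_m\log N_m)^{1/3} = O(\sqrt{m\log m})$, and your version tracks the constants somewhat more carefully than the paper's.
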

\begin{proof}
We omit the subscript $m$ in the proof for readability.
We have, for $m$ large enough,
\[\log(mD^{L+2}) \geq L \log(D) \geq L \log(m/L)  \geq L \log(\sqrt{m}/c_2) \geq c' L \log(m) 
\]
where the first inequality drops the $\log(m)$ term and the second uses the definition of $D$ and drops a term $L \log(3e)$. Now $c' L \log(m)   \geq c'' \sqrt{m/\log(m)}\log(m)\geq c'' \sqrt{m \log(m)}$ for some constant $c''>0$.
 This shows $\displaystyle \frac{1}{mD^{L+2}} \leq \exp\left( - c (m\log(m))^{1/2}\right).$
Now we note that
\[N < 3mD < 3m (6em/L) \leq \frac{18e m^2}{L} \leq c_1 m^{3/2} (\log(m))^{1/2}\]
where we use the definition of $D$ in the second strict inequality, and the lower bound on $L$ for the last inequality. This shows, for $m$ large enough, that $\displaystyle(N \log(N))^{1/3} \leq c_4 \sqrt{m \log(m)}$ which concludes the proof. 
\end{proof}
\paragraph{Last step.} 
The polynomial $G$ constructed above has a root $\gamma_{N,H}>1$ with $\gamma_{N,H}-1 <  \frac{1}{H} \exp \left( - c (N \log(N))^{1/3}\right)$ for some $c>0$. Consider $P_{N,H}(x) = x^{N} G(x^{-1})$. This polynomial has integer coefficients, bounded in absolute values by $H$. Additionally, $\alpha_{N,H} \coloneqq \gamma^{-1}_{N,H}$ is a real root of $P_{N,H}$ with $\alpha_{N,H} \in (0,1)$, and we conclude that $1-\alpha_{N,H} < \frac{1-\alpha_{N,H}}{\alpha_{N,H}} = \gamma_{N,H}-1 < \frac{1}{H} \exp \left( - c' (N \log(N))^{1/3}\right)$ for some constant $c'>0$.
 This concludes the proof. 
\section{Proof for \Cref{sec:deter d sens}.}\label{app:deter d sens}

We use the following theorem.
\begin{theorem}[\cite{BEK1999}]\label{th:bek99}
Let $p(x)=\sum_{k=0}^{n}c_k x^k$ be a nonzero polynomial such that $\max_{k \in \{0,...,n\}} |c_k| \leq 1$ and $c_0 \neq 0$. Then, the multiplicity of $1$ as a root of $p(x)$ cannot exceed $c \sqrt{n (1-\log(|c_0|))}$, where $c>0$ is a universal constant.
\end{theorem}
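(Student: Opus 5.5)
The plan is to use duality: a high-order zero at $1$ means that the coefficient sequence $(c_k)$ is orthogonal to every low-degree polynomial evaluated at the integers. Suppose $p$ has a zero of order $m$ at $1$. Then $p^{(j)}(1)=0$ for $j<m$, which is equivalent to $\sum_{k=0}^n c_k k^j=0$ for $j=0,\dots,m-1$. Hence $\sum_{k=0}^n c_k S(k)=0$ for every real polynomial $S$ of degree at most $m-1$. Take such an $S$ with $S(0)=1$. Since $|c_k|\le 1$, this gives
\[
|c_0| \;=\; \Big|\sum_{k=1}^n c_k S(k)\Big| \;\le\; \sum_{k=1}^n |S(k)| .
\]
It therefore suffices to exhibit, whenever $m \ge c\sqrt{n(1-\log|c_0|)}$, a polynomial $S$ of degree $m-1$ with $S(0)=1$ and $\sum_{k=1}^n |S(k)|<|c_0|$. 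The contradiction then bounds the multiplicity.

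The first construction I would try is a rescaled Chebyshev polynomial, $S(x)=T_{m-1}(\ell(x))/T_{m-1}(\ell(0))$, where $\ell$ is the affine map sending $[1,n]$ onto $[-1,1]$. On $[1,n]$ we have $|S|\le 1/T_{m-1}(\ell(0))$. Since $\ell(0)=1+\tfrac{2}{n-1}$, we get $T_{m-1}(\ell(0))\ge \tfrac12\exp\big(c'(m-1)/\sqrt n\big)$. This yields $\sum_{k\ge1}|S(k)|\le 2n\,e^{-c'(m-1)/\sqrt n}$, which is below $|c_0|$ once $m\gtrsim \sqrt n\,(\log n+\log(1/|c_0|))$. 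This already gives the right order up to a factor $\sqrt{\log n}$-type loss, and it is the skeleton of the argument.

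The main obstacle is removing the spurious $\log n$, which comes from summing $n$ terms each bounded by the same sup norm. The sharper statement requires $S$ to be tiny only near $0$ in a weighted sense. My plan is to split the indices into a short initial block $1\le k\le K$ and the tail $k>K$. I would build $S$ as a Chebyshev factor adapted to $[1,K]$, times a factor making $S$ decay on $[K,n]$; alternatively, I would use a Chebyshev polynomial in the variable $\sqrt{x}$, in the spirit of the Borwein--Erd\'elyi--K\'os construction. The aim is to obtain $|S(k)|\le \exp(-c\, m\sqrt{k}/\sqrt{n}\cdot m/\sqrt n)$-type decay, so that the sum becomes geometric and is controlled by its first terms.

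If this combinatorial choice proves awkward, I would fall back on the complex-analytic route. The bound $|p(z)|\le (1-|z|)^{-1}$ on the unit disk, combined with the smallness $|p(z)|\le |z-1|^m\cdot(\text{growth})$ forced near $1$, feeds into a Hadamard/two-constants (harmonic measure) estimate. This bounds $|p(0)|=|c_0|$ above by $\exp(-c\,m^2/n)$, which rearranges exactly to $m\le c\sqrt{n\log(1/|c_0|)}$, together with the trivial $\sqrt n$ term. Getting the harmonic measure of a suitable arc of length about $m/n$ right is the delicate computation.
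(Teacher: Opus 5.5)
Your reduction to a dual certificate is correct. By linear-programming duality it is even lossless: the statement is equivalent to the existence of $S$ with $\deg S\le m-1$, $S(0)=1$ and $\sum_{k=1}^n|S(k)|<|c_0|$. So everything rests on constructing $S$, and that is where the proposal has a genuine gap.

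Write $L:=\log(1/|c_0|)$. Your Chebyshev computation gives $m\lesssim\sqrt n\,(\log n+L)$. This is not ``the right order up to a $\sqrt{\log n}$ loss'': it loses a full factor $\log n$, and, more seriously, it is linear rather than square-root in $L$.

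Your dyadic idea can cure the first defect. Take a product, over scales $2^j<n$, of normalized Chebyshev polynomials adapted to $[2^j,n]$; each is bounded by $1$ on $[0,n]$, and with degrees $\asymp\sqrt n\,2^{-j/2}(j+1+L)$ one gets $\sum_k|S(k)|<e^{-L}$ with $\deg S=O(\sqrt n\,(1+L))$. This settles the case $|c_0|=1$.

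No construction that certifies smallness through sup-norm bounds on intervals covering $[1,n]$ can cure the second defect. Chebyshev's extremal property gives $1=S(0)\le\|S\|_{[1,n]}\,T_d\bigl(1+\tfrac{2}{n-1}\bigr)\le\|S\|_{[1,n]}\,e^{2d/\sqrt{n-1}}$ with $d=\deg S$. Hence $d\asymp\sqrt{nL}$ forces $\|S\|_{[1,n]}\ge e^{-O(\sqrt L)}$, while $\max_{1\le k\le n}|S(k)|<e^{-L}$. So a working $S$ must be roughly $e^{L}$ times larger between the integer nodes than at them. This is the Coppersmith--Rivlin regime of polynomials of degree $\gg\sqrt n$ bounded at equispaced points, and it needs a genuinely discrete construction. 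The profile $\exp(-c\,m^2\sqrt k/n)$ you aim for can only hold at the nodes: on all of $[1,n]$ it would contradict the bound above once $m\gg\sqrt n$. Neither Chebyshev factors on blocks nor a Chebyshev polynomial in $\sqrt x$ produces such node-only smallness. The $\sqrt L$ dependence is not cosmetic: Lemma~\ref{lem:BEK99 at 0} applies the statement knowing only $|c_0|\ge 1/H$.

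The paper gives no proof of its own; it quotes \cite{BEK1999}, whose argument is your complex-analytic fallback. There the difficulty is not the harmonic measure, which is trivial: the arc $\{|z|=1,\ |z-1|\le a\}$ has harmonic measure $\asymp a$ at $0$. Combine this with the Taylor bound $|p(z)|\le(n+1)(an)^m/m!$ on $\{|z|\le 1,\ |z-1|\le a\}$, the crude bound $|p|\le n+1$ on the rest of the circle, and $a\asymp m/n$. This yields only $m\lesssim\sqrt{n(\log n+L)}$. The extra $\log n$ comes from the rest of the boundary.

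Removing it is the content of the key lemma of \cite{BEK1999}: there are absolute constants $c_1,c_2>0$ such that $|f(0)|^{c_1/a}\le e^{c_2/a}\|f\|_{[1-a,1]}$ for every $a\in(0,1]$ and every $f$ analytic in the unit disk with $|f(z)|\le(1-|z|)^{-1}$, a bound that $p$ satisfies. Take $f=p$ and $a=m/(e^2n)$, and use $\|p\|_{[1-a,1]}\le(n+1)(an)^m/m!\le(n+1)e^{-m}$. This gives $m^2\le e^2n\,(c_2+c_1L)+m\log(n+1)$, hence $m\le c\sqrt{n(1+L)}$. Without this lemma, or an explicit node-only dual polynomial, your proposal is a strategy rather than a proof.
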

\begin{proof}[Proof of Lemma~\ref{lem:BEK99 at 0}]
We have $F(\rho) = \frac{1}{(1+\rho)^{j}}Q\left(\frac{1}{1+\rho}\right)$ with $Q(y) \coloneqq \sum_{k=j}^{N} a_k y^{k-j}$. This shows that the multiplicity of $F$ at $0$ is equal to the multiplicity of the polynomial $Q$ at $1$. Note that $Q$ has degree bounded by $N-j$ and $Q(0)=a_j \neq 0$. Consider $\tilde{Q}=Q/H$, which has the same multiplicity at $1$ as $Q$. We can apply \Cref{th:bek99} to bound the multiplicity of $\tilde{Q}$ at $1$ by $A \sqrt{(N-j)(1+\log(H/|Q(0)|))}$, for some absolute constant $A>0$. 
We always have $j \geq 0$ and since $F$ has integer coefficients we have $Q(0) \in \Z$. Since $Q(0) \neq 0$, we have $|Q(0)| \geq 1$. Hence, we conclude that the multiplicity of $F$ at $0$ is bounded by $A \sqrt{N(1+\log(H))}$. 
\end{proof}
\subsection{Proof of \Cref{prop:comp vector height and degree}.}
We recall that $ C^{\sigma,\sigma'}_{i}(\rho) = r^{\sigma(i)}_{i} + (v^{\sigma'}_\rho)_{\sigma(i)} - (1+\rho)(v^{\sigma'}_{\rho})_i$. To prove Proposition~\ref{prop:comp vector height and degree} we first analyze the properties of value functions in deterministic MDP instances. It is easier to analyze every term as a function of the discount factor $\alpha$ and then replace $\alpha$ by $(1+\rho)^{-1}$ at the end of our analysis.
\paragraph{Value functions in deterministic instances.}
In a deterministic MDP, every stationary deterministic policy $\sigma$ induces a weighted digraph where the vertices are the states of the MDP. An edge $(i,j)$ is present in the graph if $\sigma$ chooses in state $i$ an action that leads to a transition in state $j$ (recall that transitions are deterministic). The weight of a feasible $(i,j)$ is $r^{a}_{i} \in \Z$ (we use the abbreviation $r_{ij}$ in the rest of this section). The trajectory induced by $\sigma$ when starting from $i$ has the following ``lasso-shape" $``\pi\gamma"$, where $\pi$ is an elementary path (i.e., a sequence of pairwise distinct states), and $\gamma$ is an elementary circuit (i.e., a sequence of states where only the first and last are equal). More precisely, the last state of $\pi$ coincides with the first and last states of $\gamma$, and this state is the only one that appears more than once among the states of $\pi$ and of $\gamma$. Given a path $\tilde{\pi}=(i_0,\dots,i_{k})$, we set 
\[ 
 \langle r,\tilde{\pi} \rangle_{\alpha} \coloneqq  \alpha r_{i_0i_1} + \alpha^2 r_{i_1i_2} + \dots +  \alpha^{k} r_{i_{k-1}i_k} \; .
\] 
Using this notation we get that the value function satisfies 
\[v^{\sigma}_{i}(\alpha) = \langle r,\pi \rangle_{\alpha} + \alpha^p \frac{1}{1-\alpha^q} \langle r,\gamma\rangle_{\alpha} \; ,\]
where $p$ is the length $|\pi|$ of $\pi$ (i.e., the number of edges it contains) and $q$ is the length $|\gamma|$ of $\gamma$.  

The digraph induced by a stationary policy $\sigma$ (i.e., the digraph obtained by keeping only the edges that can be attained by this policy) can be decomposed as follows. This digraph comprises disjoint elementary circuits $\gamma_1, \dots,\gamma_r$. Moreover, for each $h\in\{1, \dots ,r\}´$, we denote by $U_h$ the set of nodes from which there is a directed path to circuit $\gamma_h$. The sets $U_1,\dots,U_r$ constitute a partition of the node set. The digraph may be visualized as a collection of ``river networks'', each $U_h$ being a river network going to the ``sea'' $\gamma_h$. We arrive at the next lemma. 
\begin{lemma}\label{Lemma_eq_value_path_circuit}
\begin{enumerate}
\item[(a)] For any $i\in U_h$, we have
\begin{equation}\label{eq_value_path_circuit}
v_i^{\sigma}(\alpha ) = \p<r,\pi_h> + \frac{\alpha^{p_h}\p<r,\tilde{\gamma}_h>}{1-\alpha^{q_h}} \; , \end{equation}
where $\pi_h$ is an elementary path contained in $U_h$, $\tilde{\gamma}_h$ is a cyclic conjugate of the circuit $\gamma_h$, $p_h\coloneqq  |\pi_h|$ and $q_h\coloneqq |\gamma_h|$. Moreover, $p_h\leq n_h-1$ and $q_h + p_h\leq n_h$, where $n_h\coloneqq |U_h|$.
\item[(b)] The function $\left(1-\alpha^{q_h} \right) v^{\sigma}_{i}(\alpha)$ is of the form $\sum_{k=0}^{n_{h}} a_k \alpha^k$ where $|a_k| \leq 2 W$, $a_{0}=0$, $|a_{1}|\leq W$, and $|a_{q_{h}}|\leq W$.
\end{enumerate}
\end{lemma}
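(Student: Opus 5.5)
For part (a), I will start from the lasso decomposition of the trajectory described just before the lemma. For $i\in U_h$, the deterministic trajectory from $i$ under $\sigma$ follows an elementary path $\pi_h$ inside $U_h$ until it first hits a node $j$ of $\gamma_h$. It then goes around $\gamma_h$ forever, starting at $j$; let $\tilde\gamma_h$ be the cyclic conjugate of $\gamma_h$ based at $j$. Summing the discounted rewards of the periodic tail as a geometric series in $\alpha^{q_h}$, with the tail shifted by the factor $\alpha^{p_h}$, gives \eqref{eq_value_path_circuit} directly. For the length bounds, $\pi_h$ visits pairwise distinct nodes of $U_h$ and contains exactly one node of $\gamma_h$ (its endpoint). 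Hence its $p_h+1$ nodes include $p_h$ nodes outside $\gamma_h$, so $p_h+q_h\le n_h$. Since $q_h\ge 1$, we also get $p_h\le n_h-1$.

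For part (b), I multiply \eqref{eq_value_path_circuit} by $1-\alpha^{q_h}$ to get
\[
(1-\alpha^{q_h})v_i^\sigma(\alpha)=\p<r,\pi_h>-\alpha^{q_h}\p<r,\pi_h>+\alpha^{p_h}\p<r,\tilde\gamma_h>.
\]
I then read off the supports of the three terms.
\begin{enumerate}
\item $\p<r,\pi_h>$ has monomials $\alpha^1,\dots,\alpha^{p_h}$ with coefficients of absolute value at most $W$.
\item $\alpha^{q_h}\p<r,\pi_h>$ has monomials $\alpha^{q_h+1},\dots,\alpha^{q_h+p_h}$.
\item $\alpha^{p_h}\p<r,\tilde\gamma_h>$ has monomials $\alpha^{p_h+1},\dots,\alpha^{p_h+q_h}$.
\end{enumerate}
All degrees are at most $p_h+q_h\le n_h$, and there is no constant term, so $a_0=0$. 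The second and third blocks both lie in the range $\{\max(p_h,q_h)+1,\dots,p_h+q_h\}$ and overlap there. The first block, however, is disjoint from the third, since $p_h<p_h+1$. So each degree receives contributions from at most two terms: either from the first and second, or from the second and third. This gives $|a_k|\le 2W$.

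The delicate step is the special coefficients $a_1$ and $a_{q_h}$. The coefficient $a_1$ comes only from the first block, or from the third block when $p_h=0$; the second block starts at degree $q_h+1\ge 2$. In either case $|a_1|\le W$. For $a_{q_h}$, the second block starts strictly above $q_h$, so only the first and third blocks can contribute. The first contributes only if $q_h\le p_h$, and the third only if $q_h\ge p_h+1$. These conditions are mutually exclusive, so exactly one block contributes and $|a_{q_h}|\le W$.

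The only care needed is the bookkeeping of these index ranges, including the edge cases $p_h=0$ and $q_h=1$. I would check these explicitly from the three supports listed above.
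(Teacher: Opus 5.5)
Your proposal is correct and follows the paper's proof. Part (a) uses the lasso decomposition with a geometric series. Part (b) expands $(1-\alpha^{q_h})v_i^{\sigma}(\alpha)$ into three blocks with exponent ranges $\{1,\dots,p_h\}$, $\{q_h+1,\dots,q_h+p_h\}$ and $\{p_h+1,\dots,p_h+q_h\}$, and uses that the first and third are disjoint. Your case check for $a_1$ and $a_{q_h}$ just spells out what the paper leaves implicit. One small wording slip: the second and third blocks do not both lie in $\{\max(p_h,q_h)+1,\dots,p_h+q_h\}$; that set is their intersection. This does not affect the argument.
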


\begin{proof}
\begin{enumerate}
\item[(a)] The equality~\eqref{eq_value_path_circuit} follows from the discussion above. Since the last state of $\pi_h$ coincides with the first and last states of $\tilde{\gamma}_h$, and this state is the only one that appears more than once in $\pi_h$ and $\tilde{\gamma}_h$, we have $p_h\leq n_h-1$ and $q_h + p_h\leq n_h$.
\item[(b)] From the first statement we get that $\displaystyle \left( 1-\alpha^{q_h} \right) v^{\sigma}_{i}(\alpha)= \left(1-\alpha^{q_h}\right)\langle r,\pi_h \rangle_{\alpha} + \alpha^{p_h} \langle r,\tilde{\gamma}_h \rangle_{\alpha}.$
If $\pi_h=(i_0,\dots,i_{p_h})$ and $\tilde{\gamma}_h=(j_0,\dots,j_{q_h})$, the right-hand side in the above equality is equal to 
\[ \sum_{k = 1}^{p_h} \alpha^{k} r_{i_{k-1}i_k} - \sum_{k = 1}^{p_h} \alpha^{q_h+k}r_{i_{k-1}i_k} + \sum_{k = 1}^{q_h} \alpha^{p_h + k}r_{j_{k-1}j_k} \; .
\]
Then, since the rewards are integers, $\left(1-\alpha^{q_h} \right) v^{\sigma}_{i}(\alpha)$ has integer coefficients as a polynomial in $\alpha$. Besides, this polynomial has the indicated (maximal possible) degree because $p_h+q_h \leq n_h$ by item (a). Finally, the claim on the absolute value of the coefficients follows from the fact that in the first term above the exponents of $\alpha$ range in $\{1,\ldots ,p_h \}$, while in the third term they range in $\{p_h+1,\ldots , p_h+q_h\}$. 
\end{enumerate}
\end{proof}
\paragraph{Analysis of the comparison vector.}
We can now reformulate the comparison vectors. We continue to work with $\alpha \in (0,1)$ instead of $\rho>0$ for now. For $\alpha = 1/(1+\rho)$, we have $\displaystyle C_{i}^{\sigma,\sigma'}(\rho)  = r^{\sigma(i)}_{i} + (v^{\sigma'}_{\rho})_{\sigma(i)} - (1+\rho)(v^{\sigma'}_{\rho})_i = r^{\sigma(i)}_{i} + v^{\sigma'}_{\sigma(i)}(\alpha) - \alpha^{-1}v^{\sigma'}_{i}(\alpha)
$.

Suppose that $i \in U_h$ and $\sigma(i) \in U_\ell$. Then, by \Cref{Lemma_eq_value_path_circuit}, we can assume that $(1-\alpha^{q_h})v^{\sigma'}_{i}(\alpha)$ is of the form $\sum_{k=0}^{n_{h}} a_k \alpha^k$ and $(1-\alpha^{q_\ell})v^{\sigma'}_{\sigma(i)}(\alpha)$ is of the form $\sum_{k=0}^{n_{\ell}} a'_k \alpha^k$. 
Let us define $\alpha \mapsto \tilde{F}(\alpha)$ and $\alpha \mapsto \tilde{G}(\alpha)$ as
\[\tilde{G}(\alpha) \coloneqq \prod_{q \in \{q_h,q_{\ell} \}} (1-\alpha^q) \; \text{ and } \; \tilde{F}(\alpha)  \coloneqq \tilde{G}(\alpha) \left( r^{\sigma(i)}_{i} + v^{\sigma'}_{\sigma(i)}(\alpha) - \alpha^{-1}v^{\sigma'}_{i}(\alpha)  \right)\; .
\]
We next analyze $\tilde{F}$ and $\tilde{G}$.
\begin{itemize}
\item {\em Analysis of $\tilde{G}$.} We consider two cases. If $q_h = q_{\ell}$ we have $\tilde{G}(\alpha) = 1-\alpha^{q_h}$ and $\tilde{G}$ has height $1$ and degree $q_h$, otherwise $\tilde{G}(\alpha) = 1 - \alpha^{q_h} - \alpha^{q_{\ell}} + \alpha^{q_h + q_{\ell}}$, which has height $1$ and degree $q_h + q_{\ell}$. For $\delta \coloneqq 1_{\{q_{h} \neq q_{\ell}\}}$ we get that $\deg(\tilde{G}) = q_{h} + \delta q_{\ell}$, from which we get $\deg(\tilde{G}) \leq n_{h} + \delta n_{\ell} \leq n$. 
\item {\em Analysis of $\tilde{F}$.}
The coefficient of $\alpha^k$ in $\tilde{F}(\alpha)$ is
\[b_k = r^{\sigma(i)}_{i}\left(1_{\{k=0\}} - 1_{\{k=q_h\}} - 1_{\{k=q_{\ell}\}} + 1_{\{k=q_{h} + q_{\ell}\}} \right) + a_{k} - a_{k-q_{h}} - a'_{k+1} + a'_{k+1-q_{\ell}} \; .\]
For $k \notin \{0,q_h,q_{\ell},q_{h} + q_{\ell}\}$, we have that $|b_k|$ is bounded by a sum of four terms, each bounded by $2W$ by \Cref{Lemma_eq_value_path_circuit}, and so $|b_k| \leq 8W$.
Assume that $q_{h} \neq q_{\ell}$. By \Cref{Lemma_eq_value_path_circuit} we have:
\begin{itemize}
    \item[-] The exponent of order $0$ has a coefficient of at most $2W$.
    \item[-] The exponent of order $q_{h}$ has coefficient at most $2W + 4W + W = 7W$.
    \item[-] The exponent of order $q_{\ell}$ has coefficient at most $3W + 4W + W = 8W$.
    \item[-] The exponent of order $q_{h}+q_{\ell}$ has coefficient at most $3W + 4W + W = 8W$.
\end{itemize}
The same analysis for the case $q_{h}=q_{\ell}$ gives $H(\tilde{F})\leq 4W$ in this case. So we always have $H(\tilde{F})\leq 8W$. For the degrees, we note that $\tilde{G}(\alpha) v^{\sigma'}_{\sigma(i)}(\alpha)$ has degree at most $n_{\ell} + \delta q_{h} \leq n$, while $\tilde{G}(\alpha) \alpha^{-1}v^{\sigma'}_{i}(\alpha)$ has degree at most $n_{h} -1 + \delta q_{\ell} \leq n$. Hence $\deg(\tilde{F}) \leq n$.
\end{itemize}
Now we define $F(\rho) \coloneqq  \tilde{F}((1+\rho)^{-1})$ and $G(\rho) \coloneqq  \tilde{G}((1+\rho)^{-1})$. The only remaining thing to prove is that the order of $G$ at $0$ is at most two. This follows from $1-\alpha^q = 1 - (1+\rho)^{-q} = q \rho + O(\rho^2)$ and therefore the order of $G$ is at most $|\{q_h,q_\ell\}| \leq 2$. 
This concludes the proof.

We show that the bound $8W$ is tight using the following deterministic MDP instance. There are seven states $\{s_1,...,s_7\}$. From state $s_1$ we can pick action $a_1$ or $a_2$. 
If we pick action $a_1$, we transition to $s_2$ and obtain a reward $-W$, then to $s_3$ with reward $+W$, then to $s_4$ with reward $+W$. From $s_4$ we transition deterministically to $s_3$ with reward $-W$.
If we pick action $a_2$ in $s_1$ we transition to state $s_5$ and obtain reward $+W$, then to state $s_6$ with reward $-W$, then to state $s_7$ with reward $-W$. The state $s_7$ is absorbing with reward $+W$. Take $\sigma'$ choosing $a_1$ in $s_1$, and $\sigma$ choosing $a_2$ in $s_1$.
In this instance we have $q_{s_5}=1,q_{s_1}=2$, $G(\alpha) = (1-\alpha^2)(1-\alpha)$, and $(1-\alpha^{q_{s_1}})v^{\sigma'}_{s_1}(\alpha) = W (-\alpha + \alpha^2 + 2 \alpha^3 - 2 \alpha^4)$ and $(1-\alpha^{q_{s_5}})v^{\sigma'}_{s_5}(\alpha) = W (-\alpha + 2 \alpha^3)$. Hence 
\[\tilde{F}(\alpha) =  W (2 - 4 \alpha - 2 \alpha^2 + 8 \alpha^3 - 2 \alpha^4 - 2 \alpha^5)\]
which shows that the bound $8W$ from Proposition~\ref{prop:comp vector height and degree} is tight.

\end{document}